\documentclass[aps,pra,twocolumn,superscriptaddress,10pt]{revtex4-2}

\usepackage{graphicx,subfigure}
\usepackage{amsmath,amssymb,bm,amsthm}
\usepackage{mathtools}
\usepackage{multirow}
\usepackage{textcomp}
\usepackage{float}
\usepackage{xcolor}
\usepackage[normalem]{ulem}
\usepackage{dsfont}
\usepackage{url}
\usepackage{physics}
\usepackage{comment}
\usepackage{mathrsfs}
\usepackage{makecell}
\usepackage{array}
\usepackage{lineno}

\usepackage{tikz}
\usetikzlibrary{shapes, arrows, positioning}

\usepackage{hyperref}

\hypersetup{
    colorlinks,
    citecolor=black,
    filecolor=black,
    linkcolor=black,
    urlcolor=black
}

\newtheorem{theorem}{Theorem}
\newtheorem{lemma}{Lemma}
\usepackage{cleveref}

\begin{document}

\title{Equilibrium propagation in non-equilibrium quantum systems}

\author{Aditya Gandotra}
\email{adityag@uchicago.edu}
\author{David D. Noachtar}
\author{Aashish A. Clerk}
\email{aaclerk@uchicago.edu}
\affiliation{Chicago Quantum Institute, Pritzker School of Molecular Engineering, University of Chicago, Chicago, IL 60637}

\date{\today}

\begin{abstract}
Equilibrium propagation (EP) is a powerful training algorithm for self-learning physical systems that is widely used for a variety of machine learning inspired tasks. 
It relies on the existence of an energy function that depends on tunable training parameters, and can be extended to apply to many-body quantum systems in equilibrium. 
A key challenge however is the need to repeatedly relax to the system ground state after each parameter update. 
We show here a route for circumventing this problem by extending EP to many-body driven-dissipative quantum systems, where the dynamics naturally relaxes to a parameter-dependent non-equilibrium steady state. 
Despite the lack of a free energy, we find that for a class of open systems, a quantum detailed balance symmetry still allows for an efficient EP strategy.
This relies crucially on a new class of  non-equilibrium static susceptibility relations that we derive in detail.  
We analyze examples of our new non-equilibrium training paradigm for two experimentally-relevant models: a driven-dissipative transverse field Ising model, and a parametrically-driven bosonic lattice.
\end{abstract}

\maketitle

\section{Introduction}

Physical learning refers to the process by which a physical system learns some desired behavior by modifying internal degrees of freedom in response to some external stimuli~\cite{stern2023learning}.
These learning parameters are trained by a {\it learning rule}, which describes the dynamical process by which they are updated.
Training often proceeds using gradient descent, where the physical dynamics use this learning rule to minimize some cost function. 
A central question in this emerging field is how to efficiently extract gradients from state preparations and measurements alone~\cite{momeni2025training, lopez2023self}.
Equilibrium propagation (EP) is a powerful training algorithm that allows one to extract gradients in this way for systems in thermal equilibrium~\cite{scellier2017equilibrium}.
This method has been proposed and successfully implemented in a variety of classical systems~\cite{Scellier2021DeepLearningTheoryPhysics, Kendall2020EndToEndAnalogEqProp,Martin2021EqSpike,Ernoult2020ContinualWeightUpdates, Scellier2022AgnosticPhysicsDrivenDL, Falk2025TemporalContrastiveLearning,Wang2024CoupledPhaseOscillatorsEqProp, laydevant2024training, Stern2021SupervisedLearningPhysicalNetworks,Stern2024PhysicalEffectsOfLearning,OConnor2019SpikingEqProp,Dillavou2022DecentralizedPhysicsDrivenLearning}, including non-linear resistor networks~\cite{Dillavou2023MachineLearningWithoutAProcessor} and elastic networks~\cite{Altman2024CoupledLearningElasticNetworks}.

It is interesting to ask if EP can be used for physical learning with quantum systems.
Recent work shows this can be done for equilibrium quantum systems described by a Hamiltonian~\cite{wanjura2025quantum, massar2025equilibrium, scellier2024quantum}. 
Despite its many advantages, a potential drawback of EP (especially relevant for quantum systems) is its requirement to constantly re-equilibrate the system after each nudge of model parameters.
In the context of equilibrium quantum systems, EP can then only be performed through repeated preparation of ground or thermal states of many-body systems, for potentially thousands of training steps.
Preparing ground states of many-body Hamiltonians is QMA-hard in the worst case~\cite{kitaev2002classical, kempe2006complexity, oliveira2005complexity}, and while recent quantum Gibbs samplers (e.g.~\cite{chen2025efficient}) provably prepare thermal states efficiently in certain regimes (e.g.~at high temperatures, or for 1D systems)~\cite{bergamaschi2026fast, rouze2026efficient,rouze2026optimal, tong2025fast}, efficient preparation for generic Hamiltonians at low temperature remains out of reach.
This would seem to pose a serious challenge for large-scale quantum EP-based learning.

\begin{figure}
    \centering
    \includegraphics[width=\linewidth]{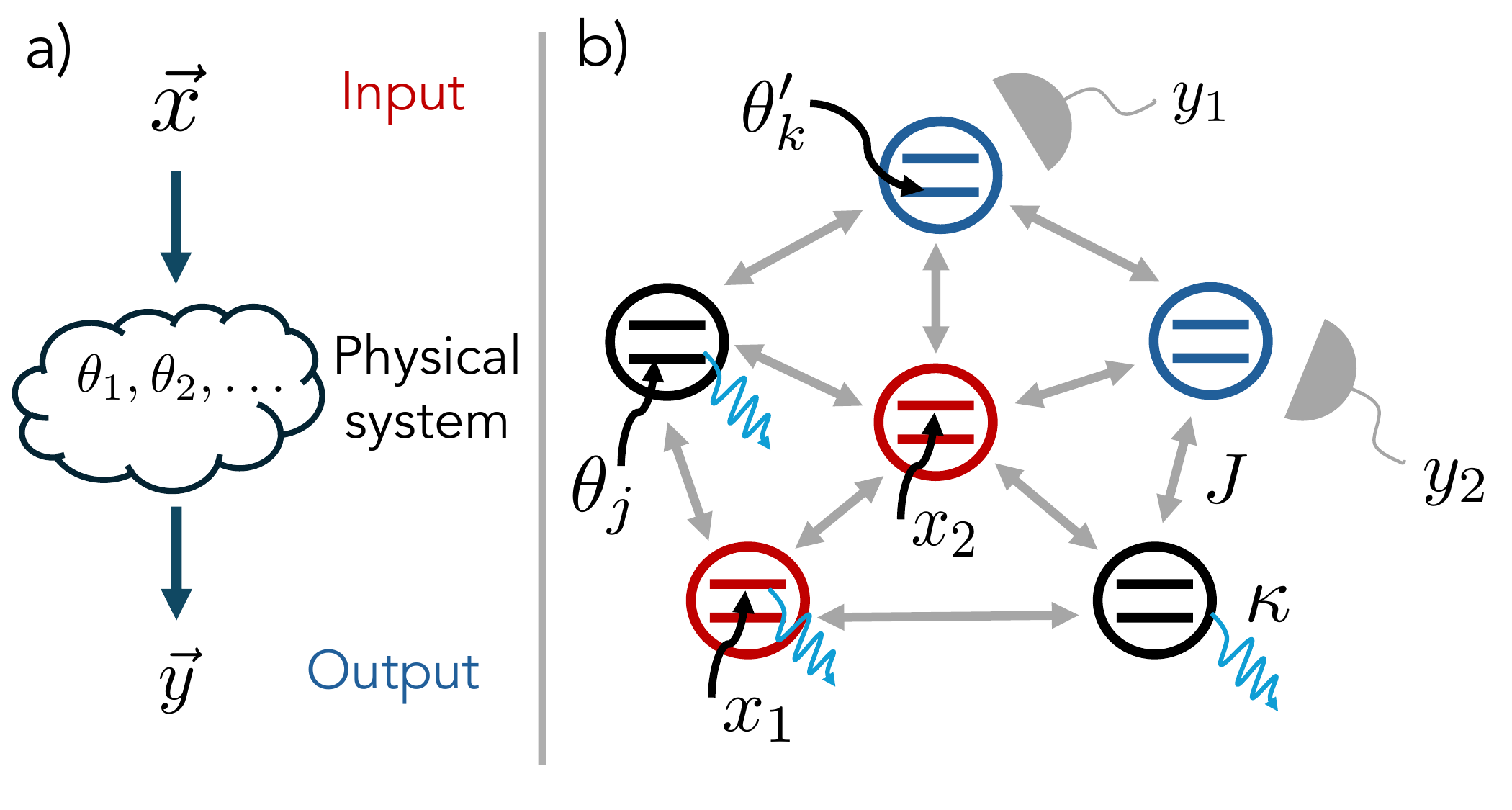}
    \caption{
    (a) A generic physical learning setup, where system parameters $\theta_j$ are trained so that the system reproduces a target input-output function from a set of parameters $\vec x$ to measured quantities $\vec y$.
    (b) Implementation in a driven-dissipative system of interacting qubits.  Internal parameters $\theta_j$ and inputs $x_k$ are drive amplitudes; the qubits interact with strength $J$ and are subject to decay at rate $\kappa$.  }
    \label{fig:schematic}
\end{figure}

In this work, we explore an alternative approach that addresses this crucial state-preparation issue: we extend quantum EP to a setting where equilibration dynamics is not assumed, but is instead built-in to the description of the system.
We study driven-dissipative quantum systems which relax to a non-equilibrium steady state that depends on system parameters, circumventing the problem of assuming an external means for repeated ground state cooling. 
The goal is to then use these parameter-dependent steady states for learning. 
Naively, it would seem that EP is now impossible:
our new setting produces parameter-dependent non-thermal states, whereas EP requires the existence of an equilibrium free energy.  
The question then becomes whether it is ever possible to salvage EP learning in truly non-equilibrium quantum settings.  

Surprisingly, the answer is yes. 
The key insight is that full thermal equilibrium is not necessary for `equilibrium' propagation. 
Instead, EP just requires a specific symmetry of static susceptibilities in the steady state.
While thermal equilibrium guarantees such symmetries, they can also occur robustly in more general non-equilibrium systems.
A key contribution of our work is to derive susceptibility symmetries and demonstrate EP-based training in driven-dissipative quantum systems which satisfy a version of quantum detailed balance known as \textit{hidden time-reversal symmetry} (hTRS)~\cite{roberts2021hidden}.
hTRS enables us to promote trivial gauge covariances in dissipative dynamics to a much more powerful set of relationships, which in turn guarantees susceptibility symmetries (see Fig.~\ref{fig:thms} for a sketch of the key idea).
Consider a driven-dissipative many-body system determined by a large set of drive parameters.
In many cases, a simple unitary transformation relates the dynamics and steady state corresponding to one choice of parameters to those corresponding to another choice of parameters (e.g. with different drive phases). 
In Fig.~\ref{fig:thms}, models related in this way all lie on a circle in an abstract parameter space. 
The central physical result of our work, contained in Theorems~\ref{theorem:tfim} and~\ref{theorem:bosonic} below, is that hTRS allows us to non-trivially extend this gauge covariance to relate models that differ both in phases {\it and magnitude} of drive amplitudes (see orange arrow in  Fig.~\ref{fig:thms}).
As we show, this non-trivial mapping between different steady states directly allows us to prove the crucial susceptibility symmetries that enable EP.

\begin{figure}
    \centering
    \includegraphics[width=0.75\linewidth]{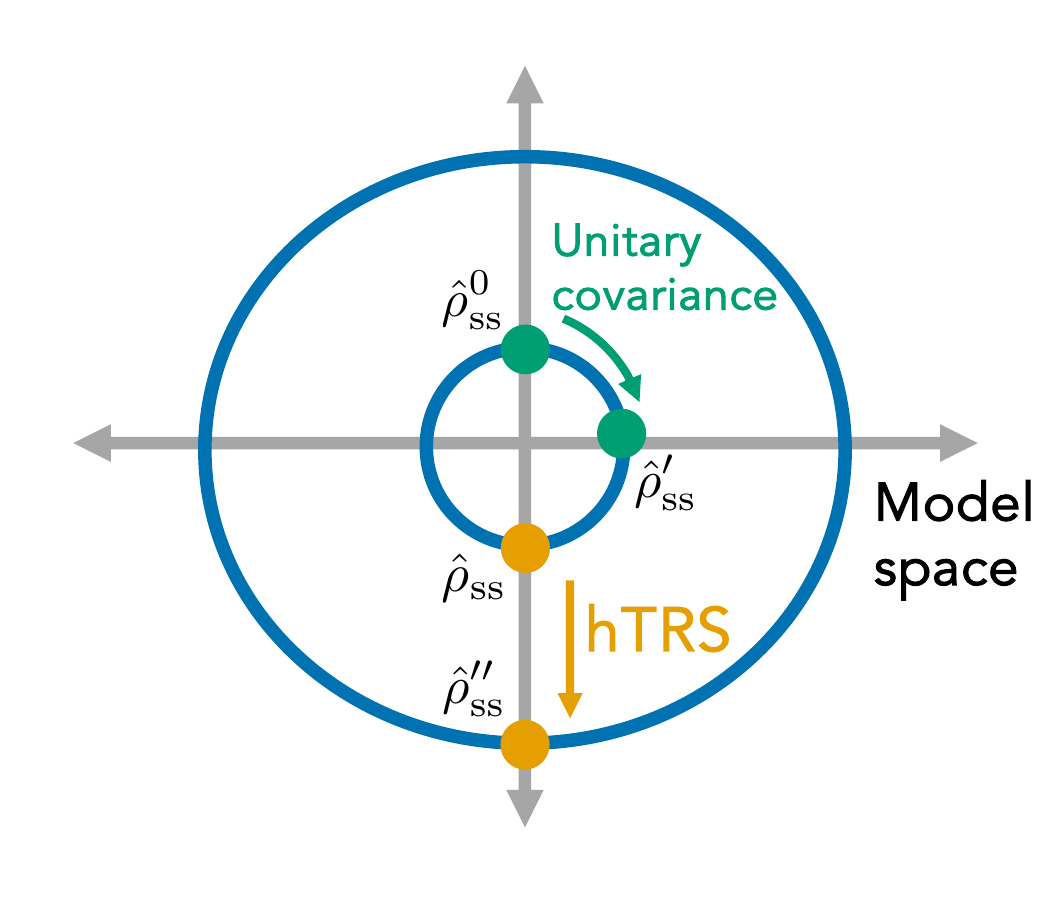}
    \caption{Schematic of the space of drive parameters for a driven-dissipative quantum system. A circle of a fixed radius schematically represents the set of models whose steady states are related by a unitary transformation, i.e.~\(\hat{\rho}'_{\rm ss} = \hat U \hat \rho^0_{\rm ss} \hat U^{\dag}\) for some unitary \(\hat U\). When the system has hTRS, this gauge covariance is effectively `complexified': we can write down an explicit relation between steady states that lie on circles of different radii, i.e.~for drives {\it anywhere} in the parameter space of the model. This non-trivial extension of the unitary covariance is the main content of Theorems~\ref{theorem:tfim} and~\ref{theorem:bosonic}.}
    \label{fig:thms}
\end{figure}

We demonstrate these new non-equilibrium susceptibility relations and EP based learning explicitly for two many-body dissipative systems that are experimentally relevant: a dissipative transverse-field Ising model (DTFIM)~\cite{roberts2023exact} (Theorem~\ref{theorem:tfim}), and an interacting bosonic lattice subject to two-photon driving~\cite{roberts2023competition} (Theorem~\ref{theorem:bosonic}).
These examples are meant to be illustrative but not exhaustive: there are many other many-body open systems that also have hTRS and could be interesting for EP learning (e.g.~dissipative spin chains~\cite{yao2025hidden,lingenfelter2024exact}, and interacting fermionic models~\cite{lingenfelter2026exact}).  

The rest of the paper is organized as follows. Given our goal of bridging topics in physical learning theory with those from many-body open quantum systems, we discuss background topics from these fields relevant to our work in Sections~\ref{section:closed-ep} to~\ref{section:htrs} (including EP and hidden time-reversal in open quantum systems). 
In Section~\ref{section:dtfim}, we turn to the main setting of our work. 
We detail the EP training procedure in the disordered, dissipative transverse-field Ising model (DTFIM), and demonstrate a simple example of training.
Then, in Section~\ref{section:bosonic-lattice}, we do a similar analysis on an interacting bosonic lattice model, presenting the susceptibility symmetry and a simple training example. 

We note that in classical systems, there has been some work to extend EP to general dynamical systems.
These studies generally pursue one of two directions, each fundamentally distinct from our work.
One strategy involves modifying the dynamics in each training step to make them more `equilibrium-like' such that one can express an effective free energy \cite{scellier2018generalization, scurria2026equilibrium}.
In contrast, we use hTRS to define a generating function for a {\it subset} of system observables we use for EP, without ever requiring global modifications of system dynamics. 
The other strategy, proposed in the context of driven-dissipative optical systems~\cite{cin2025training}, develops an EP-style two phase training approach reliant on a susceptibility symmetry being approximately correct for slightly non-linear systems.
In our work, the susceptibility symmetries are exact, and there is no linearity assumption: both models we study in this work are highly non-linear systems.

Finally, before proceeding, we wish to clarify the focus and scope of our work. 
When studying physical learning, there are two essential yet distinct questions that one can ask: that of {\it trainability} (i.e.~how efficiently does a training algorithm perform gradient descent?) and that of {\it expressivity} (i.e.~how good is a specific system at learning complex data?).
Our focus here is on a new training algorithm based on EP in non-equilibrium quantum systems; as such, our focus will be on the first question of trainability.  The question of expressivity of generic non-equilibrium systems with hTRS is of course interesting and important, but we leave a more thorough study of this question for future work.

\section{Background}
\subsection{Physical learning using  quantum ground states}
\label{section:closed-ep}

To set the context for our work, we first review the basics of how closed quantum systems can be used for learning, and how equilibrium propagation can be implemented in this context~\cite{wanjura2025quantum}.
The overarching goal is to use an input-output relation $\vec{y} = f(\vec{x},\vec{\theta})$ defined by a physical system with many internal parameters $\vec{\theta}$ to learn a given target function $\vec{y} = f_{\rm target}(\vec{x})$ using a set of training data.
Our system will be an $N$-site many-body quantum system where a subset of lattice sites will be used to encode input data $x_j$ via local Hermitian operators \(\hat{X}_j\).  The remaining lattice sites will be used as trainable parameters, and will each have an associated Hermitian operator \(\hat{O}_j\).  
The Hamiltonian is:
\begin{equation}
\label{eq:closed-hamiltonian}
    \hat{H}\left( \vec{x}, \vec{\theta}\right) = \sum_{j = 1}^{M_{\text{train}}} \theta_j \hat{O}_j +  \sum_{j = M_{\text{train}}+1}^{N} x_j \hat{X}_j + \hat{H}_{\text{int}}
\end{equation}
where \(x_j\) and \(\theta_j\) are tuneable local fields, and \(\hat{H}_{\text{int}}\) describes fixed interaction terms.  

As we now detail, expectation values in the parameter-dependent ground state of the system \(\ket{\Psi(\vec{x}, \vec{\theta})}\) will define the relevant input-output relation.
We partition the non-input sites \([1,\dots, M_{\text{train}}]\)  into a set of sites \([1,\dots, M_{\text{out}}]\) and \( [M_{\text{out}}+1,\dots, M_{\text{train}}]\) for some integer \(M_{\text{out}} \leq M_{\text{train}}\).
We denote the control fields and observables on the first set (the `output' set) of sites as
\begin{equation}
    \theta'_j \equiv \theta_j \quad \hat{O}'_j \equiv \hat{O}_j \quad  j \in [1, M_{\text{out}}].
\end{equation}
Similarly, we denote the control fields on the second set of sites (the `internal' sites) as
\begin{equation}
     \theta''_j \equiv \theta_j \quad \hat{O}''_j \equiv \hat{O}_j \quad  j \in [M_{\text{out}}+1, M_{\text{train}}].
\end{equation}
We use \(\vec{\theta}\) (\(\vec{O}\)) to refer to the vector of {\it all} control fields (observables) on sites \([1, M_{\text{train}}]\), \(\vec{\theta'}\) (\(\vec{O}'\)) to denote control fields (observables) for only the output sites \([1, M_{\text{out}}]\), and \(\vec{\theta''}\) (\(\vec{O}''\)) for control fields (observables) for only the internal sites \([M_{\text{out}}+1, M_{\text{train}}]\).
The output of the system can then be defined in terms of the ground state expectation values
\begin{equation}
    \vec{y} = \langle \Psi | \vec{O}' | \Psi \rangle.
\end{equation}
These outputs are a function of the parameter-dependent ground state, hence defining our trainable input-output relation  \(\vec{y} = f(\vec{x}, \vec{\theta})\).

We next need to specify how training data \((\vec{x}_{\text{train}}, \vec{y}_{\text{train}})\) generated by sampling from the target function will be used to train the internal parameters $\vec{\theta}$.
This can be achieved by defining an appropriate cost function \(C(\vec{y}, \vec{y}_{\text{train}})\) that is then minimized using gradient descent.
We will use throughout this work the mean squared error (MSE), \(C = 1/M_{\rm out}\sum_l\|y_l - y_l^{\text{train}} \|^2\), where \(M_{\rm out}\) is the size of the output vector \(\vec y\). 
Training occurs iteratively, where one first computes the gradient \(\partial C/\partial \theta_j\) for the current choice of parameters; as we discuss, there are different ways to do this.
This gradient is then used to update the internal training parameters in each training step via
\begin{equation}
\label{eq:training-update}
    \theta_j \to \theta_j - \eta\frac{\partial C}{\partial \theta_j}
\end{equation}
where \(\eta\) is a tunable hyperparameter referred to as the learning rate. 

The gradient of the cost function can be written as
\begin{equation}
\label{eq:cost-func-dotproduct}
    \frac{\partial C}{\partial \theta_j} = \sum_k \frac{\partial C}{\partial y_k}\frac{\partial y_k}{\partial \theta_j} = \vec{\varepsilon}\cdot \frac{\partial\langle\vec{O}'\rangle}{\partial \theta_j}
\end{equation}
where we define
\begin{equation}
\label{eq:error-vec}
    \vec{\varepsilon} \equiv \underbrace{\left(\frac{\partial C}{\partial y_1},\dots,\frac{\partial C}{\partial y_{M_{\text{out}}}}\right)}_{M_{\text{out}}\text{ output sites}}.
\end{equation}
This is an \textit{error vector} that encodes all the information about the cost function, and is defined only on the output sites.
The question now becomes how to efficiently compute the needed gradients of observables \( \vec{y} = \langle \vec{O}' \rangle\).

\subsubsection{Parameter-shift method}
\label{section:param-shift}
The most direct approach to computing the needed gradients is known as the 
\textit{parameter-shift} method.
It involves using a finite difference approximation for the derivatives in Eq.~(\ref{eq:cost-func-dotproduct}).
To this end, we take measurements of the output observables \(\vec{y}\) in two ground states.
First, we prepare a `free' state, i.e.~the ground state corresponding to our current choice of system parameters:
\begin{equation}
    \ket{\Psi_0} \equiv \ket{\Psi(\vec{x},\vec{\theta})}.
\end{equation}
The second state is the `nudged' ground state, 
i.e.~the ground state corresponding to a new set of system parameters where just one of the training parameters \(\theta_j\) has been changed from the ``current" parameter values by a small amount \(\beta\).  We can write the new set of parameters as \(\vec{\theta} \to \vec{\theta} + \beta \vec{n}_j\) where \(\vec{n}_j\) is a unit vector defined as 
\begin{equation}
    \label{eq:unit-vector}
    \vec{n}_j \equiv (0,\dots\underbrace{1}_{j\text{th site}},\dots,0).
\end{equation}
The nudged ground state is then defined as
\begin{equation}
    \ket{\Psi_j} \equiv \ket{\Psi(\vec{x}, \vec{\theta} + \beta\vec{n}_j)}.
\end{equation}
We can then approximate the gradient of the cost function as
\begin{equation}
\label{eq:grad-param-shift}
    \frac{\partial C}{\partial \theta_j} \approx  \vec{\varepsilon} \cdot \frac{\langle \Psi_j|\vec{O}'|\Psi_j\rangle -  \langle \Psi_0|\vec{O}'|\Psi_0\rangle}{\beta}.
\end{equation}

Each of the needed gradients is computed using a three-step procedure:
\begin{itemize}
    \item Measure output observables in present state to obtain \(\langle \Psi_0|\vec{O}'|\Psi_0\rangle\).
    \item Perturb a single training parameter \(\theta_j \to \theta_j + \beta\).
    \item Prepare the system in a new ground state \(\ket{\Psi_j}\), and then again measure the output observables to obtain \(\langle \Psi_j|\vec{O}'|\Psi_j\rangle\).
\end{itemize}
The gradient is then computed as in Eq.~(\ref{eq:grad-param-shift}). 
It is important to note that we must repeat this procedure to get the gradient for each of the training parameters: a new nudged ground state \(\ket{\Psi_j}\) must be prepared for each \(\theta_j\). 
Therefore, for a single gradient update, we need to prepare \(M_{\text{train}}+1\) independent ground states (one free state and a nudged state for each training parameter), and in each of them perform \(M_{\text{out}}\) measurements.  
This results in a total of \(M_{\text{out}} (M_{\text{train}}+1)\) measurements per training update.
One of course needs to repeat each of these measurements several times to faithfully estimate the expectation values, as is standard in any quantum experiment.
In many cases, the limiting aspect of this procedure will be the large number of state preparations required of different ground states, across potentially tens of thousands of training steps.
Preparing many-body systems in low-temperature states is generically difficult, both from a practical experimental perspective, but also on fundamental grounds.
All these reasons motivate the search for better methods for extracting gradients.

\subsubsection{Equilibrium propagation}
\label{section:closed-ep-2}
Equilibrium propagation (EP)~\cite{scellier2017equilibrium,Scellier2021DeepLearningTheoryPhysics, Kendall2020EndToEndAnalogEqProp,Martin2021EqSpike,Ernoult2020ContinualWeightUpdates, Scellier2022AgnosticPhysicsDrivenDL, Falk2025TemporalContrastiveLearning,Wang2024CoupledPhaseOscillatorsEqProp, laydevant2024training, Stern2021SupervisedLearningPhysicalNetworks,Stern2024PhysicalEffectsOfLearning,OConnor2019SpikingEqProp,Dillavou2022DecentralizedPhysicsDrivenLearning} is an alternative to the parameter shift method that offers a remarkable advantage: by exploiting the fact that the system state is governed by a parameterized energy (or free energy), it extracts gradients with a dramatically smaller number of independent state preparations.   

In the first phase of the EP training procedure (the ``free" phase) we prepare the ground state \(\ket{\Psi_0}\) corresponding to the current set of system parameters, and then measure {\it all} \(M_{\text{train}}\) observables $\hat{O}_j$ in this state (i.e.~both on the internal sites and output sites) obtaining \(\langle \Psi_0|\vec{O}|\Psi_0\rangle\).
This is in contrast to the parameter-shift method, where we must only measure the output observables \(\hat{O}_j'\) in both phases.
In the second ``nudged" phase, we prepare a {\it single} new perturbed ground state. This new ground state corresponds to {\it simultaneously nudging all the local fields on the output sites}, with a nudge that is proportional to the size of the error vector:  \(\vec{\theta'} \to \vec{\theta'} + \beta \vec{\varepsilon}\). This yields the new ground state
\begin{equation}
\label{eq:ep-nudged-close}
    \ket{\Psi_0'} \equiv \ket{\Psi(\vec{x}, \vec{\theta'} + \beta \vec{\varepsilon}, \vec{\theta''})}.
\end{equation} 
We once again measure {\it all} observables $\hat{O}_j$ in this new state, obtaining \(\langle \Psi'_0|\vec{O}|\Psi'_0\rangle\).
The above measurement results are enough to calculate all the needed gradients via (see App.~\ref{section:EP-proof} for a quick proof):
\begin{equation}
\label{eq:grad-ep-closed}
     \frac{\partial C}{\partial \theta_j} \approx   \frac{\langle \Psi'_0|\hat{O}_j|\Psi'_0\rangle -  \langle \Psi_0|\hat{O}_j|\Psi_0\rangle}{\beta}.
\end{equation}
The form of Eq.~\eqref{eq:grad-ep-closed} directly shows the advantage of EP over the parameter shift method: 
all training parameters are updated simultaneously (i.e.~using a single nudged state $|\Psi'_0\rangle$), and one does not require a new nudged state for each training parameter. Hence, in EP, only two states are needed per gradient update (\(\ket{\Psi_0}, \ket{\Psi_0'}\)), independent of the number of parameters being trained.
Further, we perform only \(2M_{\text{train}}\) total measurements per training update.
In this way, EP scales more favorably than the parameter shift method in terms of total state preparations, and when \(M_{\rm out} \geq 2\) also in terms of total measurements.

EP and Eq.~\eqref{eq:grad-ep-closed} crucially rely on the fact that we are always using states that minimize an energy function (or free energy). Specifically, EP requires that parameter-dependent expectation values obey a susceptibility symmetry, namely:
\begin{equation}
    \label{eq:reciprocity-closedquantum}
    \frac{\partial \langle \hat{O}_i \rangle}{\partial \theta_j} = \frac{\partial \langle \hat{O}_j \rangle}{\partial \theta_i}.
\end{equation}
In App.~\ref{section:EP-proof}, we show how this susceptibility symmetry is sufficient to prove Eq.~\eqref{eq:grad-ep-closed}.
Note that the expectation values here each depend on the parameters $\vec \theta$, as they are taken in $\vec \theta$-dependent ground states.

We emphasize that this relation is formally distinct from Onsager reciprocity (see e.g. Ref.~\onlinecite{pathria2017statistical}), as it does not rely on any notion of time-reversal symmetry.  Instead, it is a direct consequence of the existence of an energy function.  
To be explicit, the parameter-dependent ground state energy of our system
$\mathcal{F}(\vec \theta)$
can be written as a ground-state expectation value of the system Hamiltonian:
\begin{equation}
    \mathcal{F}(\vec \theta) = \langle \Psi_0(\vec \theta) | \hat{H}(\vec \theta) | \Psi_0(\vec \theta) \rangle.
\end{equation}
Given the form of our Hamiltonian in Eq.~\eqref{eq:closed-hamiltonian}, it thus follows that we can express expectation values of an observable $\hat{O}_i$ for some parameters $\vec \theta_0$ in terms of a derivative with respect to its conjugate field:
\begin{equation}
\label{eq:free-energy-derivative}
    \langle\hat{O}_i\rangle = \frac{\partial \mathcal{F}}{\partial \theta_i},
\end{equation}
where both sides should be evaluated at $\vec \theta = \vec \theta_0$.  This is a direct consequence of the well-known Hellmann-Feynman theorem~\cite{feynman1939forces}.  
The susceptibility symmetry Eq.~\eqref{eq:reciprocity-closedquantum} is then a simple consequence of the symmetry of second derivatives (sometimes referred to as Schwarz's theorem):
\begin{equation}
    \frac{\partial^2 \mathcal{F}}{\partial \theta_j \partial \theta_i} = \frac{\partial^2 \mathcal{F}}{\partial \theta_i \partial \theta_j} \implies  \frac{ \partial \langle \hat{O}_i \rangle}{\partial \theta_j} = \frac{ \partial \langle \hat{O}_j \rangle}{\partial \theta_i}.
\end{equation}

We thus see that the EP training protocol will work in any setting where we are working with parameter-dependent states that minimize a parameter-dependent energy or free energy; this is the standard setting for EP \cite{scellier2017equilibrium}.  This might give the impression that once we move away from equilibrium systems, EP will not be possible, as one is no longer guaranteed to have a meaningful notion of a free energy.  This is overly pessimistic.  We note that all EP requires is the effective symmetry of susceptibilities expressed in Eq.~\eqref{eq:reciprocity-closedquantum}.  We will show that such relations can hold even in non-equilibrium quantum settings without a well-defined free energy.  Instead, these relations can result as the consequence of generalized notions of Onsager symmetry and quantum detailed balance \cite{roberts2021hidden}.  

\subsection{Physical learning using non-equilibrium steady states of driven-dissipative quantum systems}
\label{section:learning-problem}

We now consider the driven-dissipative analogue of the closed-system learning problem introduced in Sec.~\ref{section:closed-ep}.
Consider a quantum many-body open system whose dynamics are Markovian and governed by a time-independent generator.
The system is described by a density  matrix $\hat{\rho}$, whose dynamics follows a Gorini-Kossakowski-Sudarshan-Lindblad (GKSL) form master
equation \cite{lindblad1976generators, gorini1976completely} ($\hbar = 1$ throughout):
\begin{equation}
\label{eq:master-eq}
     \partial_t \hat{\rho} = -i[\hat{H},\hat{\rho}] + \sum_k \mathcal{D}[\hat{c}_k] \hat{\rho} \equiv\mathcal{L}[\hat{\rho}].
\end{equation}
Here the Lindblad dissipative superoperator is defined as \(\mathcal{D}[\hat{c}] \hat{\rho} \equiv \hat{c}\hat{\rho}\hat{c}^{\dag} - \frac{1}{2} \{\hat{c}^{\dag}\hat{c}, \hat{\rho}\}\).
The first term on the RHS of Eq.~\eqref{eq:master-eq} describes purely Hamiltonian dynamics (which by itself would generate unitary evolution), while the second term describes both dissipation and incoherent driving. Each dissipative term is parameterized by a ``jump operator" \(\hat{c}_k\); these operators describe how the system is coupled to the dissipative environment.  We will be interested in the steady-state of the dynamics (often termed a non-equilibrium steady state (NESS)), which we denote \(\hat{\rho}_{\rm ss}\).  Such a steady state satisfies: 
\begin{equation}
    \mathcal{L}[\hat{\rho}_{\rm ss}] = 0.
\end{equation}
In this work, we restrict attention to cases where for any fixed choice of parameters, there is a unique NESS.  

For the open-system version of the learning problem in Sec.~\ref{section:closed-ep}, we consider a setup governed by a parameter-dependent master equation of the form of Eq.~\eqref{eq:master-eq}. 
For simplicity, we assume that the jump operators $\hat{c}_k$ are parameter-independent, whereas the system Hamiltonian is a function of both input variables $\vec x$ and training parameters $\vec \theta $, i.e.~\(\hat H = \hat{H}(\vec{x}, \vec{\theta})\). 
This then yields a parameter dependent NESS:
\begin{equation}
    \hat{\rho}_{\rm ss} = \hat{\rho}_{\rm ss}(\vec{x}, \vec{\theta}).
\end{equation}
Outputs now correspond to the expectation value of output-site observables \(\langle \hat{O}_j \rangle\) in this steady state, and we have an analogous input-output relation to the closed system.
Critically, in this case, these observables need not be the ones conjugate to the input drives \(\vec x\); indeed, as we will show, hTRS requires that these observables in fact do not appear in the Hamiltonian at all.

This general setup gives us a parameter-dependent state and a parameter dependent input-output relation, just like the Hamiltonian ground state case.  We could thus directly employ the parameter-shift method to perform learning in this setting.  Note that even at this level, there is in principle an advantage over the ground state learning protocol of Sec.~\ref{section:closed-ep}.  There, one had to assume some mechanism for cooling to the new ground state every time parameters were modified. In contrast, by specifying a full dissipative master equation with a unique steady state, we are defining both, a family of parameter-dependent steady states, and also a means for preparing these steady states (i.e.~the dissipative dynamics directly leads to the NESS at long times, no other external state preparation machinery or ``fridge" is required). Given the general difficulty of preparing ground states of many-body Hamiltonians, this already represents a potentially large advantage.    

While the parameter-shift method works in this open-systems setting without modification, at first glance there is no way to employ an EP strategy: in general, $\hat{\rho}_{\rm ss}(\vec x, \vec \theta)$ does not correspond to the minimum of a well-defined free energy functional (see e.g.~\cite{fazio2025many}). There is thus no built-in structure that guarantees that the crucial susceptibility symmetry required for EP (c.f.~Eq.~\eqref{eq:reciprocity-closedquantum}) will hold. A key result of our work is to show that all hope is not lost:  there is a class of open quantum systems where Eq.~\eqref{eq:reciprocity-closedquantum} holds despite not having any well-defined notion of a free energy. The crucial ingredient, as we now explain, is a suitable notion of quantum detailed balance.

\subsection{Quantum detailed balance in dissipative systems}
\label{section:htrs}

To employ an EP strategy for physical learning with open quantum systems, the simplest approach would be to restrict oneself to Lindblad master equations whose steady states {\it always} correspond to thermal states of the system Hamiltonian (e.g.~through the use of Davies generators \cite{davies1974markovian, davies1976markovian, alicki1976detailed, kossakowski1977quantum}). While formally possible, constructing the required dissipative dynamics requires highly non-local, parameter-dependent jump operators $\hat{c}_k$ (see e.g.~\cite{guo2025designing}), leading to an experimentally infeasible setup. Further, in this case there is nothing truly new in the open systems setting, as one is again just using a thermal equilibrium state with a well-defined free energy for learning.

We instead take a different approach: we work with experimentally-feasible dissipative dynamics (i.e.~based on simple local dissipators) that lead to NESS that are not simply Gibbs states of the Hamiltonian. To maintain this richness while still being able to employ the remarkable advantages of an EP training approach, we focus on dissipative dynamics that obey a specific kind of quantum detailed balance, so-called hidden time-reversal symmetry (hTRS)~\cite{roberts2021hidden}.  Unlike simpler versions of quantum detailed balance, this symmetry can hold even when the NESS is very far from being a Gibbs state. We will show below that hTRS can enable exactly the kind of susceptibility symmetry in Eq.~\eqref{eq:reciprocity-closedquantum} needed for EP-based learning. 
This hTRS can be viewed as a modified version of the more familiar KMS detailed balance, sometimes referred to as the SQDB-\(\theta\) condition (where the modification introduces a non-trivial anti-unitary operator that plays the role of time-reversal)~\cite{fagnola2007generators, fagnola2010generators}. A detailed and formal discussion is given in \cite{roberts2021hidden, noachtar_exact_2026}; here we just state the basic definition and its immediate consequences.  

Recall that the simplest kind of classical detailed balance implies that the steady state is constrained by more than just the requirement of global stationarity: there are additional ``local" constraints on the dynamics, namely that the probability current between any pair of states must vanish in the steady-state.
In a similar fashion, hTRS is a form of quantum detailed balance that places independent constraints on each element of the dynamical generator. 
An open quantum system described by a Lindbladian Eq.~\eqref{eq:master-eq} has its dynamics generated both by dissipative terms corresponding to the jump operators $\hat{c}_k$, and also via the non-Hermitian Hamiltonian \(\hat{H}_{\text{eff}} = \hat{H} - (i/2)\sum_l \hat{c}_l^{\dag} \hat{c}_l\).
For the system to have hTRS, each of these is constrained:  there must exist an anti-linear operator $\hat \Psi$ \footnote{Recall that an anti-linear operator $\hat \Psi$ satisfies \(c \hat \Psi = \hat \Psi c^*\) for any complex scalar \(c\).} 
such that
\begin{align}
\label{eq:htrs-heff}
    \hat{H}_{\text{eff}} & = \hat{\Psi}  \hat{H}^{\dag}_{\text{eff}} \hat{\Psi}^{-1} \\
    \label{eq:htrs-ck}
    \hat{c}_k & = \hat{\Psi} \hat{c}^{\dag}_k \hat{\Psi}^{-1} \quad \forall k.
\end{align}
If these conditions hold, it immediately follows that $\hat \rho_{\rm ss} = \hat \Psi \hat \Psi^\dagger$ is a steady state of the Lindbladian. 

The existence of hTRS has some immediately relevant consequences.
In particular, any open quantum system with hTRS obeys a kind of Onsager symmetry that relates {\it certain} steady-state correlation functions.
Any Lindblad system with hTRS necessarily satisfies:
\begin{align}
\label{eq:htrs-symmetries}
    \langle \hat{c}_i(t) \hat{c}_j(0)\rangle & =  \langle \hat{c}_j(t) \hat{c}_i(0)\rangle \quad t>0 \quad  \forall i,j \\
    \langle \hat{H}_{\text{eff}}(t) \hat{c}_i(0)\rangle & =  \langle \hat{c}_i(t) \hat{H}_{\text{eff}}(0)\rangle \quad t >0 \quad \forall i.
\end{align}
Here, all expectation values are taken in the NESS. We stress that only this subset of correlation functions obey the above Onsager-like symmetry, as opposed to all correlation functions. As discussed extensively in Ref.~\cite{roberts2021hidden}, this means that open systems with hTRS can still have highly non-thermal NESS, e.g.~NESS that do not commute with the system Hamiltonian.  Surprisingly, there are a number of many-body open systems that exhibit hTRS, including dissipative collective spin models \cite{roberts2023exact}, interacting bosonic models \cite{roberts2023competition}, and driven-dissipative spin chain systems \cite{lingenfelter2024exact,yao2025hidden}.
In what follows, we study two of these models as paradigmatic examples, and show that they can indeed be used for EP-based learning.
As we will see, hTRS in these models directly enables the crucial susceptibility symmetry Eq.~\eqref{eq:reciprocity-closedquantum} needed for EP-based learning.  

\section{EP-based learning in a dissipative transverse field Ising model (DTFIM)}
\label{section:dtfim}

We now turn to the main focus of this work: a detailed study of a specific many-body open quantum system whose parameterized NESS can be used for EP-based learning despite the lack of a meaningful free energy. We will explore a dissipative, disordered transverse-field Ising model (DTFIM).  Ref.~\cite{roberts2023exact} showed that this system has hTRS, and used this symmetry to derive an exact analytic solution for the steady state (something that will allow us to simulate large systems, see App.~\ref{section:mps}). We now introduce the model and show how it can be used to enable EP-based learning. 

\subsection{DTFIM Lindbladian and susceptibility symmetry}
\label{section:dtfim-reciprocity}

The DTFIM we study describes $N$ Rabi-driven qubits 
(Pauli matrices \(\hat{\sigma}^j_{x,y,z}\)) that have spatially-varying drive amplitudes, and which interact via all-to-all Ising interactions. All qubits are driven at the same frequency, and assumed to have the same splitting frequency. Working in a rotating frame set by the drive, the system Hamiltonian is:  
\begin{equation}
\label{eq:dtfim-ham}
        \hat{H}_{\text{TFIM}} = \frac{J}{4N} \sum_{i\neq j} \hat{\sigma}_z^i \hat{\sigma}_z^j + 
        \frac{1}{2} \sum_{j=1}^N \left(
        \Delta \hat{\sigma}_z^j +
        \Omega_j \hat{\sigma}_x^j
        \right).
\end{equation}
Here $J$ is the Ising interaction amplitude, $\Delta$ is the drive detuning, and $\Omega_j$ is the drive amplitude on qubit $j$ (all taken without loss of generality to be real).    
Each qubit is subject to local relaxation at a rate $\kappa$, with corresponding jump operators 
\(\hat{c}_k = \sqrt{\kappa}\ \hat{\sigma}_-^k\). The full dynamics of the system's density matrix is then given by the Lindbladian:
\begin{equation}
\label{eq:tfim-lindbladian}
    \mathcal{L}_{\text{TFIM}}[\hat{\rho}] = -i[\hat{H}_{\text{TFIM}},\hat{\rho}] +\kappa\sum_j \mathcal{D}[\hat{\sigma}^j_-] \hat{\rho}.
\end{equation}
This model features a number of tuneable parameters that grow linearly with the number of qubits $N$, namely the real drive amplitudes $\Omega_j$ (see Fig.~\ref{fig:schematic}(b) for a schematic representation of the DTFIM). This allows our system to be used for a physical learning application.  

As discussed extensively in Refs.~\cite{roberts2023exact,marcuzzi2014universal,tucker_facilitating_2020,song_dissipation-induced_2025,xiang_switching_2025,leppenen_quantum_2024}, the NESS of this model has a rich structure stemming from the competition between driving, interactions and loss.   
In the case of homogeneous driving (\(\Omega_j \equiv \Omega\)) and for large $N$, the model is known to exhibit phase transitions between a completely depolarized state (where the driving dominates) and a polarized state (where the dissipation dominates)~\cite{roberts2023exact, noachtar_exact_2026}.
The obvious order parameter here is the average polarization, i.e.~$1/N\sum_j \langle \hat \sigma^j_z \rangle$. We will instead focus on another set of observables that are also extremely sensitive to the dynamics: the steady state coherence of each qubit, as measured by 
\(\langle \hat{\sigma}_y^j \rangle\). These are zero both in the phase where the dissipation dominates (i.e.~all qubits unexcited), and when the drive dominates (i.e.~all qubits depolarized).  
They are however non-zero and maximal when the system is tuned close to its phase transition.  

Remarkably, the steady-state $\hat \sigma_y$ coherences satisfy exactly the kind of susceptibility symmetry that enables EP-style learning.  Using hTRS, we can show that for any set of parameters, and for any pair of qubits $(i,j)$, steady-state expectation values obey:
\begin{equation}
    \label{eq:tfim-reciprocity}
    \frac{\partial \langle \hat{\sigma}_y^i\rangle}{\partial \Omega_j} = \frac{\partial \langle \hat{\sigma}_y^j\rangle}{\partial \Omega_i} \quad \forall i,j.
\end{equation}
This susceptibility symmetry is a central result of our work, and the proof is given in App.~\ref{section:dtfim-reciprocity-derivation}.
It arises from hTRS promoting a simple gauge covariance of the Lindbladian to a much more non-trivial symmetry that relates steady states of different instances of the DTFIM.
Specifically, when we transform the Lindbladian using a unitary \(\hat U = \exp\left(i/2 \sum_j \varphi_j \hat \sigma_z^j \right)\), it is mapped to a different DTFIM Lindbladian where drive phases are shifted; the steady state transforms in a corresponding way.
This covariance of the Lindbladian under a \(U(1)\) gauge transformation is trivially true, and does not require hTRS.
In the context of Fig.~\ref{fig:thms}, the `model space' is precisely the complex plane of drive amplitudes, and the gauge covariance is simply the freedom to move along a circle of a fixed radius (i.e.~relate steady states of DTFIMs with different drive phases).
Critically however, as we show in App.~\ref{section:dtfim-reciprocity-derivation}, hTRS turns this into a much more powerful statement, relating steady states of DTFIM Lindbladians with drives differing in both phase {\it and magnitude}.
This in turn enables us to prove Eq.~\eqref{eq:tfim-reciprocity}.

We stress that while Eq.~\eqref{eq:tfim-reciprocity} appears analogous to Eq.~\eqref{eq:reciprocity-closedquantum} for a closed system, there are crucial differences. 
First, notice that the observable \(\hat \sigma_y\) does not appear in the DTFIM Hamiltonian.
It is in fact crucial for the symmetry to hold that this observable is orthogonal in the transverse plane to the driving term in the Hamiltonian.
Second, this susceptibility symmetry {\it does not hold} for all possible choices of qubit observables. 
Both these features are just two ways the non-equilibrium nature of our system manifests itself.  
We now discuss how the above symmetry relation is used as the basis for an EP training protocol.

\subsection{EP-based training using non-equilibrium steady states of the DTFIM}
\label{section:ep-dtfim}
The training setup for EP in the DTFIM is analogous to the one described in Sec.~\ref{section:closed-ep} for closed systems.
We partition the drive amplitudes \(\Omega_j\) as:
\begin{equation}
\label{eq:tfim-partition}
    \Omega_j \equiv 
    \begin{cases}
        \theta'_j &  j \in[1, M_{\text{out}}] \\
        \theta''_j & j \in [M_{\text{out}}+1, M_{\text{train}}] \\
        x_j & j \in  [M_{\text{train}}+1, N].
    \end{cases}
\end{equation}
The function input is encoded in the drives \(x_j\) of the input spins \([M_{\text{train}}+1, N]\).
The output is defined as the $\hat{\sigma}_y$ expectation values
\begin{equation}
    y_j = \langle \hat{\sigma}_y^j\rangle, \quad j \in [1, M_{\text{out}}]
\end{equation}
for some integer \(M_{\text{out}} \leq M_{\text{train}}\), giving us the required input-output relation.
We refer to the spins \([1, M_{\text{out}}]\) as the output spins, and the spins \([M_{\text{out}}+1, M_{\text{train}}]\) as internal spins.
As in the Hamiltonian case, we use \(\vec{\theta}\) to collectively denote the drives on all the output and internal spins, \([1, M_{\text{train}}]\).
These \(\theta_j\) are the trainable parameters that are learned through EP.

Training using EP here again involves the same two-phase approach discussed in Sec.~\ref{section:closed-ep}, with each phase using a different NESS.  
In the first free phase, the system is allowed to relax to the NESS corresponding to the current set of system parameters, defined as
\begin{equation}
\label{eq:ness-free}
    \hat{\rho}_0 \equiv \hat{\rho}_{\rm ss}(\vec{x}, \vec{\theta}).
\end{equation}
In this state, we measure the observables \(\langle \hat{\sigma}_y^j \rangle\) on all \(M_{\text{train}}\) trainable spins.
In the second nudged phase, we now {\it simultaneously} alter all of the drive amplitudes of the output spins, with the nudge of each drive taken to be proportional to the size of the
error vector (c.f.~Eq.~\eqref{eq:error-vec}): \(\vec{\theta'} \to \vec{\theta'} + \beta \vec{\varepsilon}\).  The system then relaxes to the new steady state corresponding to these parameters, yielding the `nudged' NESS
\begin{equation}
\label{eq:ness-nudged}
    \hat{\rho}_\beta \equiv \hat{\rho}_{\rm ss}\left(\vec{x}, \vec{\theta'} + \beta \vec{\varepsilon}, \vec{\theta''}\right).
\end{equation}
We once again measure the observables \(\langle \hat{\sigma}_y^j \rangle\) on all internal and output spins.
The EP gradient update is then given by 
\begin{equation}
   \frac{\partial C}{\partial \theta_j} \approx \frac{\Tr(\hat{\sigma}_y^j \hat{\rho}_{\beta})-\Tr( \hat{\sigma}_y^j \hat{\rho}_0)}{\beta}.
\end{equation}
This follows from Eq.~\eqref{eq:tfim-reciprocity} via the same argument used for the closed system case, 
c.f.~App.~\ref{section:EP-proof}.  

We thus see that at a practical level, EP is implemented in our DTFIM model analogously to its implementation in a closed system (c.f.~Sec.~\ref{section:closed-ep}). The key difference now however is that we are using a NESS which is not determined by a free energy.  Just like the closed-system case, our quantum dissipative version of EP uses a nudge of all training parameters to calculate all gradient updates simultaneously.

In what follows, we use a `symmetric' nudging procedure, as is now standard across implementations of EP (see e.g.~\cite{wanjura2025quantum}).
This refers to performing two nudges of the output drives, of size \(\beta\), in opposite directions for better numerical convergence.
Explicitly, the gradient update is then given by
\begin{equation}
   \frac{\partial C}{\partial \theta_j} \approx \frac{\Tr(\hat{\sigma}_y^j \hat{\rho}_{\beta})-\Tr( \hat{\sigma}_y^j \hat{\rho}_{-\beta})}{2\beta}.
\end{equation}
We emphasize that in this case, symmetric EP needs {\it three} state preparations per gradient update; the first free state \(\hat \rho_{0}\) to determine the error vector in the current state, along with two nudged states \(\hat \rho_{\beta}\) and \(\hat \rho_{-\beta}\) for the gradient update.

In the simulations that follow, we also use symmetric nudging for parameter-shift; the total number of state preparations per gradient step for this method is then \(2 M_{\rm train}\).
Each training parameter gets two nudges, using a finite difference to estimate \(\partial \langle \vec O \rangle/\partial \theta_j\) (see Eq.~\eqref{eq:grad-param-shift}). Instead of a shared free state (as described in Sec.~\ref{section:param-shift}), each parameter needs two independent states to estimate the gradient, yielding \(2 M_{\rm train}\) states per gradient update as opposed to \(M_{\rm train} + 1\) in the one-sided nudging case.


\subsection{Teacher-student learning}
\label{section:teacher-student}

We now turn to a concrete demonstration of EP-based training of the DTFIM setup.
As stated in the introduction, our goal is to show the basic efficacy of this training approach, and not conflate this issue with the distinct question of expressivity.
To this end, we discuss teacher-student learning, a kind of supervised learning task where we know a priori that the target function is learnable. 
Teacher-student learning involves training one initialization of the system (the `student') using training data generated from a different initialization of the system with different parameters and/or architecture (the `teacher').
It has been used extensively in the machine learning literature as a way to train small, simple neural networks from more complicated ones 
(see e.g.~\cite{hinton2015distilling}).

We first construct the teacher system that will produce the target input-output relation.  This will be an \(N\) spin DTFIM where we just have a single input spin (with a drive amplitude $x$), and $M_{\rm out} > 1$ output spins.  The drive amplitudes on the non-input spins are fixed at the values \(\vec{\theta}_{\rm teacher}\) (see Table~\ref{table:dtfim-1} for exact values), and we take 
\(J/\kappa = 5, \Delta = 0\).
To produce the target input-output relation, for each value of $x$ in the input domain, we measure the outputs \(y^{\text{train}}_j  \equiv \langle \hat{\sigma}_y^j\rangle\) for \(j \in [1, M_{\text{out}}]\) in the corresponding NESS.  This generates our training data \((x,\vec{y}^{\text{train}})\) that we will attempt to learn with a second system.  

We then initialize a new \(N\)-spin DTFIM (the `student') with all the same system parameters as the teacher (\(J/\kappa = 5, \Delta =0\)) except the training drives, which we set to some random values \(\theta_j^{\text{student}}\).
The goal then is to train these student drives using the training data to learn the teacher input-output relation.
Since the systems are identical, this is in some ways a trivial task, since we know that a solution always exists.
This kind of setup is still useful, as it lets us benchmark a particular training strategy (in our case EP) both in terms of accuracy (i.e.~the final converged error) and efficiency (i.e.~how quickly is the student able to learn the internal parameters needed to mimic the teacher).   

\begin{figure}
    \centering
    \includegraphics[width=1.0\linewidth]{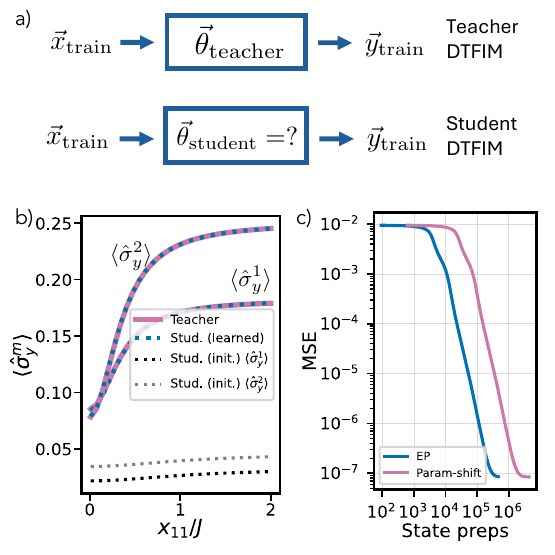}
    \caption{(a) Schematic depicting teacher-student learning, where a `student' system with drives \(\vec \theta_{\rm student}\) is used to learn the input-output relation generated by an identical `teacher' system (with some initially unknown drives $\vec \theta_{\rm teacher}$). (b)  Input-output relations for teacher-student learning with a $N=11$ spin DTFIM model
    (c.f.~Eq.~\eqref{eq:dtfim-ham}) with one input spin and three output spins ($M_{\rm out} = 3$). The input is the spin 11 drive amplitude \(x_{11}\), and the outputs are \(\vec{y} = (\langle \hat \sigma_y^1\rangle, \langle \hat \sigma_y^2\rangle, \langle \hat \sigma_y^3\rangle)\); only two are plotted.  
    Solid pink curves correspond to the target input-output relation defined by the teacher.
    The drive amplitudes \(\vec{\theta}_{\rm student}\) (for spins 1 to 10) are trained either by EP (Sec.~\ref{section:closed-ep-2}) or parameter-shift (Sec.~\ref{section:param-shift}).
    Black dotted `initial' curves correspond to the student output pre-training.  The blue dotted `learned' curves correspond to the outputs of the student system after training has converged. Both methods converge to the same input-output relation.
    We use \(J / \kappa = 5\), and \(\Delta = 0\). We approximate finite differences with \(\beta = 0.01\), and use a learning rate \(\eta = 0.05\). Values for teacher and student drives (before and after training) are included in Table~\ref{table:dtfim-1} in App.~\ref{section:dtfim-teacher-student-examples}.
    (c) The mean squared error (MSE) between the outputs \(\vec y\) and the targets \(\vec y_{\rm train}\) as training progresses. We plot the evolution of the MSE for both training methods, EP and parameter shift, as a function of the number of DTFIM state preparations.}
    \label{fig:teacher-student-compare}
\end{figure}

In Fig.~\ref{fig:teacher-student-compare}, we present an example of teacher-student learning for $N=11, M_{\rm input} =1$ and $M_{\rm out} = 3$ (here $M_{\rm input} \equiv N - M_{\rm train}$). We compare results for training based on EP, versus training based on the more naive parameter-shift method (see Sec.~\ref{section:param-shift}), with all learning hyperparameters fixed across both methods. A further example (corresponding to different random teacher drives) is presented in App.~\ref{section:dtfim-teacher-student-examples}.  
All simulations in this model are performed using a formally exact matrix product state (MPS) representation of the exact steady-state solution (see App.~\ref{section:mps} for details). This allows efficient numerical evaluation of steady-state expectation values limited only by machine precision and does not require MPS simulation of the full master equation.

Fig.~\ref{fig:teacher-student-compare}(c) shows the evolution of the cost function being minimized, the MSE $C \equiv  1/M_{\rm out} \sum_l \| y_l - y^{\rm train}_l \|^2$, as the training progresses for both EP and the parameter shift method applied to our DTFIM.
It is plotted as a function of the total number of distinct DTFIM state preparations.
We clearly see a relative speedup of EP versus parameter shift that scales as expected: the number of state preparations needed in EP for a converged minimal $C$ is smaller by a factor of \(2M_{\rm train}/3 \approx 6.67\) compared to the number needed for parameter-shift.  This is  what would be expected from the general arguments in Sec.~\ref{section:closed-ep} (for a symmetric nudging procedure, which we use).   

It is also interesting to note that the learned student drives (in  Table~\ref{table:dtfim-1} in App.~\ref{section:dtfim-teacher-student-examples}) do not exactly match the teacher drives, despite the input-output relation being learned correctly.
In particular, for this setup where we have a three dimensional function output, the converged student parameters $\theta_j$ only match those of the teacher on the three output spins.  
This is in fact true for all examples tested (see App.~\ref{section:dtfim-teacher-student-examples}), where the student drives match the teacher drives for only the output spins, but not necessarily the internal ones.


\section{EP in a bosonic lattice with quadratic driving}
\label{section:bosonic-lattice}

We now show that our quantum non-equilibrium version of EP can apply for models with more structure and complexity than the DTFIM of the previous section.  The model is a pair driven, dissipative and interacting bosonic lattice system (see Fig.~\ref{fig:bosonic}(a)), where again the NESS is not a simple Gibbs state.  As discussed below, this model now has a non-trivial spatial structure.  Despite this, hTRS still enables a static susceptibility symmetry in the NESS, which in turn makes EP possible and effective.  

\subsection{Model and susceptibility symmetry}
Consider \(N\) bosonic modes (with associated lowering operators \(\{\hat{a}_j\}\)) with parametric (two-photon) drives 
associated with every pair of modes; the complex drive amplitude for a given pair is denoted  \(M_{ij}\).
All drives are at the same frequency, and all modes have the same linear frequency (with a detuning $\Delta$ from the parametric resonance).   
The system also has self-Kerr and cross-Kerr nonlinearities.  We take these all to have magnitude $U/N$, so that the interaction involves the total photon number of the system. Treating the driving within the rotating wave approximation and moving to a rotating frame, we obtain a time-independent Hamiltonian:
\begin{equation}
\label{eq:bosonic-ham}
    \hat{H}_b = \frac{U}{N} \left( \sum_j \hat{n}_j\right)^2 - \Delta \sum_j \hat{n}_j + \sum_{i,j}  \left(M_{ij}\hat{a}_i^{\dag}\hat{a}_j^{\dag} + \text{h.c.}\right)
    \end{equation}
where \(M\) is an arbitrary complex symmetric matrix, and \(\hat n_j\) is a photon number operator, i.e.~\(\hat n_j = \hat a^{\dag}_j \hat a_j\).
Finally, each mode is subject to local single-site photon loss at a dissipation rate \(\kappa\) (taken to be the same for each mode), yielding the Lindbladian
\begin{equation}
\label{eq:bosonic-lind}
    \mathcal{L}_{b}[\hat{\rho}] = -i[\hat{H}_b,\hat{\rho}] + \kappa\sum_j \mathcal{D}[\hat{a}_j] \hat{\rho}.
\end{equation}

To use this system for physical learning, we will again view the drive amplitudes $M_{ij}$ as trainable parameters.  While we have some of the same ingredients as the DTFIM spin model (i.e.~global interaction, driving, local dissipation), there are crucial differences.  In addition to an unbounded local Hilbert space dimension, our system now also has a spatial structure (i.e.~$M_{ij}$ defines a connectivity graph).
In contrast, the DTFIM is essentially zero-dimensional.
These differences make the physics of this model far richer, and observables are generally more sensitive to variations of drive amplitudes (e.g.~compare  Fig.~\ref{fig:bosonic}(b) against Fig.~\ref{fig:teacher-student-compare}(b) for the DTFIM).
Also, note that the number of tuneable parameters \(M_{ij}\) now grow {\it quadratically} in the system size \(N\), whereas it only grew linearly in the DTFIM.  

The basic physics of this setup involves a competition between driving, dissipation and interactions.  
The drives populate the modes with photon pairs and generate squeezing correlations between them.  
This is opposed by the loss, detuning and the Kerr interactions.
The competition between these two effects gives rise to dissipative phase transitions, non-trivial steady states and a host of rich quantum phenomena~\cite{roberts2023competition}.
A number of related models have also been studied numerically in the literature~\cite{savona2017spontaneous, lebreuilly2019stabilizing,rota2019quantum}.
Surprisingly, for all choices of parameters, the pair-driven bosonic model described by Eq.~\eqref{eq:bosonic-ham} and \eqref{eq:bosonic-lind} has hTRS and is exactly solvable, something demonstrated in Ref.~\cite{roberts2023competition}.  We make use of this structure in what follows.

We now move to the key new result of this section: the existence of a susceptibility symmetry in the NESS that enables EP.
As shown in App.~\ref{section:bosonic-proof} and sketched below, hTRS again allows us to derive non-trivial constraints on static susceptibilities in the NESS, for any choice of model parameters.  However, for these to be {\it useful} for EP, we need an additional constraint beyond simply having hTRS.  We also need to restrict our model such that (in some fixed local gauge) all drive amplitudes $M_{ij}$ are real. 
This is markedly different from the case of the DTFIM, where a useful susceptibility symmetry exists even if drive amplitudes are allowed to be complex (see App.~\ref{section:dtfim-reciprocity-derivation} for details).
This additional constraint in our bosonic model is suggestive of a conventional time-reversal symmetry, i.e.~there are no non-trivial Aharonov-Bohm fluxes associated with the connectivity between modes encoded in the $M$ matrix.   

Similar to the DTFIM, the observables that turn out to be relevant for EP are associated with coherence in the NESS.  We introduce for each pair of modes $i,j$ the Hermitian operators
  \begin{align}
    \label{eq:xij}\hat{X}_{ij} & \equiv \frac{1}{2}(\hat{a}_i \hat{a}_j + \hat{a}^{\dag}_i \hat{a}^{\dag}_j) \\  \label{eq:yij}\hat{Y}_{ij} & \equiv \frac{1}{2i}(\hat{a}_i \hat{a}_j- \hat{a}^{\dag}_i \hat{a}^{\dag}_j)  \\ \hat{N}_{ij} & \equiv \hat{n}_i + \hat{n}_j.
\end{align}
$\hat{X}_{ij}$ and $\hat{Y}_{ij}$ characterize coherent pair correlations between the modes $i$ and $j$, whereas $\hat{N}_{ij}$ is the total photon number in these modes.  

As derived in detail in App.~\ref{section:bosonic-proof}, if we restrict our model to always having real-valued drive amplitudes $M_{ij}$, hTRS implies a non-trivial susceptibility symmetry associated with the drive-dependent NESS:
\begin{equation}
\label{eq:bosonic-recRelation}
   \frac{1}{(2 - \delta_{kl})} \frac{\partial \langle \hat{Y}_{ij} \rangle}{\partial M_{kl}} =   \frac{1}{(2 - \delta_{ij})}\frac{\partial \langle \hat{Y}_{kl} \rangle}{\partial M_{ij}}
\end{equation}
where \(\delta_{ij}\) is the Kronecker delta.
Here, as before, expectation values are taken in the NESS (which is itself a function of the real-valued $M_{ij}$).
Note that here $M_{ij}$ and $M_{ji}$ denote the same parameter since our model is only defined for a symmetric $M$ matrix.
Therefore, the weights $(2-\delta_{ij})$ account for the fact that our Hamiltonian in Eq.~\eqref{eq:bosonic-ham} was defined to sum over all $(i,j)$, and hence double counts the contribution associated with modes for which $i \neq j$.  

Note that there is a structural similarity to the corresponding relation for the DTFIM, Eq.~\eqref{eq:tfim-reciprocity}.
In the DTFIM, the relevant observables were the coherences $\hat{Y}_j$, which did not explicitly appear in the Lindbladian, and which were conjugate to the drive operators $\hat{X}_j$.
Similarly, in the bosonic model the $\hat{Y}_{ij}$ do not appear directly in the Lindbladian; instead the real drive amplitudes $M_{ij}$ couple directly to the conjugate quantities $\hat{X}_{ij}$.    

As in the case of the DTFIM, the proof of our bosonic susceptibility symmetry relies crucially on hTRS, and at a heuristic level directly connects to Fig.~\ref{fig:thms}.  One starts with a trivial kind of gauge covariance under a $U(N)$ transformation that mixes modes $\hat{a}_j$.  This allows one to trivially relate the NESS for one choice of $M$ to another $M'$ if the two matrices are unitarily related.  
hTRS turns this into something much more powerful: we can now relate the NESS for {\it any} choice of $M$ matrix to any other choice $M'$.
This is the formal  content of our Theorem~\ref{theorem:bosonic} (see App.~\ref{section:bosonic-proof}).  This non-trivial relation between different NESS then lets us directly prove Eq.~\eqref{eq:bosonic-recRelation}.

\begin{figure}
    \centering
    \includegraphics[width=\linewidth]{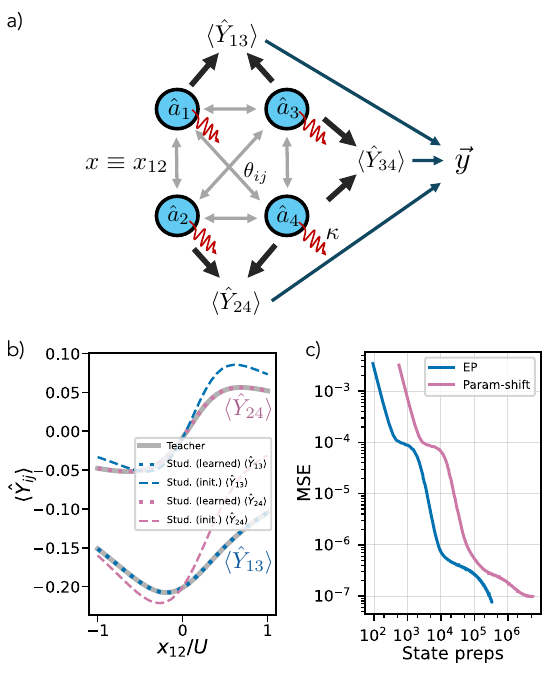}
    \caption{(a) Schematic depicting 
    the encoding of an input-output relation in a dissipative, pair-driven bosonic lattice (here with $N=4$ modes).  We show a setup with one input parameter and three output observables. The  single input \(x \equiv x_{12}\) and training parameters \(\theta_{ij}\) (gray arrows) are driving amplitudes (matrix elements of \(M\)).
    The three outputs are the observables \(\vec{y} \equiv (\langle \hat{Y}_{13}\rangle, \langle \hat{Y}_{34}\rangle,\langle \hat{Y}_{24}\rangle)\). There are 10 tuneable couplings in the system, and all but the input are trainable parameters, \(\theta_{ij}\). (b) Input-output relations for teacher-student training in the four mode bosonic model depicted in (a). We plot two of the three outputs. Solid gray curves correspond to the target relation set by the teacher. Dashed `initial' curves correspond to student outputs pre-training; dotted `learned' curves correspond to these outputs post-training. Both EP and parameter-shift are used, and converge to the same relation. We use system parameters \(U/\kappa = 1\), \(\Delta = 0\). 
    We approximate finite differences with \(\beta = 0.01\), and use a learning rate \(\eta = 0.05\). Values for teacher and student drives are included in Table~\ref{table:bosonic} in App.~\ref{section:dtfim-teacher-student-examples}. (c) Mean square error (MSE) as training progresses, for both EP and parameter-shift, as a function of the number of bosonic lattice steady state preparations.}
    \label{fig:bosonic}
\end{figure}

\subsection{EP-based training using NESS}
\label{section:bosonic-ep}
In this section we describe how the NESS of the bosonic system can be used to perform EP-based training.
 The key difference from the training setup for the DTFIM described in Sec.~\ref{section:dtfim} is that we encode inputs and training parameters in a driving matrix \(M\) that couples pairs of modes, as opposed to a vector of local single site drives \(\vec{\Omega}\).
For a system of \(N\) modes, the bosonic model has \(N(N+1)/2\) tunable real parameters, while the DTFIM only had \(N\) parameters for \(N\) spins.

It is useful to define sets \(A_{\rm input}\), \(A_{\rm out}\), \(A_{\rm int}\) which contain tuples \(\{(i, j)\}\) corresponding (respectively) to the input, output and additional internal training drives.
To be explicit, we partition the elements of our drive matrix as
\begin{equation}
    M_{ij} \equiv 
     \begin{cases}
    \theta'_{ij} (\delta_{ij} + 1)/2 & (i,j) \in A_{\rm out}\\
    \theta''_{ij} (\delta_{ij} + 1)/2 & (i,j) \in A_{\rm int} \\
    x_{ij} (\delta_{ij} + 1)/2 & (i,j) \in  A_{\rm input}.
    \end{cases}
\end{equation}
The inputs are encoded in \(x_{ij}\), while the \(\theta_{ij}\) (which refers to all couplings \(\theta'_{ij}\) and \(\theta''_{ij}\)) are the trainable parameters.
Note that here by definition we have that $(i,j)$ is the same pair as $(j,i)$, and we therefore have one parameter $x_{ij}$ and $\theta_{ij}$ for each pair.
The rescaling factor $(\delta_{ij} + 1)/2$ cancels the weights in Eq.~\eqref{eq:bosonic-recRelation}, and the susceptibility symmetry takes the more natural form:
\begin{equation}
     \frac{\partial \langle \hat{Y}_{ij} \rangle}{\partial \theta_{kl}} = \frac{\partial \langle \hat{Y}_{kl} \rangle}{\partial \theta_{ij}}.
\end{equation}
This is the form required for EP (see App.~\ref{section:EP-proof}), and is why the $\theta_{ij}$ as opposed to $M_{ij}$ are the natural training parameters.

Having specified how the drive elements in our model will be used either as inputs to our learning problem or as internal trainable parameters, we now turn to identifying the outputs.  These are defined as
\begin{equation}
    y_{ij} = \langle \hat{Y}_{ij}\rangle, \quad (i,j) \in A_{\rm out} .
\end{equation}
We again emphasize that only one output $y_{ij}$ is defined for a given pair of modes $(i,j)$ or equivalently $(j,i)$.
Training proceeds similarly as in the DTFIM; the gradient update is given by
\begin{equation}
    \frac{\partial C}{\partial \theta_{ij}} \approx \frac{\Tr\left(\hat{Y}_{ij} \hat{\rho}_{\beta}\right) - \Tr\left(\hat{Y}_{ij} \hat{\rho}_{0}\right)}{\beta}.
\end{equation}
Here, the steady state \(\hat \rho_0\) (\(\hat \rho_\beta\)) in the free (nudged) phase is defined identically as in Eq.~\eqref{eq:ness-free} (Eq.~\eqref{eq:ness-nudged}) where now \(\hat{\rho}_{\rm ss}\) is the steady state of the Lindbladian Eq.~\eqref{eq:bosonic-lind}.

In Fig.~\ref{fig:bosonic}(b-c), we present results for teacher-student training (see Sec.~\ref{section:dtfim} for details on the training setup) on an \(N = 4\) mode system used to map a one dimensional input to a three dimensional output. 
To compute steady-state expectation values we constructed a formally exact matrix product state representation of the exact steady-state solution (see App.~\ref{section:bosonic-mps} for details). This MPS construction is very similar to the DTFIM with one significant difference: the local Hilbert space dimension of the DTFIM is 2, but it is unbounded for the bosonic model.
While this requires us to introduce a local photon cutoff for the bosonic MPS, it also adds significantly more complexity to the state. We note that the local-photon-cutoff approximation can be controlled by increasing the photon cutoff.

Once again, we compare training using EP and parameter-shift (with all learning hyperparameters fixed across both training methods), and use a symmetric nudging procedure to compute gradients. 
One again sees (as expected) a significant learning speedup of EP over parameter shift.  To be explicit, the number of state preparations needed for convergence of EP is improved by a factor of \(2(|A_{\rm out}| + |A_{\rm int}|)/3 = 6\) where \(|.|\) denotes the cardinality of the set.
What is more interesting in this example is the structure of the input-output relation in Fig.~\ref{fig:bosonic}(b): the outputs \(y_{ij}\) are distinct, non-monotonic functions of the input (in contrast to the DTFIM outputs in Fig.~\ref{fig:teacher-student-compare}(b)).
One may then expect that this input-output relation could represent a much richer class of functions than the DTFIM even at smaller system sizes, potentially providing higher expressivity.
At a heuristic level, this can be attributed to the additional complexity of the bosonic model that arises from it having an effective spatial structure.  
While a full analysis of the expressive power of this system is outside of the scope of the current work, it would be an interesting avenue for future exploration.

\section{Outlook}
\label{section:discussion}
In this work, we established that the highly effective equilibrium propagation strategy for physical learning can be extended to non-equilibrium, driven-dissipative quantum systems.  This is possible despite the absence of any meaningful free energy in the non-equilibrium setting.  We showed that the crucial ingredient was hidden time-reversal symmetry (a kind of quantum detailed balance).  
We derived fundamentally new general relations on how the NESS flows under a change of parameters using hTRS (Theorems~\ref{theorem:tfim} and~\ref{theorem:bosonic}) in two representative many-body models:  a dissipative Ising model (Sec.~\ref{section:dtfim}) and a driven-dissipative bosonic lattice (Sec.~\ref{section:bosonic-lattice}), and showed how these directly enable EP.
We benchmarked training in both models using a teacher-student learning method, confirming that EP provides a substantial scaling advantage over the direct parameter-shift method in these non-equilibrium settings.

Having established that EP is possible in driven-dissipative quantum systems, a natural next question is that of expressivity of non-equilibrium quantum systems with hTRS.
The structure of the input-output relation of the bosonic model (see Fig.~\ref{fig:bosonic}(b)) is already encouraging.
Exploring more systematic metrics to benchmark the kinds of functions these models can learn is an interesting avenue for future work.
Furthermore, we need not limit ourselves to just these two models:
there are a number of models that are known to have hTRS (e.g. driven spin chains \cite{lingenfelter2024exact, yao2025hidden}, interacting fermion models~\cite{lingenfelter2026exact}).  It would be interesting to explore our EP-based learning approach with these systems.  

Another extremely promising direction for future work is to explore learning in systems that only approximately have hTRS, and thus only approximately satisfy our susceptibility symmetries (Eq.~\eqref{eq:tfim-reciprocity}, Eq.~\eqref{eq:bosonic-recRelation}).
This symmetry is used to approximate gradients needed to descend a loss landscape (gradient descent).
Crucially, the gradients do not have to be exact: approximate gradients will still move training in the right direction, towards a minimum of the loss landscape, albeit training might take longer. This is the spirit of much related work in the field (see e.g.~\cite{cin2025training}).
Using models that only approximately have hTRS could vastly expand the class of usable systems.  This would both avoid issues of experimental fine-tuning, but could also potentially be used to enhance expressivity.

\section*{ACKNOWLEDGMENTS}
We thank Arvind Murugan for helpful discussions. Research was sponsored by the Army Research Office and was accomplished under Grant Number W911NF-25-1-0261.  AC also acknowledges support from the Simons Foundation through a Simons Investigator Award (Grant No. 669487).

\textit{AI statement.} A large language model (Claude, Anthropic) was used to write and debug code for numerical simulations. It helped refine and generalize the key proofs of Theorems~\ref{theorem:tfim} and~\ref{theorem:bosonic}. All calculations presented were derived and verified independently by the authors. All text was written by the authors, although Claude helped catch typos and notational inconsistencies. The authors take full responsibility for the content of the work.

\onecolumngrid
\begin{appendix}
\section{Derivation of EP gradient update}
\label{section:EP-proof}

We use the susceptibility symmetry Eq.~\eqref{eq:reciprocity-closedquantum} to derive Eq.~\eqref{eq:grad-ep-closed} (EP).
Starting from Eq.~\eqref{eq:cost-func-dotproduct}, we have
\begin{align}
\label{eq:rec-derivation-1}
    \frac{\partial C}{\partial \theta_j}  & = \sum_l \varepsilon_l \frac{\partial \langle \hat{O}_l\rangle}{\partial \theta_j} \\
    \label{eq:rec-derivation-2}
& = \sum_l \varepsilon_l \frac{\partial \langle \hat{O}_j\rangle}{\partial \theta'_l} 
\end{align}
where in the second line we have applied the susceptibility symmetry.
Now, we define a new ground state that forms after we perturb the drive \(\theta_l' \to \theta_l' + \beta \varepsilon_l\), as
\begin{equation}
    \ket{\Psi_l} \equiv \ket{\Psi(\vec{x}, \vec{\theta} + \beta \varepsilon_l \vec{n}_l)}.
\end{equation}
Nudging the drive proportional to the error vector \(\varepsilon\) is crucial to remove the need to do the dot product in Eq.~\eqref{eq:rec-derivation-1}.
We can then approximate the derivative in Eq.~\eqref{eq:rec-derivation-2} as a finite-difference between the initial and new ground state as:
\begin{align}
\frac{\partial C}{\partial \theta_j}& \approx \sum_l \varepsilon_l \frac{\langle \Psi_l| \hat{O}_j | \Psi_l \rangle - \langle \Psi_0 | \hat{O}_j |\Psi_0 \rangle}{\beta \varepsilon_l}\\
\label{eq:rec-derivation-4}
& = \sum_l \frac{\langle \Psi_l| \hat{O}_j | \Psi_l \rangle - \langle \Psi_0 | \hat{O}_j |\Psi_0 \rangle}{\beta}.
\end{align}
We're nearly there! We now only need to express Eq.~\eqref{eq:rec-derivation-4} in terms of the state \(\ket{\Psi_0'}\) as defined in Eq.~\eqref{eq:ep-nudged-close}.
This state is formed by perturbing all drives \(\vec{\theta}' \to \vec{\theta}' + \beta \vec{\varepsilon}\).
It will be useful to compare the expansion of \(\langle \hat{O}_j \rangle\) in powers of \(\beta\), for both states \(\ket{\Psi_0'}\) and \(\ket{\Psi_l}\). Explicitly, we have 
\begin{align}
    \langle \Psi_0'|\hat{O}_j | \Psi_0' \rangle & = \langle \Psi_0|\hat{O}_j | \Psi_0 \rangle + \beta \sum_l \varepsilon_l \frac{\partial \langle \Psi_0|\hat{O}_j | \Psi_0 \rangle}{\partial \theta_l} + \mathcal{O}(\beta^2) \\ 
    \langle \Psi_l|\hat{O}_j | \Psi_l \rangle & = \langle \Psi_0|\hat{O}_j | \Psi_0 \rangle + \beta \varepsilon_l \frac{\partial \langle \Psi_0|\hat{O}_j | \Psi_0 \rangle}{\partial \theta_l} + \mathcal{O}(\beta^2).
\end{align}
Then, to linear order we have 
\begin{equation}
\label{eq:rec-derivation-5}
    \langle \Psi_0'|\hat{O}_j | \Psi_0' \rangle  - \langle \Psi_0|\hat{O}_j | \Psi_0 \rangle \approx \sum_l \left(\langle \Psi_l| \hat{O}_j | \Psi_l \rangle - \langle \Psi_0 | \hat{O}_j |\Psi_0 \rangle\right).
\end{equation}
This approximation is valid since we are operating in the regime of linear response by approximating derivatives using finite differences.
Applying Eq.~\eqref{eq:rec-derivation-5} to Eq.~\eqref{eq:rec-derivation-4}, we recover exactly Eq.~\eqref{eq:grad-ep-closed}; we are done.
The key insight here is that the susceptibility symmetry removes the need to perform a dot product, since the observable in Eq.~\eqref{eq:rec-derivation-2} becomes independent of the sum over \(l\).

\section{Hidden time-reversal symmetry and constraints on the purification of the steady state}
\label{appendix:htrs}

A useful way to understand constraints set by hTRS is to re-express the defining conditions in Eq.~\eqref{eq:htrs-heff} and Eq.~\eqref{eq:htrs-ck} as constraints on a purification of the steady state density matrix \cite{roberts2021hidden,noachtar_exact_2026}.  This formulation will greatly simplify the proof of the crucial static susceptibility symmetries needed for EP-based learning.  
In the main text, hTRS was defined in terms of an anti-linear operator $\hat \Psi$ which acted as a square root of the steady state density matrix, $\hat \rho_{\rm ss} = \hat \Psi \hat \Psi^\dagger$.  By using a standard Choi-Jamiolkowski mapping, we can convert (i.e.~vectorize) the operator $\hat \Psi$ to a state vector $\ket \Psi$ in a doubled Hilbert space.  Calling the two copies of the Hilbert space A and B, it immediately follows that this state is a purification of the steady state: 
\begin{align}
\label{eq:pure-ss}
        \hat \rho_{\rm ss} = \Tr_B{\ket{\Psi}\bra{\Psi}}.
    \end{align}
Defining system-A and system-B operators in the obvious manner (i.e.~\(\hat O_A \equiv \hat O \otimes  \mathbb{I}_B\),  \(\hat O_B \equiv  \mathbb{I}_A \otimes \hat O\)), Eqs.~(\ref{eq:htrs-ck}),(\ref{eq:htrs-heff})
become constraint conditions on the purification:
\begin{align}
\label{eq:darkstate-1}
\forall i  \,\,\, (\hat c_{i, A} - \hat c_{i, B}) \ket{\Psi} &= 0,  \\
\label{eq:darkstate-2}
(\hat H_{\text{eff}, A} - \hat H_{\text{eff}, B}) \ket{\Psi} &= 0.
\end{align}
Here \(\hat H_{\text{eff},\alpha}\) and \(\hat c_{j, \alpha}\) represent the effective Hamiltonian and jump operators respectively on system \(\alpha\), with \(\alpha \in [A,B]\).
It also follows that {\it any} pure state in the doubled Hilbert space satisfying Eqs.~\eqref{eq:darkstate-1} and \eqref{eq:darkstate-2} necessarily is a purification of the steady state; hence, these equations uniquely determine the steady state.  

We next prove a useful lemma regarding
certain bilinear expectation values in the purification.

\begin{lemma}
\label{lemma:z}
Consider a Lindbladian with jump operators \(\{\hat c_k\}\) that has hTRS, and consequently a purification of the steady state \(\ket{\Psi}\) satisfying Eq.~\eqref{eq:darkstate-1},\eqref{eq:darkstate-2}.
Then, we have
\begin{equation}
\langle \Psi | \hat c^{\dag}_{i,A} \hat c_{j,A} |\Psi \rangle  = \langle \Psi | \hat c^{\dag}_{i,B} \hat c_{j,B} |\Psi \rangle  \quad \forall i,j.
\end{equation}
\end{lemma}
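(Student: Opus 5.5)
We prove the identity by moving jump operators between the two copies with the dark-state condition Eq.~\eqref{eq:darkstate-1}, using the fact that operators acting on different copies commute. The only inputs are Eq.~\eqref{eq:darkstate-1} and the tensor-product structure $\hat O_A = \hat O\otimes\mathbb{I}_B$, $\hat O_B = \mathbb{I}_A\otimes\hat O$. Eq.~\eqref{eq:darkstate-2} is not needed.

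\textbf{Step 1 (kets).} From Eq.~\eqref{eq:darkstate-1} we have $\hat c_{j,A}\ket{\Psi} = \hat c_{j,B}\ket{\Psi}$ for every $j$.

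\textbf{Step 2 (bras).} Taking the Hermitian conjugate of the same relation for index $i$ gives $\bra{\Psi}\hat c^{\dag}_{i,A} = \bra{\Psi}\hat c^{\dag}_{i,B}$.

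\textbf{Step 3 (chain of equalities).} We substitute these relations one at a time:
\begin{align}
\langle \Psi | \hat c^{\dag}_{i,A} \hat c_{j,A} |\Psi \rangle
&= \langle \Psi | \hat c^{\dag}_{i,A} \hat c_{j,B} |\Psi \rangle
= \langle \Psi | \hat c_{j,B} \hat c^{\dag}_{i,A} |\Psi \rangle \nonumber\\
&= \langle \Psi | \hat c_{j,B} \hat c^{\dag}_{i,B} |\Psi \rangle .
\end{align}
The first equality is Step 1 applied on the right. The second holds because $[\hat c^{\dag}_{i,A},\hat c_{j,B}]=0$, since the two operators act on different tensor factors. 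The third is Step 2 applied on the left.

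\textbf{The operator-ordering problem.} The result of Step 3 has $\hat c_{j,B}$ to the left of $\hat c^{\dag}_{i,B}$, which is the reverse of the order in the lemma. A single copy-B operator cannot be commuted past another copy-B operator, so a naive substitution does not give the stated form. This is the step that needs care.

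\textbf{Recovering the correct order.} Instead of doing both substitutions on the same side, we transfer the whole product at once using the bra relation. Applying Step 2 directly to the left gives
\begin{equation}
\langle \Psi | \hat c^{\dag}_{i,A} \hat c_{j,A} |\Psi \rangle
= \langle \Psi | \hat c^{\dag}_{i,B} \hat c_{j,A} |\Psi \rangle .
\end{equation}
Now $\hat c^{\dag}_{i,B}$ and $\hat c_{j,A}$ commute, because they act on different copies. We leave them in place and act with $\hat c_{j,A}$ on the ket, using Step 1: $\hat c_{j,A}\ket{\Psi} = \hat c_{j,B}\ket{\Psi}$. This gives
\begin{equation}
\langle \Psi | \hat c^{\dag}_{i,B} \hat c_{j,A} |\Psi \rangle
= \langle \Psi | \hat c^{\dag}_{i,B} \hat c_{j,B} |\Psi \rangle ,
\end{equation}
with the operators in the desired order.

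So the lemma follows from one bra substitution followed by one ket substitution. The only subtlety is never commuting two operators that live on the same copy.
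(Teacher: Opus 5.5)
Your final argument is correct and is the paper's proof. You replace $\hat c^{\dag}_{i,A}$ by $\hat c^{\dag}_{i,B}$ on the bra and then $\hat c_{j,A}$ by $\hat c_{j,B}$ on the ket. The paper does the same two substitutions in the opposite order, written as the telescoping sum $\bra{\Psi}\hat c^{\dag}_{i,A}(\hat c_{j,A}-\hat c_{j,B})\ket{\Psi}+\bra{\Psi}(\hat c^{\dag}_{i,A}-\hat c^{\dag}_{i,B})\hat c_{j,B}\ket{\Psi}$. The order of the substitutions does not matter.

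You should, however, delete Step 3, because its third equality is wrong. It is not ``Step 2 applied on the left.'' After your commutation, $\hat c^{\dag}_{i,A}$ sits next to the ket, not the bra. Eq.~\eqref{eq:darkstate-1} gives no relation of the form $\hat c^{\dag}_{i,A}\ket{\Psi}=\hat c^{\dag}_{i,B}\ket{\Psi}$.

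The conclusion of that chain is in fact false. It differs from the correct value by $\bra{\Psi}[\hat c_{j,B},\hat c^{\dag}_{i,B}]\ket{\Psi}$. For bosonic loss, $\hat c_j=\sqrt{\kappa}\,\hat a_j$, this equals $\kappa\,\delta_{ij}\braket{\Psi}{\Psi}\neq 0$.

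So the ``operator-ordering problem'' is not a real obstacle; it is created by this invalid step. Applying Step 2 directly to your second expression $\bra{\Psi}\hat c^{\dag}_{i,A}\hat c_{j,B}\ket{\Psi}$, without commuting, already gives $\bra{\Psi}\hat c^{\dag}_{i,B}\hat c_{j,B}\ket{\Psi}$. That is exactly the paper's route.
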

\begin{proof}
We can rewrite
\begin{equation}
\langle \Psi | \hat c^{\dag}_{i,A} \hat c_{j,A} |\Psi \rangle - \langle \Psi | \hat c^{\dag}_{i,B} \hat c_{j,B} |\Psi \rangle = \langle \Psi | \hat c^{\dag}_{i,A} \left(\hat c_{j,A} - \hat c_{j,B} \right) | \Psi \rangle +  \langle \Psi | \left( \hat c^{\dag}_{i,A} -  \hat c^{\dag}_{i,B}\right) \hat c_{j,B} | \Psi \rangle.
\end{equation}
The first term on the RHS vanishes from Eq.~\eqref{eq:darkstate-1}, and the second term vanishes from its conjugate \(\bra{\Psi} (\hat c^{\dag}_{i,A} - \hat c^{\dag}_{i,B}) = 0\). 
This proves the lemma.
\end{proof}

\section{Proof of susceptibility symmetry in DTFIM}
\label{section:dtfim-reciprocity-derivation}
\label{tfim:derivation}

We now prove the susceptibility symmetry Eq.~\eqref{eq:tfim-reciprocity} which enables EP in the DTFIM.
We first state and prove Theorem~\ref{theorem:tfim}, which gives a simple way to calculate changes in the DTFIM steady state due to changes in drive amplitude and phases.  
The theorem shows that hTRS allows one to non-trivially extend a simple gauge covariance of our Lindbladian to a much more general relation.  
After proving the theorem in App.~\ref{app:tfim-htrs} we use it in App.~\ref{app:dtfim-proof} to prove Eq.~\eqref{eq:tfim-reciprocity}.

In what follows, we consider a slightly generalized version of the DTFIM model in the main text (c.f.~Eq.~\eqref{eq:dtfim-ham}), where we now allow the drive amplitudes to be complex.  The effective Hamiltonian and jump operators are
\begin{align}
\label{eq:dtfim-heff}
    \hat H_{\rm eff}(\vec\Omega)&=\frac{J}{4N}\sum_{i\neq j}\hat\sigma^i_z\hat\sigma^j_z
+\frac12\sum_j\left(\Delta\hat\sigma^j_z+\Omega_j\hat\sigma^j_+ +\Omega^*_j\hat\sigma^j_-\right)
-\frac{i\kappa}{2}\sum_j\hat\sigma^j_+\hat\sigma^j_-,
\\ \hat c_j&=\sqrt\kappa\,\hat\sigma^j_- .
\label{eq:dtfim-ck}
\end{align}
Throughout the derivation, we will work with the formulation of hTRS described in App.~\ref{appendix:htrs}, which is based on a purification $\ket \Psi$ of the steady state and system operators in the corresponding doubled Hilbert space.
Since the DTFIM has hTRS, this purification satisfies the conditions 
Eq.~\eqref{eq:darkstate-1} and Eq.~\eqref{eq:darkstate-2} for any choice of system parameters.  We explicitly index the purification corresponding to a particular set of drives $\vec \Omega$ as \(  \ket{\Psi (\vec \Omega)}\).
Note that we do not assume these states to be normalized.

To relate steady state purifications for different drive parameters to one another, it will be useful to introduce the operator  \(\hat{Z}_i \equiv \frac{1}{2}\left(\hat \sigma^i_{z, A}  + \hat \sigma^i_{z, B}\right)\)
which represents a total charge in the doubled system.  Exponentiating this operator gives a symmetry operator that transforms lowering and raising operators \(\hat \sigma^i_{\pm,\alpha}\) (here \(\alpha \in [A,B]\)) in a simple manner: these operators are charged with respect to the $\hat{Z}_i$.  
Using the standard commutation relations \([\hat \sigma_z^i, \hat \sigma_{\pm}^{j}] = \pm 2\delta_{ij} \hat \sigma_{\pm}^{i}\)
and Hausdorff's lemma, we have for any complex $\lambda$: 
\begin{equation}
\label{eq:sigz-pm-conjugation}
e^{-\lambda \hat Z_i} \hat{\sigma}_{\pm,\alpha}^{j} e^{\lambda \hat Z_i} =  e^{\mp \delta_{ij}\lambda} \hat \sigma_{\pm,\alpha}^{j}.
\end{equation}
For $\lambda$ purely imaginary, this is a simple $U(1)$ gauge transformation, while for general $\lambda$ there is also a rescaling of the operators.  
This basic property combined with hTRS will let us prove the central theorem that our proof relies on.  We first state this theorem.  

\begin{theorem}
\label{theorem:tfim}
Consider a DTFIM defined by Eq.~\eqref{eq:dtfim-heff} and Eq.~\eqref{eq:dtfim-ck} with some fixed set of drive amplitudes \(\vec{\Omega}\).
Let \(\ket{\Psi (\vec{\Omega})}\) be a purification (not necessarily normalized) of its steady state that satisfies the hTRS conditions in Eq.~\eqref{eq:darkstate-1} and Eq.~\eqref{eq:darkstate-2}.
Then, a second DTFIM with drives \(\Omega'_j = e^{\lambda_j}\Omega_j\) for any parameters \(\lambda_j \in \mathbb C\) (with all other system parameters identical to the first), has a purification (also not normalized) of its steady state \(\ket{\Psi (\vec{\Omega}')}\) given by \(\ket{\Psi (\vec{\Omega}')} = e^{\sum_j \lambda_j \hat Z_j}\ket{\Psi (\vec{\Omega})}\).  This purification
also satisfies the hTRS conditions.
\end{theorem}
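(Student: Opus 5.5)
The plan is to verify directly that $\ket{\Psi'} \equiv e^{\sum_j \lambda_j \hat Z_j}\ket{\Psi(\vec\Omega)}$ satisfies the two hTRS dark-state conditions, Eqs.~\eqref{eq:darkstate-1} and \eqref{eq:darkstate-2}, for the DTFIM with drives $\vec\Omega'$. By the remark in App.~\ref{appendix:htrs}, any pure state in the doubled space obeying these equations is a purification of the steady state. Write $\hat S \equiv e^{\sum_j\lambda_j\hat Z_j}$. This operator is invertible, so $\ket{\Psi'}\neq 0$. Since the $\hat Z_j$ commute with each other, Eq.~\eqref{eq:sigz-pm-conjugation} applies site by site and gives
\begin{equation}
\hat S\,\hat\sigma^j_{\pm,\alpha}\,\hat S^{-1} = e^{\pm\lambda_j}\hat\sigma^j_{\pm,\alpha}.
\end{equation}

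The jump condition comes first. Using $\hat S^{-1}\hat\sigma^j_{-,\alpha}\hat S = e^{\lambda_j}\hat\sigma^j_{-,\alpha}$, we get
\begin{equation}
(\hat c_{j,A}-\hat c_{j,B})\ket{\Psi'} = e^{\lambda_j}\,\hat S\,(\hat c_{j,A}-\hat c_{j,B})\ket{\Psi} = 0 .
\end{equation}
The jump operators do not depend on $\vec\Omega$, so this is exactly Eq.~\eqref{eq:darkstate-1} for the primed model.

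The effective-Hamiltonian condition is the step I expect to be the main obstacle. The naive conjugation $\hat S\,(\hat H_{{\rm eff},A}-\hat H_{{\rm eff},B})(\vec\Omega)\,\hat S^{-1}$ sends $\Omega_j\hat\sigma^j_+\to e^{\lambda_j}\Omega_j\hat\sigma^j_+$, which is correct. However, it sends $\Omega_j^*\hat\sigma^j_-\to e^{-\lambda_j}\Omega_j^*\hat\sigma^j_-$, whereas we need $(e^{\lambda_j}\Omega_j)^* = e^{\lambda_j^*}\Omega_j^*$. These agree only when $\lambda_j$ is imaginary, which is just the trivial gauge covariance.

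To get around this, split $\hat H_{\rm eff} = \hat H_0 + \tfrac12\sum_j(\Omega_j\hat\sigma^j_+ + \Omega^*_j\hat\sigma^j_-)$. Here $\hat H_0$ contains the Ising, detuning and $-\tfrac{i\kappa}{2}\hat\sigma_+\hat\sigma_-$ terms, all of which commute with every $\hat Z_j$. By Eq.~\eqref{eq:darkstate-1}, the difference $\tfrac12\sum_j\Omega^*_j(\hat\sigma^j_{-,A}-\hat\sigma^j_{-,B})$ annihilates $\ket\Psi$ on its own, and by the jump step above it also annihilates $\ket{\Psi'}$ whatever the coefficients are. Hence, for this model, Eq.~\eqref{eq:darkstate-2} holds if and only if
\begin{equation}
\Big[\hat H_{0,A}-\hat H_{0,B}+\tfrac12\sum_j\Omega_j(\hat\sigma^j_{+,A}-\hat\sigma^j_{+,B})\Big]\ket{\Psi}=0 ,
\end{equation}
which is a condition that involves only $\vec\Omega$ and not $\vec\Omega^*$. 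Applying $\hat S$ to this reduced equation and commuting it through yields the same reduced equation with $\Omega_j\to e^{\lambda_j}\Omega_j$, now acting on $\ket{\Psi'}$. Adding back the $\hat\sigma_-$ drive terms with coefficients $(\Omega'_j)^*$, which annihilate $\ket{\Psi'}$, recovers the full condition $(\hat H_{{\rm eff},A}(\vec\Omega')-\hat H_{{\rm eff},B}(\vec\Omega'))\ket{\Psi'}=0$.

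One point to check along the way is that the B-copy operators in Eqs.~\eqref{eq:darkstate-1} and \eqref{eq:darkstate-2} carry the same $\hat Z$-charges as the A-copy ones, i.e.\ that $\hat\sigma_{\pm,B}$ are raised and lowered consistently with the convention fixing $\hat H_{{\rm eff},B}$. This is exactly what Eq.~\eqref{eq:sigz-pm-conjugation} asserts for both $\alpha$, so the argument goes through. Finally, $\ket{\Psi'}$ is nonzero and satisfies both conditions, so it is an (unnormalized) purification of the unique NESS for $\vec\Omega'$, and by construction it satisfies the hTRS conditions, which completes the proof.
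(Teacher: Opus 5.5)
Your proof is correct and follows essentially the same route as the paper: the jump condition transfers via the charge-conjugation identity, and the effective-Hamiltonian condition survives because the only mismatch under conjugation by $e^{\sum_j\lambda_j\hat Z_j}$ is in the $\hat\sigma^j_{-,A}-\hat\sigma^j_{-,B}$ drive terms, which the jump condition annihilates. The paper conjugates the full $\hat H_{\rm eff}(\vec\Omega')$ and then discards an explicit error term $\propto (e^{2{\rm Re}\lambda_j}-1)(\hat\sigma^j_{-,A}-\hat\sigma^j_{-,B})$; you instead remove the $\hat\sigma_-$ terms first and re-add them with the correct coefficients, which is the same argument in a different order.
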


At a heuristic level, Theorem 1 gives us a simple and direct way to calculate how the dissipative steady state changes as drive amplitudes change.  This in turn will let us strongly constrain how steady-state expectation values change as drive parameters vary.  
Note that in the case where the $\lambda_j$ are all imaginary, Theorem~\ref{theorem:tfim} becomes trivial: it just reflects covariance of the original Lindbladian and steady state under $U(1)$ gauge transformations.  Stated explicitly, if we transform the Lindbladian using a unitary $\hat{U} = \exp \left(i/2 \sum_j \varphi_j \hat{\sigma}^j_z \right)$, the steady state transforms the same way.  This unitary simply maps the DTFIM Lindbladian to a new DTFIM Lindbladian with shifted drive phases.  This basic gauge covariance of course does not require hTRS. Theorem 1 is much more powerful: it relates steady states for two sets of drives $\vec \Omega$, $\vec \Omega'$ that differ both in phase {\it and in magnitude}.  As we will see, this crucially requires hTRS.

\subsection{Proof of Theorem~\ref{theorem:tfim} using hTRS}
\label{app:tfim-htrs}
Let \(\ket{\Psi(\vec\Omega)}\) be the purification of a DTFIM steady state with drives \(\vec \Omega\); it thus necessarily satisfies hTRS conditions Eq.~\eqref{eq:darkstate-1} and Eq.~\eqref{eq:darkstate-2}. Let us also define \(\ket{\psi} \equiv e^{\sum_j \lambda_j \hat Z_j}\ket{\Psi(\vec \Omega)}\). We will show that for arbitrary complex \(\lambda_j \in \mathbb{C}\), \(\ket{\psi}\) satisfies the same hTRS conditions for some modified set of drives \(\vec \Omega'\), and is consequently the purification of the DTFIM steady state with these drives, i.e.~\(\ket{\psi} = \ket{\Psi(\vec \Omega')}\).

Using Eq.~\eqref{eq:sigz-pm-conjugation}, we have for some \(\vec{\lambda} \in \mathbb C^N\):
\begin{equation}
e^{-\sum_j \lambda_j \hat Z_j} \left(\hat \sigma_{-,A}^i -  \hat \sigma_{-,B}^i\right)e^{\sum_j \lambda_j \hat Z_j} 
\ket{\Psi (\vec \Omega)} = e^{\lambda_i} \underbrace{\left(\hat \sigma_{-,A}^i -  \hat \sigma_{-,B}^i\right) \ket{\Psi (\vec \Omega)}}_{= 0}.
\end{equation}
The RHS vanishes via Eq.~\eqref{eq:darkstate-1}.
We then multiply both sides on the left by \(e^{\sum_j \lambda_j \hat{Z}_j}\):
\begin{equation}
\label{eq:darkstate-1-psi}
\left(\hat \sigma_{-,A}^i -  \hat \sigma_{-,B}^i\right)e^{\sum_j \lambda_j \hat Z_j} \ket{\Psi (\vec \Omega)} =  0 \implies  \left(\hat \sigma_{-,A}^i -  \hat \sigma_{-,B}^i\right) \ket{\psi} = 0.
\end{equation}
We thus have that \(\ket{\psi}\) satisfies the jump operator hTRS condition in Eq.~\eqref{eq:darkstate-1}.

To show that the second, drive-dependent hTRS condition is also satisfied by \(\ket{\psi}\), we first consider how the single-system effective Hamiltonian (for a set of drives $\Omega'_j = e^{\lambda_j} \Omega_j$)  transforms under our generalized symmetry operator:
\begin{align}
e^{-\sum_j \lambda_j \hat \sigma_z^j/2} \hat H_{\text{eff}}(\vec{\Omega}')e^{\sum_j \lambda_j \hat \sigma_z^j/2} & = \frac{J}{4N} \sum_{i\neq j} \hat{\sigma}_z^i \hat{\sigma}_z^j + \frac{1}{2} \sum_{j=1}^N \left(
        \Delta \hat{\sigma}_z^j  - i \kappa \hat \sigma_+^j \hat \sigma_-^j \right) + \frac{1}{2}\sum_j \left(e^{- \lambda_j} \Omega'_j \hat \sigma_+^j + e^{\lambda_j} {\Omega'_j}^{*}\hat\sigma_-^j \right) \\
        \label{eq:heff-transform}
        & = \hat H_{\text{eff}} (\vec{\Omega}) + \sum_j \frac{\Omega^*_j}{2} \left(e^{2 {\rm Re}\lambda_j} - 1 \right) \hat \sigma_-^j.
\end{align}
The $Z$-only terms are left unchanged, whereas the drive terms are modified.  The transformation scales the coefficients of the raising operator drive terms correctly (changing $\vec \Omega'$ to $\vec \Omega$), but if $\textrm{Re } \lambda_j \neq 0$, it modifies the lowering operator terms incorrectly (last term in Eq.~\eqref{eq:heff-transform}).   

We next note that on the doubled system, Eq.~\eqref{eq:heff-transform} implies:
\begin{align}
e^{-\sum_j \lambda_j \hat Z_j} \left(\hat H_{\text{eff}, A} (\vec{\Omega}') - \hat H_{\text{eff}, B} (\vec{\Omega}')\right) e^{\sum_j \lambda_j \hat Z_j} = & \hat H_{\text{eff}, A} (\vec{\Omega}) - \hat H_{\text{eff}, B} (\vec{\Omega}) \nonumber \\ &+ \sum_j \frac{\Omega^*_j}{2} \left(e^{2 {\rm Re}\lambda_j} - 1 \right) \left(\hat \sigma_{-,A}^j -  \hat \sigma_{-,B}^j\right).
\label{eq:heff-transform-double}
\end{align}
We still have an ``error" term on the RHS of Eq.~\eqref{eq:heff-transform-double}, but note that it involves differences of system A and B lowering operators.  While this term is a non-zero operator, the hTRS conditions Eq.~\eqref{eq:darkstate-1} and Eq.~\eqref{eq:darkstate-2} imply that the RHS of Eq.~\eqref{eq:heff-transform-double} vanishes when acting on the state \(\ket{\Psi (\vec\Omega)}\). 
It follows that the LHS must also annihilate this state:
\begin{equation}
e^{-\sum_j \lambda_j \hat Z_j} \left(\hat H_{\text{eff}, A} (\vec{\Omega}') - \hat H_{\text{eff}, B} (\vec{\Omega}')\right) \underbrace{e^{\sum_j \lambda_j \hat Z_j} \ket{\Psi (\vec\Omega)}}_{= \ket{\psi}} = 0.
\end{equation}
We can now multiply both sides on the left by \(e^{\sum_j \lambda_j \hat Z_j}\), yielding
\begin{equation}
\label{eq:darkstate2-psi}
    \left(\hat H_{\text{eff}, A} (\vec{\Omega}') - \hat H_{\text{eff}, B} (\vec{\Omega}')\right) \ket{\psi} = 0.
\end{equation}

Eqs.~\eqref{eq:darkstate-1-psi} and~\eqref{eq:darkstate2-psi} show that the state \(\ket{\psi}\)  satisfies the hTRS conditions Eqs.~\eqref{eq:darkstate-1} and~\eqref{eq:darkstate-2} for the drive parameters \(\vec \Omega'\).
It follows that \(\ket \psi\) must necessarily be a purification of the steady state of a DTFIM with drives \(\vec \Omega'\), i.e.~\(\ket \psi = \ket{\Psi(\vec \Omega')}\).
Thus, we have proved Theorem~\ref{theorem:tfim}.

\subsection{Proof of susceptibility symmetry in the DTFIM}
\label{app:dtfim-proof}

We now use Theorem 1 (which describes the flow of the steady state purification $\ket{\Psi(\vec \Omega)}$ under changes of the drive amplitudes $\vec \Omega$) to prove the crucial susceptibility symmetry of the DTFIM.
For concreteness, we fix all DTFIM parameters except the drive amplitudes, and consider how the steady state varies with changes in $\vec \Omega$.  
The corresponding family of DTFIM steady state purifications are denoted \(\ket{\Psi (\vec \Omega)}\).  We will use Theorem~\ref{theorem:tfim} to parameterize these in terms of a single reference state \(\ket{\Psi (\vec \Omega_0)}\); for simplicity, we take this reference set of drives to be non-zero and uniform, $\vec \Omega_0 = (\bar{\Omega},\bar {\Omega},... )$.
We then parameterize $\Omega$ in terms of a vector 
$\vec \lambda$:
\begin{equation}
\label{eq:lambda-tfim}
    \lambda_j \equiv  \log  (\Omega_j/ \bar \Omega)
\end{equation}
Our family of steady state purifications can then be written as:
\begin{align}
  \ket{ \Psi(\vec \Omega)} & =   e^{\sum_j  \lambda_j \hat Z_j} \ket{\Psi (\vec \Omega_0)}.
\end{align}

The above parameterization fixes the normalization of our family of states \(\mathcal N(\vec \Omega) = \braket{\Psi (\vec \Omega)}{\Psi ( \vec \Omega)}\)
in terms of the reference state:
\begin{align}
\label{eq:norm-tfim}
    \mathcal N( \vec \Omega) & = \bra{ \Psi (\vec \Omega_0)} e^{2\sum_j ({\rm Re} \lambda_j) \hat Z_j} \ket{\Psi ( \vec \Omega_0)}.
\end{align}
Differentiating this equation for state normalization with respect to $\textrm{Re } \lambda_i$ then yields:
\begin{align}
\label{eq:norm-derivative-tfim}
    \frac{\partial \mathcal N( \vec \Omega)}{\partial({\rm Re}\lambda_i)} &  = 
    2 \bra{\Psi (\vec \Omega)} \hat Z_i \ket{\Psi (\vec \Omega)}.
\end{align}
Next, recall from Lemma~\ref{lemma:z} that the RHS here can be written in terms of a single system expectation value: we have \(\bra{\Psi (\vec \Omega)} \hat Z_i \ket{\Psi (\vec \Omega)} = \bra{\Psi (\vec \Omega)} \hat \sigma_{z,A}^i \ket{\Psi (\vec \Omega)}\).
To express this as a normalized expectation value $\langle \hat \sigma_z^i \rangle$, we divide both sides by $\mathcal{N}$, yielding:
\begin{equation}
\label{eq:logN}
\frac{\partial \log \mathcal N(\vec \Omega)}{\partial ({\rm Re}\lambda_i)} = 2\langle \hat \sigma_z^i \rangle.
\end{equation}
We have thus established that the normalization function $\mathcal{N}(\vec \Omega)$ is a generating function for $\hat \sigma_z$ expectation values in the steady state.

The next step is to connect the Eq.~\eqref{eq:logN} to the observables that appear in the relevant static susceptibility.  
Writing the complex drives \(\Omega_j = \omega_je^{i \varphi_j}\) for \(\omega_j \in \mathbb R\), the Heisenberg equation (obtained from the DTFIM Lindbladian) for \(\hat \sigma_z\) is:
\begin{equation}
    \partial_t \langle \hat \sigma_z^i \rangle =  \omega_i\left(\cos \varphi_i \langle \hat \sigma_y^i \rangle + \sin \varphi_i \langle \hat \sigma_x^i \rangle\right) - \kappa (\langle \hat \sigma_z^i \rangle + 1).
\end{equation}
Defining the observable \(\hat Y_i = \hat \sigma_y^i \cos \varphi_i   + \hat \sigma_x^i  \sin \varphi_i\), we can solve this EOM in the steady state (\(\partial_t \langle \hat{\sigma}^j_z \rangle=0\)), yielding:
\begin{equation}
\label{eq:ss-yz}
     \langle \hat Y_i \rangle = \frac{\kappa}{\omega_i} (\langle \hat \sigma_z^i \rangle + 1).
\end{equation}
We see that in the steady state, this exact relation tells us that $\hat{Y}_i$ expectation values are completely determined by $\hat \sigma_z^i$ expectation values.  
Note that the observable \(\hat Y_i\) is orthogonal in the transverse plane to the driving term in the Hamiltonian Eq.~\eqref{eq:dtfim-heff}, given by \(\omega_i \hat X_i/2\) with \(\hat X_i \equiv \hat \sigma_x^i \cos \varphi_i - \hat \sigma_y^i \sin \varphi_i\).
For the case of real drives in the main text (i.e.~\(\varphi_j = 0\)), these observables simply become \(\hat Y_i \to \hat \sigma_y^i\) and \(\hat X_i \to \hat \sigma_x^i\).

Combining Eqs.~\eqref{eq:logN} and~\eqref{eq:ss-yz} lets us define a scalar function \(\mathcal G(\vec \lambda) \) such that we can express \(\langle \hat Y_i \rangle\) as its parametric derivative with respect to~\({\rm Re}\lambda_i\).  We have:
\begin{align}
\mathcal G \equiv \frac{\kappa}{2}\left( \log \mathcal N + 2\sum_j (\rm Re \lambda_j)\right) \quad \implies  \frac{\partial \mathcal G}{\partial ({\rm Re} \lambda_i)}    
= \kappa \left( \langle \hat \sigma_z^i \rangle + 1 \right) 
= \omega_i \langle \hat Y_i \rangle.
\end{align}
Finally,  since we have \(\partial_{{\rm Re} \lambda_i} = \omega_i\partial_{\omega_i}\), this reduces to
\begin{equation}
\label{eq:tfim-yi-scalar-derivative}
 \langle \hat Y_i \rangle = \frac{\partial \mathcal G}{\partial \omega_i} \qquad \mathcal G = \frac{\kappa}{2} \log \left(\mathcal N \prod_j \omega_j^2 \right).
\end{equation}
Then, since second derivatives of this scalar \(\mathcal G\) trivially commute, we have
\begin{equation}
\label{eq:gen-tfim-reciprocity}
    \frac{\partial^2 \mathcal G}{\partial \omega_j \partial \omega_i} =  \frac{\partial^2 \mathcal G}{\partial \omega_i \partial \omega_j} \implies \frac{\partial \langle \hat Y_i \rangle}{\partial \omega_j} = \frac{\partial \langle \hat Y_j \rangle}{\partial \omega_i}
\end{equation}
and we are done.
We now note a minor caveat: in the parametrization of \(\Omega_j\) in Eq.~\eqref{eq:lambda-tfim}, we assume \(\omega_j \neq 0\).
As such, Eq.~\eqref{eq:gen-tfim-reciprocity} only presently holds for this case.
We can however show by a simple continuity argument that it holds for all \(\omega_j \in \mathbb R\).
First, we note the observable \(\langle\hat Y_i\rangle\) is a regular function of the drives \(\omega_i\) (since \(\langle\hat \sigma_z^i\rangle + 1 \sim \mathcal O(\omega_i^2)\)).
Then, since the Liouvillian gap is non-zero and steady state is unique for any finite \(N\)~\cite{roberts2023exact}, \(\hat \rho_{\rm ss}\) is an analytic function of the drives.
Therefore, the parametric derivatives of \(\langle \hat Y_i \rangle\) in  Eq.~\eqref{eq:gen-tfim-reciprocity} are continuous, and since the equality holds for the open, dense set \(\omega_j \neq 0, \omega_j \in \mathbb R\), by continuity it must hold for all \(\omega_j \in \mathbb{R}\).

Eq.~\eqref{eq:gen-tfim-reciprocity} is the most general form of the susceptibility symmetry for arbitrary complex drives \(\vec{\Omega}\).
In the main text, we work in the case where the drives are real (i.e.~\(\Omega_j = \omega_j\)), and this reduces to exactly Eq.~\eqref{eq:tfim-reciprocity}.

\section{Proof of susceptibility symmetry in bosonic lattice with quadratic driving}
\label{section:bosonic-proof}
In this section, we prove the susceptibility symmetry Eq.~\eqref{eq:bosonic-recRelation}, enabling EP in the bosonic lattice model (see Sec.~\ref{section:bosonic-lattice} for details about the model).
The proof is structured similarly to the one in App.~\ref{section:dtfim-reciprocity-derivation} for the DTFIM.
We will first state Theorem~\ref{theorem:bosonic}, which relates steady states of different instances of the bosonic model, corresponding to different choices of the two-photon drive amplitudes (encoded in the symmetric matrix $M_{ij}$).
Similar to Theorem~\ref{theorem:tfim}, the central idea here is that hTRS lets us extend a trivial gauge covariance of the Lindbladian to a much more general statement.
Then, in App.~\ref{app:bosonic-proof-final}, we use Theorem~\ref{theorem:bosonic} to prove the susceptibility symmetry required to do EP in this system.

The most general description of the system is given by the Lindbladian Eq.~\eqref{eq:bosonic-lind}. 
The effective Hamiltonian is given by
\begin{equation}
    \label{eq:bosonic-heff}
    \hat H_{\rm eff} (M) =   \frac{U}{N} \left( \sum_j \hat{n}_j\right)^2 - \left(\Delta + \frac{i\kappa}{2}\right) \sum_j \hat{n}_j + \sum_{i,j}  \left(M_{ij}\hat{a}_i^{\dag}\hat{a}_j^{\dag} + \text{h.c.}\right)
\end{equation}
where \(M\) is an arbitrary complex symmetric matrix which we refer to as the driving matrix.
As in App.~\ref{section:dtfim-reciprocity-derivation}, we will work with the formulation of hTRS described in App.~\ref{appendix:htrs}, based on the steady state purification \(\ket{\Phi}\) and doubled system operators.
Since we are interested in how the steady state transforms as we change the driving matrix \(M\), we explicitly index the purification corresponding to a particular driving matrix as \(\ket{\Phi} \equiv \ket{\Phi (M)}\) for clarity.

Similar to the DTFIM proof, we introduce a set of operators on the doubled-system Hilbert space that will generate transformations of $M$.  In this case, there is a generator for every pair of lattice sites $p,q$ given by the hopping operator \(\hat Z_{pq} \equiv \frac{1}{2}\left( \hat a^{\dag}_{p,A} \hat a_{q, A} + \hat a^{\dag}_{p,B} \hat a_{q,B}\right)\).  Note that since \(\hat Z_{pq}^{\dag} = \hat Z_{qp}\), only the diagonal generators \(\hat Z_{pp}\) are Hermitian; these are analogous to the charge operators \(\hat Z_j\) in the DTFIM (see App.~\ref{section:dtfim-reciprocity-derivation}).

It will be useful to understand how the exponential of this operator transforms the bosonic lowering operator \(\hat a_{i,\alpha}\) (where \(\alpha \in [A,B]\) indexes the subsystem we use).
\begin{lemma}
\label{lemma:bosonic-rotation}
    For an arbitrary matrix \(\lambda \in \mathbb{C}^{N \times N}\), define \(S \equiv e^{\lambda/2}\). Then, we have 
    \begin{equation}
    e^{-\sum_{pq}\lambda_{pq} \hat Z_{pq}} \hat a_{i,\alpha} e^{\sum_{pq}\lambda_{pq} \hat Z_{pq}} = \sum_{j}S_{ij} \hat a_{j, \alpha}.
    \end{equation}
\end{lemma}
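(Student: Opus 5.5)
The plan is to reduce the statement to a linear-algebra fact about quadratic generators, using the canonical commutation relations and a Hausdorff-type expansion. I will define the operator $\hat K \equiv \sum_{pq}\lambda_{pq}\hat Z_{pq}$. Since $\hat Z_{pq}$ is a sum of an A-part and a B-part, and A and B operators commute, $\hat K = \hat K_A + \hat K_B$ with $[\hat K_A,\hat K_B]=0$. The B-part commutes with $\hat a_{i,A}$ and vice versa. It therefore suffices to treat a single copy $\alpha$ with $\hat K_\alpha = \frac12\sum_{pq}\lambda_{pq}\hat a^\dagger_{p,\alpha}\hat a_{q,\alpha}$.

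The key step is the single commutator. Using $[\hat a_i,\hat a^\dagger_p \hat a_q]=\delta_{ip}\hat a_q$, I get
\begin{equation}
[\hat a_{i,\alpha},\hat K_\alpha] = \frac12\sum_q \lambda_{iq}\hat a_{q,\alpha}.
\end{equation}
Thus the map $\mathrm{ad}$ defined by $\hat X \mapsto [\hat X,\hat K]$ sends the linear span of the lowering operators $\{\hat a_{j,\alpha}\}$ into itself. On the coefficient vectors this map acts as multiplication by the matrix $\lambda/2$: if $\hat X = \sum_j c_j \hat a_j$, then $[\hat X,\hat K] = \sum_q (c^T\lambda/2)_q\,\hat a_q$.

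Next, I will apply Hausdorff's lemma in the form
\begin{equation}
e^{-\hat K}\hat a_i e^{\hat K} = \sum_{n\ge 0}\frac{1}{n!}[\cdots[\hat a_i,\hat K],\dots,\hat K],
\end{equation}
where the sign convention is the one matching $e^{-\hat K}\hat X e^{\hat K}$. Iterating the commutator $n$ times gives $\sum_j ((\lambda/2)^n)_{ij}\hat a_j$, so the series sums to $\sum_j (e^{\lambda/2})_{ij}\hat a_{j,\alpha} = \sum_j S_{ij}\hat a_{j,\alpha}$. I will double-check the sign against Eq.~\eqref{eq:sigz-pm-conjugation} by specializing to diagonal $\lambda$. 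In that case, for a single boson number generator, $e^{-\mu \hat n/2}\hat a e^{\mu \hat n/2} = e^{\mu/2}\hat a$, since $\hat a$ lowers $\hat n$ by one. This has the same structure as the spin case, which confirms the sign convention.

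The main obstacle is rigor rather than algebra, because the bosonic operators are unbounded and $e^{\hat K}$ for non-anti-Hermitian $\lambda$ is not unitary, so Hausdorff's series needs justification. I would handle this by restricting to the dense invariant subspace of finite-photon-number states. $\hat K$ conserves total photon number, so on each fixed-$n$ sector (finite dimensional) all operators are matrices and the exponential is an entire function. On each sector the identity holds exactly, either by the series argument above or by differentiating $F(t)=e^{-t\hat K}\hat a_i e^{t\hat K}$. That derivative gives $F'(t)=\sum_q(\lambda/2)_{iq}F_q(t)$, a linear ODE whose unique solution is $F(t)=\sum_j(e^{t\lambda/2})_{ij}\hat a_j$. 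Here $\hat a_i$ maps the $n$ sector to the $n-1$ sector, and that is all the argument needs. The ODE route avoids convergence concerns entirely, so I would present that version.
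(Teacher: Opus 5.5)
Your proposal is correct and follows essentially the same route as the paper: you compute the single commutator $[\hat a_{i,\alpha},\sum_{pq}\lambda_{pq}\hat Z_{pq}]=\frac12\sum_j\lambda_{ij}\hat a_{j,\alpha}$, note that iterating it acts as powers of $\lambda/2$ on the span of the lowering operators, and resum the Hausdorff series to $e^{\lambda/2}$. Restricting to finite-dimensional photon-number sectors, and the equivalent ODE formulation, adds rigor for the unbounded, non-unitary exponentials; the paper's formal series argument leaves that point implicit.
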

\begin{proof}
From Hausdorff's lemma, we have:
\begin{equation}
e^{-\sum_{pq}\lambda_{pq} \hat Z_{pq}} \hat a_{i,\alpha} e^{\sum_{pq}\lambda_{pq} \hat Z_{pq}} = \hat a_{i,\alpha} -\Big[ \sum_{pq}\lambda_{pq} \hat Z_{pq},\hat a_{i,\alpha} \Big] + \frac{1}{2}\Big[\sum_{pq}\lambda_{pq} \hat Z_{pq}, \Big[\sum_{pq}\lambda_{pq} \hat Z_{pq},\hat a_{i,\alpha} \Big]\Big] + \dots
\end{equation}
Using the standard bosonic commutation relations, we have \(\Big[\hat a_{i,\alpha}, \sum_{pq}\lambda_{pq} \hat Z_{pq} \Big] = \frac{1}{2}\sum_{j}\lambda_{ij} \hat a_{j,\alpha}\).
Then, the series reduces to 
\begin{equation}
    e^{-\sum_{pq}\lambda_{pq} \hat Z_{pq}} \hat a_{i,\alpha} e^{\sum_{pq}\lambda_{pq} \hat Z_{pq}} = \sum_{j}\left(\sum_{n \geq 0} \frac{(\lambda/2)^n}{n!}\right)_{ij} \hat{a}_{j,\alpha} = \sum_{j}(e^{\lambda/2})_{ij} \hat a_{j,\alpha}.
\end{equation}

\end{proof}

Note that we can similarly solve for how the raising operator changes under this transformation, using the commutation relation \(\Big[\hat a^{\dag}_{i,\alpha}, \sum_{pq}\lambda_{pq} \hat Z_{pq} \Big] =-\frac{1}{2}\sum_{j}\lambda_{ji} \hat a^{\dag}_{j,\alpha}\). Explicitly, we have:
\begin{equation}
 e^{-\sum_{pq}\lambda_{pq} \hat Z_{pq}} \hat a^{\dag}_{i,\alpha} e^{\sum_{pq}\lambda_{pq} \hat Z_{pq}} = \sum_j \left(e^{-\lambda^T/2} \right)_{ij} \hat a^{\dag}_{j, \alpha} = \sum_{j}(S^{-T})_{ij} \hat a^\dag_{j, \alpha}
    \end{equation}
where \(S^{-T}\) is shorthand for \((S^{-1})^T\).
Note that when \(\lambda\) is chosen to be anti-Hermitian, \(S\) is a unitary matrix, and Lemma~\ref{lemma:bosonic-rotation} describes a simple \(U(N)\) gauge transformation. However in the general case, where \(S\) is any invertible complex matrix, this transformation also rescales the magnitudes of the driving matrix elements.

We now state our central theorem.

\begin{theorem}
    \label{theorem:bosonic}
    Consider a \(N\) mode bosonic system described by the Lindbladian Eq.~\eqref{eq:bosonic-lind}, with driving characterized by a complex symmetric matrix \(M\). Let \(\ket{\Phi(M)}\) be a purification (not necessarily normalized) of its steady state that satisfies the hTRS conditions Eq.~\eqref{eq:darkstate-1} and Eq.~\eqref{eq:darkstate-2}.
    Then, a second bosonic system with driving matrix \(M' = S M S^T\) for any \(S = e^{\lambda/2}\) with \(\lambda \in \mathbb{C}^{N \times N}\) (and all other system parameters identical to the first) has a purification (also not necessarily normalized) of its steady state \(\ket{\Phi(M')}\) given by \(\ket{\Phi(M')} = e^{\sum_{pq} \lambda_{pq}\hat Z_{pq}}\ket{\Phi(M)}\) that satisfies the hTRS conditions. 
\end{theorem}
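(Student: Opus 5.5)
The proof mirrors that of Theorem~\ref{theorem:tfim}. Set $\hat T \equiv e^{\sum_{pq}\lambda_{pq}\hat Z_{pq}}$ and $\ket{\phi}\equiv \hat T\ket{\Phi(M)}$. The plan is to show that $\ket{\phi}$ satisfies Eqs.~\eqref{eq:darkstate-1} and~\eqref{eq:darkstate-2} for the driving matrix $M'=SMS^T$. Uniqueness of the steady state, together with the remark in App.~\ref{appendix:htrs} that any state obeying these conditions is a purification, then gives $\ket\phi=\ket{\Phi(M')}$. Since $\hat T$ is invertible, $\ket\phi\neq 0$.

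\emph{Jump condition.} By Lemma~\ref{lemma:bosonic-rotation}, $\hat T^{-1}(\hat a_{i,A}-\hat a_{i,B})\hat T=\sum_j S_{ij}(\hat a_{j,A}-\hat a_{j,B})$. Every term annihilates $\ket{\Phi(M)}$, so multiplying on the left by $\hat T$ gives $(\hat a_{i,A}-\hat a_{i,B})\ket\phi=0$ for all $i$.

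\emph{Effective-Hamiltonian condition.} I would compute $\hat T^{-1}\hat H_{{\rm eff},\alpha}(M')\hat T$ term by term, using Lemma~\ref{lemma:bosonic-rotation} and its raising-operator analogue $\hat a^\dagger_{i,\alpha}\mapsto\sum_j(S^{-T})_{ij}\hat a^\dagger_{j,\alpha}$.

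First, the total number $\sum_j\hat n_{j,\alpha}=\sum_j \hat a^\dagger_{j,\alpha}\hat a_{j,\alpha}$ maps to $\sum_{j,k,l}(S^{-T})_{jk}S_{jl}\hat a^\dagger_{k,\alpha}\hat a_{l,\alpha}$. This equals $\sum_k\hat n_{k,\alpha}$ because $(S^{-1}S)_{kl}=\delta_{kl}$. Physically this is because $\hat T$ commutes with total number. Hence the Kerr, detuning and loss terms are invariant.

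Second, the pair-creation term $\sum_{ij}M'_{ij}\hat a^\dagger_i\hat a^\dagger_j$ maps to $\sum_{kl}(S^{-1}M'S^{-T})_{kl}\hat a^\dagger_k\hat a^\dagger_l=\sum_{kl}M_{kl}\hat a^\dagger_k\hat a^\dagger_l$. Here the choice $M'=SMS^T$ is exactly what makes this work.

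Third, the pair-annihilation term $\sum_{ij}M'^*_{ij}\hat a_i\hat a_j$ maps to $\sum_{kl}(S^TM'^*S)_{kl}\hat a_k\hat a_l$. This equals $\sum_{kl}(S^TS^*M^*S^\dagger S)_{kl}\hat a_k\hat a_l$, which differs from the $M^*$ term whenever $S$ is not unitary. This is the analogue of the error term in Eq.~\eqref{eq:heff-transform}. Writing $K\equiv S^TS^*M^*S^\dagger S$, the conjugated doubled operator therefore equals
\[
\hat H_{{\rm eff},A}(M)-\hat H_{{\rm eff},B}(M)+\sum_{kl}(K-M^*)_{kl}\left(\hat a_{k,A}\hat a_{l,A}-\hat a_{k,B}\hat a_{l,B}\right).
\]

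The main obstacle is showing that this residual term annihilates $\ket{\Phi(M)}$. Unlike the DTFIM, it is quadratic rather than linear in the jump-operator differences. I would handle it with the factorization
\[
\hat a_{k,A}\hat a_{l,A}-\hat a_{k,B}\hat a_{l,B}=\hat a_{k,A}(\hat a_{l,A}-\hat a_{l,B})+(\hat a_{k,A}-\hat a_{k,B})\hat a_{l,B}.
\]
The first piece kills $\ket{\Phi(M)}$ directly. For the second, $\hat a_{l,B}$ commutes with $\hat a_{k,A}-\hat a_{k,B}$, so it can be moved to the left, after which the difference acts on $\ket{\Phi(M)}$ and gives zero.

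It follows that $\hat T^{-1}(\hat H_{{\rm eff},A}(M')-\hat H_{{\rm eff},B}(M'))\ket\phi=0$. Multiplying on the left by $\hat T$ gives Eq.~\eqref{eq:darkstate-2} for $M'$, which completes the argument. One subtlety, the unbounded operators and domains of $\hat T$ on the bosonic Fock space, I would treat formally, as the paper does throughout.
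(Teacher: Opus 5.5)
Your proof is correct and follows the paper's argument essentially step for step: the jump condition via Lemma~\ref{lemma:bosonic-rotation}, term-by-term conjugation of $\hat H_{\rm eff}$ with the only error coming from the pair-annihilation term, and then showing that this error annihilates $\ket{\Phi(M)}$. The one cosmetic difference is how you dispose of the residual. The paper writes it as $\sum_{pq}R_{pq}(\hat a_{p,A}+\hat a_{p,B})(\hat a_{q,A}-\hat a_{q,B})$, which relies on $R$ being symmetric so the cross terms cancel. Your splitting $\hat a_{k,A}(\hat a_{l,A}-\hat a_{l,B})+\hat a_{l,B}(\hat a_{k,A}-\hat a_{k,B})$ uses only that annihilation operators commute, so it works for each pair $(k,l)$ separately and never needs that symmetry.
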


This theorem establishes the relationship between the steady states of different instances of the bosonic model, as Theorem~\ref{theorem:tfim} does for the DTFIM.
Furthermore, there is also a limit in which Theorem~\ref{theorem:bosonic} becomes trivial.
Explicitly, when \(\lambda\) is anti-Hermitian, and consequently \(S\) is a unitary matrix, Theorem~\ref{theorem:bosonic} is a statement of the covariance of the Lindbladian and its steady state under a \(U(N)\) gauge transformation.
Any unitary matrix \( U\) transforms the bosonic lattice Lindbladian with driving matrix \(M\) to a new bosonic lattice Lindbladian with driving matrix \(M' = UMU^T\), and the steady state is transformed by the corresponding unitary operator. 
This covariance does not require a notion of hTRS.
However, in full generality, Theorem~\ref{theorem:bosonic} gives us the relation between steady states for two driving matrices related by {\it any} complex matrix \(\lambda\), with \(S = e^{\lambda/2}\): this is a much more powerful relation that is enabled by hTRS.

\subsection{Proof of Theorem~\ref{theorem:bosonic} using hTRS}
\label{app:bosonic-htrs}

Let \(\ket{\Phi(M)}\) be a purification of the bosonic lattice model steady state with driving matrix \(M\) that necessarily satisfies the hTRS conditions Eq.~\eqref{eq:darkstate-1} and Eq.~\eqref{eq:darkstate-2}.
Let us also define \(\ket \phi \equiv e^{\sum_{pq} \lambda_{pq} \hat Z_{pq}} \ket{\Phi(M)}\).
We now show that for arbitrary \(\lambda \in \mathbb{C}^{N \times N}\), \(\ket \phi\) satisfies these hTRS conditions for a driving matrix \(M' = SMS^T\). This \(\ket \phi\) is then necessarily a purification of the bosonic lattice steady state with this driving matrix, i.e.~\(\ket \phi = \ket{\Phi(M')}\).

By Lemma~\ref{lemma:bosonic-rotation}, we have
\begin{equation}
    e^{-\sum_{pq}\lambda_{pq}\hat Z_{pq}}(\hat a_{i,A} - \hat a_{i, B}) \underbrace{e^{\sum_{pq}\lambda_{pq}\hat Z_{pq}} \ket{\Phi(M)}}_{= \ket{\phi}}  = \sum_j S_{ij} \underbrace{\left( \hat a_{j,A} - \hat a_{j, B}\right) \ket{\Phi(M)}}_{=0}.
\end{equation}
The RHS is trivial by Eq.~\eqref{eq:darkstate-1} for \(\ket{\Phi(M)}\).
Then, we can multiply both sides on the left by \(e^{\sum_{pq}\lambda_{pq}\hat Z_{pq}}\), yielding
\begin{equation} 
\label{eq:darkstate-1-phi}
    (\hat a_{i,A} - \hat a_{i, B}) \ket{\phi} = 0.
\end{equation}
We have thus proved the first hTRS relation for the state \(\ket \phi\).
For the second condition, it will be useful to establish a relation analogous to Lemma~\ref{lemma:bosonic-rotation} for the single system effective Hamiltonian with the driving matrix \(M'\).
We do this term by term. 
First, we show that the total occupation is invariant under such a transformation:
\begin{equation}
    e^{-\frac{1}{2}\sum_{pq}\lambda_{pq}\hat a^\dag_p \hat a_q} \left(\sum_{j} \hat n_j\right) e^{\frac{1}{2}\sum_{pq}\lambda_{pq}\hat a^\dag_p \hat a_q} = \sum_{jpq}(S^{-T})_{jp}S_{jq}\,\hat a^\dag_p\hat a_q
=\sum_{pq}(S^{-1}S)_{pq}\,\hat a^\dag_p\hat a_q= \sum_{j} \hat n_j.
\end{equation}
The first driving term, with pair raising operators, simply shifts \(M'\) to \(M\) under this transformation:
\begin{align}
    e^{-\frac{1}{2}\sum_{pq}\lambda_{pq}\hat a^\dag_p \hat a_q} \left(\sum_{ij} M'_{ij} \hat a^\dag_i \hat a^{\dag}_j\right) e^{\frac{1}{2}\sum_{pq}\lambda_{pq}\hat a^\dag_p \hat a_q} & = \sum_{ij}M'_{ij}\left(\sum_k(S^{-T})_{ik}\hat a^\dag_k\right)
\left(\sum_l(S^{-T})_{jl}\hat a^\dag_l\right) \\
    & = \sum_{ijkl}(S^{-1})_{ki}M'_{ij} (S^{-T})_{jl}\hat a^\dag_k \hat a^\dag_l\\ 
    & = \sum_{kl} (S^{-1} (SMS^T)S^{-T})_{kl} \hat a^\dag_k \hat a^\dag_l \\
    & = \sum_{kl} M_{kl} \hat a^\dag_k \hat a^\dag_l.
\end{align}
However, as in the DTFIM, the lowering terms change non-trivially:
\begin{align}
    e^{-\frac{1}{2}\sum_{pq}\lambda_{pq}\hat a^\dag_p \hat a_q} \left(\sum_{ij} M'^{*}_{ij} \hat a_i \hat a_j\right) e^{\frac{1}{2}\sum_{pq}\lambda_{pq}\hat a^\dag_p \hat a_q} & = \sum_{ijkl} S_{ik}M'^*_{ij} S_{jl} \hat a_k \hat a_l \\
    & = \sum_{kl} (S^T M'^* S)_{kl} \hat a_k \hat a_l \\
    & = \sum_{kl} (S^T S^*M^*S^{*T} S)_{kl} \hat a_k \hat a_l.
\end{align}
The matrix \(S^T S^*M^*S^{*T} S\) does not generically equal \(M^*\), and so this provides an extra `error' term when transforming the effective Hamiltonian.
We can explicitly define this `error' matrix \(R \equiv S^T S^*M^*S^{*T} S - M^*\), and note crucially that it is a symmetric matrix, since \(M^*\) is symmetric. We can now use the above work to show:
\begin{align}
\label{eq:bosonic-heff-transform}
    e^{-\sum_{pq} \lambda_{pq} \hat Z_{pq}} \left(\hat H_{\rm eff,A} (M') - \hat H_{\rm eff,B} (M') \right) e^{\sum_{pq} \lambda_{pq} \hat Z_{pq}} & = \hat H_{\rm eff,A} (M) - \hat H_{\rm eff,B} (M) \nonumber \\ & + \sum_{pq} R_{pq} \left(\hat a_{p, A} \hat a_{q,A} - \hat a_{p, B} \hat a_{q,B} \right).
\end{align}
Note that this `error' matrix \(R\) is trivial for all \(M\) exactly in the case where \(S\) is unitary, since for unitary \(S\), we would have \(S^T = (S^{-1})^*\), and \(S^T S^*M^*S^{*T} S = M^*\).
While for arbitrary \(S\) this error term is generically non-zero, it vanishes when acting on the state \(\ket{\Phi(M)}\):
\begin{align}
    \sum_{pq} R_{pq} \left(\hat a_{p, A} \hat a_{q,A} - \hat a_{p, B} \hat a_{q,B} \right) \ket{\Phi(M)} = \sum_{pq} R_{pq} \left(\hat a_{p, A} + \hat a_{p,B}\right) \underbrace{\left( \hat a_{q,A} - \hat a_{q, B}  \right) \ket{\Phi(M)}}_{=0}
\end{align}
where the cross terms vanish in the sum on the RHS since \(R\) is symmetric.
Crucially, we have that the RHS vanishes from the hTRS condition Eq.~\eqref{eq:darkstate-1}.
Indeed, combined with the other hTRS condition Eq.~\eqref{eq:darkstate-2}, this implies that the entire RHS of Eq.~\eqref{eq:bosonic-heff-transform} annihilates \(\ket{\Phi(M)}\).
Then, applying both sides of Eq.~\eqref{eq:bosonic-heff-transform} to the state \(\ket{\Phi(M)}\) yields:
\begin{equation}
    e^{-\sum_{pq} \lambda_{pq} \hat Z_{pq}} \left(\hat H_{\rm eff,A} (M') - \hat H_{\rm eff,B} (M') \right) \underbrace{e^{\sum_{pq} \lambda_{pq} \hat Z_{pq}} \ket{\Phi(M)}}_{= \ket \phi}  = 0.
\end{equation}
Multiplying both sides by \( e^{\sum_{pq} \lambda_{pq} \hat Z_{pq}}\), we have
\begin{equation}
\label{eq:darkstate-2-phi}
\left(\hat H_{\rm eff,A} (M') - \hat H_{\rm eff,B} (M') \right)\ket \phi = 0.
\end{equation}
Thus, in Eqs.~\eqref{eq:darkstate-1-phi} and~\eqref{eq:darkstate-2-phi} we have shown that the state \(\ket \phi\) satisfies the hTRS conditions Eqs.~\eqref{eq:darkstate-1} and~\eqref{eq:darkstate-2} for the driving matrix \(M'\). The state \(\ket \phi\) is then necessarily the purification of the bosonic lattice steady state for this driving matrix, i.e.~\(\ket \phi = \ket{\Phi(M')}\), and we have proven Theorem~\ref{theorem:bosonic}.

\subsection{Susceptibility symmetry in bosonic lattice}
\label{app:bosonic-proof-final}

We now use Theorem~\ref{theorem:bosonic} to prove the susceptibility symmetry in the bosonic model.
We will proceed as follows.
In App.~\ref{app:bosonic-diagonal}, we will use Theorem~\ref{theorem:bosonic} to prove a version of the susceptibility symmetry for the simple case of diagonal driving matrices.
 Then, in App.~\ref{app:bosonic-full}, we will show how this result generalizes to arbitrary symmetric driving matrices.

\subsubsection{Diagonal driving matrices}
\label{app:bosonic-diagonal}

Let us fix all parameters of the bosonic lattice Lindbladian except the driving matrix.
We assume the driving matrix \(D\) to be a diagonal matrix: \(D \equiv \text{diag}(\vec d)\) with non-zero entries \(d_p \neq 0\).
We denote the  corresponding family of purifications of the bosonic lattice steady states as \(\ket{\Phi(D)}\).
As in the case of the DTFIM, we can define a normalization for this family of states \(\mathcal N(D) = \braket{\Phi(D)}{\Phi(D)}\), and our goal is to parameterize this in terms of some reference state.
We will find that this simple case of diagonal driving matrices is exactly analogous to the DTFIM, with diagonal entries \(d_p\) playing the role of the drives \(\Omega_j\).
Indeed, the work in this section closely mirrors App.~\ref{app:dtfim-proof}.

First, we note that the effective Hamiltonian for diagonal driving matrices is much simplified:
\begin{equation}
\label{eq:heff-diagonal}
    \hat H_{\rm eff} (D) =  \frac{U}{N} \left( \sum_j \hat{n}_j\right)^2 - \left(\Delta + \frac{i\kappa}{2}\right) \sum_j \hat{n}_j + \sum_{p} d_p \left(\hat{a}_p^{\dag 2}+ \hat a_p^2\right).
\end{equation}

Since we are working with diagonal matrices, a suitable reference state is \(\ket{\Phi(D_0)}\) where \(D_0\) is a uniform diagonal matrix \(D_0 = \text{diag}(\bar{d}, \bar{d},\dots)\) with \(\bar d > 0\).
Then, there trivially always exists some diagonal \(S\) such that \(D = SD_0S^T\). Recall that \(S \equiv e^{\lambda/2}\), and we can therefore parameterize \(\vec{d}\) in terms of \(\lambda\): 
\begin{equation}
\label{eq:lambda-bosonic}
    \lambda_{pq} = \delta_{pq}\log(d_p/\bar{d}).
\end{equation}
We emphasize that \(\lambda\) is now a diagonal matrix.
Then the family of steady state purifications \(\ket{\Phi(D)}\) can be written as:
\begin{equation}
    \ket{\Phi(D)} = e^{\sum_p \lambda_{pp} \hat{Z}_{pp}}\ket{\Phi(D_0)}.
\end{equation}
We can now express our normalization in terms of the reference state as
\begin{equation}
    \mathcal N (D) = \bra{\Phi(D_0)}e^{2\sum_{p} {\rm Re}\lambda_{pp}\hat Z_{pp}}\ket{\Phi(D_0)}.
\end{equation}
This is where the advantage of working with the diagonal driving matrix becomes clear: since the \(\lambda\) matrix is diagonal, all terms in the exponential commute with each other. 
The generators \(\hat Z_{pp}\) are Hermitian charges, and this expression is now exactly analogous to Eq.~\eqref{eq:norm-tfim} in the DTFIM case, with the \(d_p\) playing the role of the drives \(\Omega_j\).
We can now differentiate with respect to \({\rm Re}\lambda_{pp}\), yielding
\begin{equation}
    \frac{\partial \mathcal N(D)}{\partial {\rm Re}\lambda_{pp}} = 2\bra{\Phi(D)} \hat Z_{pp} \ket{\Phi(D)}.
\end{equation}
Note now that from Lemma~\ref{lemma:z}, we know that \(\bra{\Phi(D)} \hat Z_{pp} \ket{\Phi(D)} =\bra{\Phi(D)} \hat n_{p,A} \ket{\Phi(D)}\). Then we can express normalized expectation values on the single system by dividing both sides by \(\mathcal N\), yielding
\begin{equation}
\label{eq:bosonic-logn}
    \frac{\partial \log \mathcal N(D)}{\partial {\rm Re}\lambda_{pp}} = 2\langle \hat n_p \rangle_D
\end{equation}
where we use the subscript \(\langle.\rangle_D\) to denote the expectation value in the state \(\ket{\Phi(D)}\). 
We have thus shown that the normalization \(\mathcal N(D)\) is the generating function for \(\hat n_p\) expectation values in the steady state \(\ket{\Phi(D)}\).

As in the case of the DTFIM, we now use the Heisenberg equation of motion for \(\hat n_p\) to connect this to our observables of interest:
\begin{equation}
    \partial_t \langle \hat n_p \rangle_D = 2id_p\left( \langle \hat a_p^2\rangle_D - \langle\hat{a}^{\dag 2}_p  \rangle_D\right) - \kappa \langle \hat n_p \rangle_D.
\end{equation}
Defining the observable \(\hat Y'_p \equiv \frac{1}{2i}\left(\hat a_p^2 -\hat{a}^{\dag 2}_p\right)\), we can solve this equation in the steady state (\(\partial_t  \langle \hat n_p \rangle_D = 0\)) yielding
\begin{equation}
\label{eq:bosonic-yd}
    \langle \hat Y'_p \rangle_D = -\frac{\kappa}{4 d_p}\langle \hat n_p \rangle_D.
\end{equation}
Combining Eq.~\eqref{eq:bosonic-logn} and Eq.~\eqref{eq:bosonic-yd}, we can define a scalar \(\mathcal G \) such that we can express \(\langle \hat Y'_p\rangle_D\) as its parametric derivative with respect to \({\rm Re}\lambda_{pp}\):
\begin{equation}
\label{eq:bosonic-g}
    \mathcal G \equiv -\frac{\kappa}{8} \log \mathcal N (D) \quad \implies \frac{\partial\mathcal G}{\partial {\rm Re}\lambda_{pp}} = - \frac{\kappa}{4}\langle \hat n_p \rangle_D = d_p \langle \hat Y'_p \rangle_D.
\end{equation}
By our parametrization Eq.~\eqref{eq:lambda-bosonic} we have \(\partial_{ {\rm Re}\lambda_{pp}} = d_p \partial_{d_p}\). Then, from Eq.~\eqref{eq:bosonic-g} we have:
\begin{equation}
\label{eq:yp-D}
    \langle \hat Y'_p\rangle_D = \frac{\partial \mathcal G}{\partial d_p}.
\end{equation}
This is the crucial result of this section.
This equation is exactly analogous to Eq.~\eqref{eq:tfim-yi-scalar-derivative}, where we have now expressed an observable of interest as a parametric derivative with respect to a scalar \( \mathcal G \sim \log \mathcal N\).
In the DTFIM case, this is the end of the proof: from the commutativity of the second derivatives one trivially then derives the susceptibility symmetry.

In the bosonic case however, we have some work left to do; we have only established this relation for diagonal driving matrices.
Next, we will generalize this result by relating a family of states \(\ket{\Phi(M)}\) for arbitrary driving matrices \(M\) to a corresponding family of states with diagonal driving matrices \(\ket{\Phi(D_M)}\).

\subsubsection{Susceptibility symmetry}
\label{app:bosonic-full}
In this section, we will exploit the results we have established for diagonal driving matrices to prove a susceptibility symmetry for a general family of steady states \(\ket{\Phi(M)}\) with \(M\) denoting arbitrary complex symmetric driving matrices.
The crucial step here is to employ the \(U(N)\) gauge covariance of the Lindbladian, i.e.~use Theorem~\ref{theorem:bosonic} for a unitary \(S\) to express the normalization \(\mathcal N (M) \equiv \braket{\Phi(M)}{\Phi(M)}\) in an equivalent and more convenient form.
Note that throughout this section we treat \(M_{ij}\) and \(M^*_{ij}\) as independent variables.

Explicitly, for a complex symmetric matrix \(M\), we can always write the decomposition~\cite{roberts2023competition}:
\begin{equation}
\label{eq:M-takagi}
    M = VD_MV^T
\end{equation}
where \(V\) is a unitary matrix and \(D_M\) is a real, diagonal matrix.
This decomposition is not unique: we have a freedom in choosing the signs of the diagonal entries of \(D_M\), which we denote \(d_p^M\).
One such choice is the Takagi decomposition, which gives unitary \(V\) and real diagonal \(D_M\) with {\it nonnegative} entries.
Critically, we do {\it not} use a Takagi decomposition, and do not restrict the \(d_p^M\) to be positive.
The Takagi singular values are then precisely \(|d_p^M|\), and the corresponding signs get absorbed into the definition of \(V\).

At this stage, we restrict \(M\) to be an invertible matrix, so that the Takagi singular values \(|d_p^M| > 0\), and we can invoke the results of App.~\ref{app:bosonic-diagonal} for the diagonal matrix \(D_M\). 
Later, we will be able to loosen this assumption with a simple continuity argument.

We can now relate steady states of driving matrices \(M\) and \(D_M\) using the trivial form of Theorem~\ref{theorem:bosonic}: the relation is simply a covariance under a \(U(N)\) gauge transformation, i.e.~\(\ket{\Phi(M)} = \hat U \ket{\Phi(D_M)}\) for some unitary \(\hat U \).
Since \(V \equiv e^{\lambda/2}\), we have \(\lambda = 2 \log V\), and this unitary is explicitly
\begin{equation}
\label{eq:bosonic-u}
    \hat U = \exp\left(2\sum_{pq}(\log V)_{pq} \hat Z_{pq} \right).
\end{equation}
Therefore, we have 
\begin{equation}
\label{eq:bosonic-nd-nm}
    \mathcal N (M) = \braket{\Phi(M)}{\Phi(M)} = \braket{\Phi(D_M)}{\Phi(D_M)} = \mathcal{N}(D_M).
\end{equation}
This is the key insight we will use: the normalization is independent of the basis defined by \(V\), and depends only on the \(d^M_p\), and specifically only the magnitudes \(|d_p^M|\).
This will allow us to relate parametric derivatives with respect to the \(d_p^M\) (like in Eq.~\eqref{eq:yp-D}) in terms of parametric derivatives with respect to matrix elements \(M_{ij}\).

Explicitly, our goal is to express \(\partial \mathcal G/ \partial M^*_{ij}\) (\(\mathcal G\) defined in Eq.~\eqref{eq:bosonic-g}) in terms of \(\partial \mathcal G/ \partial d_p^M\), so that we can invoke the crucial result Eq.~\eqref{eq:yp-D}. 
We can use a chain rule to convert between parametric derivatives of \(d_p^M\) and \(M_{ij}\).
Note that this requires the \(d^M_p\) to be a smooth function of the matrix elements \(M_{ij}\), which is true only when the \(|d_p^M|\) are non-degenerate.
Therefore, we add a second restriction on \(M\) (which again we will relax later via a continuity argument): \(M\) must be an invertible matrix with non-degenerate Takagi singular values.
Then, we have:
\begin{align}
    \frac{\partial \mathcal G}{\partial M^*_{ij}} & = \sum_p \frac{\partial d^M_p}{\partial M^*_{ij}} \frac{\partial \mathcal G}{\partial d^M_p}  \\
    & = \frac{(2 - \delta_{ij})}{2}\sum_p V_{ip}V_{jp} \langle \hat Y'_p \rangle_{D_M}.
    \end{align}
In writing the chain rule, we have exploited the fact that \(\mathcal G\) is completely determined by \(\mathcal N(D_M)\), thus depending only on the \(d_p^M\), and not on the matrix elements of \(V\). 
The factor of \((2 - \delta_{ij})\) (where \(\delta_{ij}\) is the Kronecker delta) is to account for double counting of off-diagonal pairs as discussed in the main text.

The next step is to express the observable \(\langle \hat Y'_p\rangle_{D_M}\) in terms of expectation values in our states of interest \(\ket{\Phi(M)}\).
We will use the unitary \(\hat U\) (see Eq.~\eqref{eq:bosonic-u}) to convert between steady states with driving matrices \(M\) and \(D_M\).
Explicitly, for any operator \(\hat O\), we have \(\langle \hat O \rangle_{D_M} = \langle \hat U \hat O \hat U^{\dag} \rangle_M\).
Using Lemma~\ref{lemma:bosonic-rotation}, we have \(\hat U \hat a_{i, \alpha} \hat U^{\dag} = \sum_{j} V^*_{ji} \hat a_{j,\alpha}\) and \(\hat U \hat a^{\dag}_{i, \alpha} \hat U^{\dag} = \sum_{j} V_{ji} \hat a^\dag_{j,\alpha}\).
Then, we express the observable \(\langle \hat Y'_p\rangle_{D_M}\) as
\begin{equation}
    \langle \hat Y'_p\rangle_{D_M} = \frac{1}{2i}\left(\sum_{kl} \left( V^*_{kp} V^*_{lp} \langle \hat a_k \hat a_l\rangle_M - V_{kp} V_{lp} \langle \hat a^\dag_k \hat a^\dag_l\rangle_M\right)\right).
\end{equation}
Now, we exploit the fact that for a diagonal driving matrix \(D_M\), the steady state has no pair coherences between modes. The Lindbladian (see effective Hamiltonian Eq.~\eqref{eq:heff-diagonal}) is invariant under the transformation \(\hat a_p \to - \hat a_p\) for each mode \(p\), i.e.~it has a \(\mathbb{Z}^N_2\) symmetry. Therefore, we have \(\langle \hat a_p \hat a_q \rangle_D = \delta_{pq} \langle \hat a^2_p\rangle_D\). Consequently, we have \(\langle \hat Y_{pq} \rangle_{D_M} = \delta_{pq}\langle \hat Y'_{p}\rangle_{D_M}\) (with \(\hat Y_{pq}\) as defined in Eq.~\eqref{eq:yij}). Then, we have:
\begin{align}
\label{eq:dg-1}
\frac{\partial \mathcal G}{\partial M^*_{ij}} & = \frac{(2 - \delta_{ij})}{2}\sum_{pq} V_{ip}V_{jq} \langle \hat Y_{pq} \rangle_{D_M} \\ 
 &= \frac{(2 - \delta_{ij})}{4i}\left(
       \sum_{kl}\left(\sum_p V_{ip}V^*_{kp}\right)\left(\sum_q V_{jq}V^*_{lq}\right)
       \langle \hat a_k \hat a_l \rangle_M
     - \sum_{kl}\left(\sum_p V_{ip}V_{kp}\right)\left(\sum_q V_{jq}V_{lq}\right)
       \langle \hat a^{\dag}_k \hat a^{\dag}_l \rangle_M \right) \\
& = \frac{(2 - \delta_{ij})}{4i}\left(\langle \hat a_i \hat a_j \rangle_M - \sum_{kl} W_{ik} W_{jl} \langle \hat a^{\dag}_k \hat a^{\dag}_l\rangle_M\right) \equiv \langle \hat{\mathcal{Y}}_{ij} \rangle
\end{align}
where we critically define \(W = VV^T\), i.e.~\(W_{ik} = \sum_p V_{ip}V_{kp}\). 

Then, since second derivatives of the scalar \(\mathcal G\) trivially commute, we have
\begin{equation}
\label{eq:yij-ykl-general}
    \frac{\partial^2 \mathcal G}{\partial M^*_{kl} \partial M^*_{ij}} = \frac{\partial^2 \mathcal G}{\partial M^*_{ij} \partial M^*_{kl}}  \quad \implies \frac{\partial\langle \hat{\mathcal{Y}}_{ij}\rangle}{\partial M^*_{kl}} = \frac{\partial \langle  \hat{\mathcal{Y}}_{kl}\rangle}{\partial M^*_{ij}}
\end{equation}
for a complex symmetric invertible \(M\) with non-degenerate Takagi singular values.
This is the most general statement we can make for a complex, symmetric \(M\).

At first glance, Eq.~\eqref{eq:yij-ykl-general} seems like the kind of susceptibility symmetry we are looking for.
However, the operator \(\hat{\mathcal{Y}}_{kl}\) is not a Hermitian observable, and implicitly depends on the driving matrix \(M\) through \(V\), which makes it ill-suited for the purposes of doing EP.
Since our training parameters are the matrix elements \(M_{ij}\), this operator \(\hat{\mathcal{Y}}_{kl}\) would change continuously during training, making it infeasible to extract gradients accurately.
Furthermore, it would require {\it global} information through the matrix \(M\), which defeats the purpose of using a {\it local} learning rule like EP.

Therefore, we must find a way to restrict the statement such that  we recover a fixed Hermitian observable that does not depend on \(V\).
By inspection, this occurs when \(V\) is a real, orthogonal matrix, so \(W = VV^T = \mathbb{I}\).
This is precisely the case where \(M\) is a real symmetric matrix: for such an \(M\) the decomposition in Eq.~\eqref{eq:M-takagi} is precisely an eigendecomposition where \(V\) is a real, orthogonal matrix, and the entries of the diagonal matrix \(D_M\) are the corresponding eigenvalues. 
As a result, we will restrict our focus to this case of real, symmetric driving matrix \(M\). 
In this case, for non-degenerate eigenvalues \(d_p^M\), we have \(\partial d_p^M/\partial M_{ij} = (2 - \delta_{ij})V_{ip} V_{jp}\).
Then, similar to Eq.~\eqref{eq:dg-1} we have:
\begin{equation}
    \frac{\partial \mathcal G}{\partial M_{ij}} = (2 - \delta_{ij})\sum_{pq} V_{ip} V_{jq} \langle \hat Y_{pq}\rangle_{D_M} = (2 - \delta_{ij})\langle \hat Y_{ij} \rangle
\end{equation}
where in the second step we use \(\langle \hat a_p \hat a_q\rangle_{D_M} = \sum_{kl} V_{kp}V_{lq} \langle \hat a_k \hat a_l \rangle_M\) and the fact that \(V V^T = \mathbb{I}\). 
Then, from commutativity of second derivatives we have:
\begin{equation}
\label{eq:yij-sym}
    (2 - \delta_{ij})\frac{\partial \langle \hat{Y}_{ij} \rangle}{\partial M_{kl}} =  (2 - \delta_{kl})\frac{\partial \langle \hat{Y}_{kl} \rangle}{\partial M_{ij}}
\end{equation}
for a real, symmetric, {\it invertible} \(M\) with {\it non-degenerate eigenvalues}.
Crucially, \(\hat Y_{ij}\) is a {\it fixed} observable independent of \(M\), and is well suited to perform EP.

We now make a continuity argument to extend Eq.~\eqref{eq:yij-sym} to arbitrary real, symmetric \(M\).
In particular, \(\langle \hat Y_{ij} \rangle\) depends on the driving matrix \(M\) only through the steady state \(\hat \rho_{\rm ss}\). The steady state varies smoothly in \(M_{ij}\), since it is unique and the Liouvillian gap is non-zero for finite \(N\). 
Therefore, the parametric derivatives in Eq.~\eqref{eq:yij-sym} are continuous, and since the equality holds on the open, dense set of real symmetric \(M\) with non-degenerate, non-zero eigenvalues, by continuity it holds for all real symmetric \(M\).
We have thus proved Eq.~\eqref{eq:bosonic-recRelation}.

\section{Matrix-product state encoding of DTFIM steady state}
\label{section:mps}

    In this section we first briefly discuss the exact steady-state solution $\hat \rho_{\rm ss}$ of Eq.~\eqref{eq:tfim-lindbladian} for the DTFIM before explicitly constructing an exact matrix-product representation that allows efficient numerics. The steady state $\hat \rho_{\rm ss}$ was first obtained in Ref.~\cite{roberts2023exact} with a later work~\cite{noachtar_exact_2026} connecting it to a more general construction of hTRS solutions. As we discussed in App.~\ref{appendix:htrs} we can obtain $\hat \rho_{\rm ss}$ through its purification $\ket\Psi$ in the hTRS framework. In Ref.~\cite{noachtar_exact_2026} the pure, \emph{unnormalized} steady-state purification $\ket {\tilde \Psi(\Omega_j)} \propto \ket{\Psi(\Omega_j)}$ of this doubled system was constructed in the form
    \begin{align}
        \ket {\tilde \Psi(\Omega_j)} = \sum_{k=0}^N \psi_k 
        \left[ \binom N k \right]^{-1/2} \ket{D^N_k}
        \label{eq:pure-steady-state}
    \end{align}
    where $\ket{D^N_k}$ are a set of modified Dicke states given by
    \begin{align}
        \ket{D^N_k}  = \sum_{\substack{s_j\in \{0,1\} \\\sum_j {s_j} = k}} \prod_j^N \alpha_j^{s_j} \ket{\tilde s_j} 
    \end{align}
    where $\alpha_j=\Omega/\Omega_j$ and $\Omega\neq 0$ is a reference value that the final state does \emph{not} depend on. We express these modified Dicke states in terms of two basis states $\ket{\tilde s_j}$ for each spin pair which we label by $s_j\in\{0,1\}$ for convenience. These basis states are explicitly given in terms of the spin states ($\ket{\downarrow}, \ket\uparrow$)
    \begin{align}
    \label{eq:mps-spin-states}
        \ket{\tilde 0_j} &= \frac{1}{\sqrt 2}(\ket{\uparrow\downarrow_j}_{AB} + \ket{\downarrow\uparrow_j}_{AB})\\
        \ket{\tilde 1_j} &= \ket{\downarrow\downarrow}_{AB}.
    \end{align}
    It will be convenient to express the product over these spin states over all the sites as \(\ket{\tilde{\vec s}}\equiv\bigotimes_j\ket{\tilde s_j}\), yielding
    \begin{equation}
       \ket{D^N_k} =  \sum_{\vec{s}, |\vec{s}| = k} \left(\prod_j^N \alpha_j^{s_j} \right)\ket{\tilde{\vec s}}
       \label{eq:modified-dicke-states}
    \end{equation}
    with $|\vec s| = \sum_j s_j$. The complex coefficients $\psi_k \in \mathbb C$ can be obtained from the recursion relation
    \begin{align}
        \psi_{k+1} = \sqrt{\frac{2(N-k)}{k+1}}\frac{J k/N - \Delta + i \kappa / 2}{\Omega} \psi_k
    \end{align}
    where we emphasize that $\Omega\neq 0$ is a real reference parameter since $\ket{\tilde\Psi(\Omega_j)}$ only depends on the set of $\Omega_j$, but not $\Omega$. Note that this is the same as defining the state $\ket{\tilde \Psi(\Omega_j)}$ with respect to a reference state with uniform drives $\Omega_j = \Omega$ in Theorem~\ref{theorem:tfim}. We can then evaluate arbitrary observables $\hat O$ in the single system by calculating
    \begin{equation}
    \label{eq:mps-ss-obs}
        \langle \hat O\rangle_{ss} = \frac{\bra{\tilde \Psi(\Omega_j)} \hat O \otimes \hat{\mathbb I}\ket{\tilde\Psi(\Omega_j)}}{\braket{\tilde \Psi(\Omega_j)}{\tilde \Psi(\Omega_j)}}.
    \end{equation}

   We now explicitly construct an \emph{exact} matrix-product state (MPS) representation of $\ket{\tilde \Psi(\Omega_j)}$ which allows us to obtain these steady-state expectation values efficiently for the DTFIM. 
    
    First, let us quickly review the basics of matrix-product states; for an extensive review we refer to~\cite{schollwock_density-matrix_2011}. Let $\ket \psi$ be a state in the tensor product of $N$ local Hilbert spaces---for simplicity we assume that each local Hilbert space corresponds to a two-level system. In general we can write
    \begin{align}
        \ket \psi = \sum_{s_1,\dots, s_N\in\{0,1\}^N} c_{s_1, \dots, s_N}\ket{s_1}\otimes \dots\otimes \ket{s_N}
        \label{eq:general-quantum-state}
    \end{align}
    with $2^N$ coefficients $c_{s_1, \dots, s_N}\in \mathbb C$. An MPS encodes these coefficients as a product of complex matrices $A^{[s_i]}$ for each local Hilbert space such that
    \begin{align}
        c_{s_1, \dots, s_N} = \sum_{\alpha_1, \dots, \alpha_N=1}^\chi v_{L, \alpha_1}^\dagger A^{[s_1]}_{\alpha_1, \alpha_2} A^{[s_2]}_{\alpha_2, \alpha_3}\dots A^{[s_{N-1}]}_{\alpha_{N-1}, \alpha_N} A^{[s_N]}_{\alpha_N, \alpha_{N+1}} v_{R,\alpha_{N+1}} = v_L^\dagger A^{[s_1]}\dots A^{[s_N]} v_R
        \label{eq:mps-state-coeffs}
    \end{align}
    where $\chi$ is referred to as the bond dimension and $v_L\in \mathbb C^{\chi}, v_R\in \mathbb C^\chi$ as the left and right boundary vectors which effectively correspond to boundary conditions that get absorbed into $A^{[s_1]} \to v_L^\dagger A^{[s_1]}$ and $A^{[s_N]}\to A^{[s_N]} v_R$. The second equality uses a simplified notation in terms of eponymous matrix products. The goal of an MPS is finding the $N$ matrices $A^{[s_i]}$ which leaves us with $\mathcal O(N\times \chi^2)$ parameters instead of the $2^N$ coefficients $c_{s_1, \dots, s_N}$. The matrices can be obtained from the full state coefficients $c_{s_1,\dots,s_N}$ by performing successive SVDs as described in Ref.~\cite{schollwock_density-matrix_2011}, but are not unique which can be exploited to optimize the bond dimension of the MPS.

    MPS are particularly well suited to one-dimensional geometries due to their product structure. Since our system lacks any spatial structure, it may not be intuitive that the steady-state purification $\ket{\tilde \Psi(\Omega_j)}$ possesses a simple MPS representation. However, Ref.~\cite{noachtar_exact_2026} established that the steady-state purification has a simple, one-dimensional structure in an effective charge space. In this charge space the state $|\tilde 1_j\rangle$ corresponds to the vacuum state on site $j$, carrying charge 0, and the $|\tilde 0_j\rangle$ state represents a charge on site $j$. We leverage the one-dimensional structure of this charge space to construct an exact MPS representation of $\ket{\tilde \Psi(\Omega_j)}$.
    
    Let us think of the virtual index $\alpha_j$ as tracking a charge coordinate such as a number of charges. Comparing with Eq.~\eqref{eq:pure-steady-state} it corresponds to the index $k$ which counts the number of charges in the modified Dicke state $|D_k^N\rangle$. Since each site $j$ can hold either 0 or 1 charge, we have to keep track of at most $N$ charges, i.e.~$\alpha_j=0, \dots, N\; \forall j$. Therefore the MPS has at most bond dimension $\chi \leq N + 1$ for any choice of drives $\{\Omega_j\}$. The matrix $A^{[s_j]}$ acts on the charge space and describes how the charge coordinate $k$ is affected if a single charge is added on site $j$. The boundary vectors $v_R$ and $v_L^\dagger$ essentially encode an initial and final state on the charge space.
    
    The difficult aspect of constructing an \emph{exact} MPS encoding of Eq.~\eqref{eq:pure-steady-state} is the dependence on the local parameters $\Omega_j$. For an arbitrary choice of $\Omega_j$ the state lacks any spatial structure and translational invariance which are usually required to find an exact MPS description. Crucially however, Theorem~\ref{theorem:tfim} tells us something much stronger: a purification $\ket{\Psi(\Omega_j)}$ for a DTFIM with some drives \(\{\Omega_j \}\) is related to a DTFIM  with drives in a fully permutationally invariant limit $\Omega_j=\Omega \; \forall j$ (whose steady state purification we refer to as $\ket{\Psi_{pi}} = \ket{\tilde \Psi(\Omega_j=\Omega)}$), via a local similarity transform $\hat V$:
    \begin{align}
        \ket{\tilde \Psi(\Omega_j)} = \underbrace{e^{\sum_j \lambda_j \hat Z_j}}_{\equiv \hat V} \ket{\Psi_{pi}} = \bigotimes_j e^{\lambda_j \hat Z_j} \ket{\Psi_{pi}}.
        \label{eq:dtfim-states-similarity-transform}
    \end{align}
    Here, the parameters $\lambda_j$ are given by
    \begin{align}
        \lambda_j = \log(\frac{\Omega_j}{\Omega})
    \end{align}
    with $\hat Z_j = (\hat \sigma^j_{z,A} + \hat \sigma^{j}_{z,B}) / 2$. This local similarity transform is the reason for the efficient MPS encoding of $\ket{\tilde \Psi(\Omega_j)}$ despite its lack of spatial structure: the permutationally invariant state can be naturally encoded as a particularly simple MPS due to its decomposition into Dicke states~\cite{florido-llinas_product_2025,raveh_dicke_2024} as we will see. The local similarity transform simply corresponds to rescaling the MPS matrices $A^{[s_j]}\to (\Omega_j / \Omega)^{-s_j} A^{[s_j]}$ of the permutationally invariant state. This is intuitive since the action of $\exp(\lambda_j \hat Z_j)$ on the two local basis states is
    \begin{align}
        e^{\lambda _j \hat Z_j} \ket{\tilde 0_j} &= \ket{\tilde 0_j}\\
        e^{\lambda _j \hat Z_j} \ket{\tilde 1_j} &= \frac{\Omega}{\Omega_j}\ket{\tilde 1_j}.
    \end{align}
    Therefore the local similarity transform preserves the bond dimension of the MPS. Since the MPS encoding is exact, any ordering of $\Omega_j$ gives the same MPS up to trivial permutations.
    Note that the locality of $\hat V$, i.e. as a product over local transforms, does not imply that the inhomogeneity in $\Omega_j$ only affects the state $\ket{\tilde \Psi(\Omega_j)}$ locally (unlike a unitary transform). To understand the intuition behind this, let us consider how $\hat V$ acts on a product state in the $\{\ket{\tilde 0}, \ket{\tilde 1}\}^{\otimes N}$ basis: it only changes the state's norm by a state-dependent amount. Therefore if we act with $\hat V$ on a superposition of product states, the relative weights of the superposition will change which results in a nontrivial global change of the state. This effect can be seen in Eq.~\eqref{eq:modified-dicke-states}. As a consequence any reduced state of $\ket{\tilde \Psi(\Omega_j)}$ nontrivially depends on all $\Omega_j$. However, so far, only the fully permutationally invariant limit ($\Omega_j=\Omega \; \forall j$) was studied in detail~\cite{roberts2023exact,noachtar_exact_2026}. In this limit $\ket{D^N_k}$ exactly takes the form of a Dicke state
    \begin{align}
        \lim_{\Omega_j \to \Omega}\ket{D^N_k} = \sum_{\text{permutations }\mathcal P} \mathcal P\ket{\tilde 0_j}^{\otimes N-k}\ket{\tilde 1_j}^{\otimes k}
    \end{align}
    which are permutationally invariant by construction. This permutational invariance will be a crucial factor in our MPS representation.

    First of all, let us notice that for a permutationally invariant state we can choose the matrices $A^{[s_i]}$ to be independent of $i$, i.e. $A^{[s_i]}\to A^{[s]}$, and, more importantly, to commute, i.e. $[A^{[s_i]}, A^{[s_j]}] =0 \; \forall s_i, s_j$ since swapping $i$ and $i+1$ must give the same coefficient in Eq.~\eqref{eq:general-quantum-state}:
    \begin{align}
        v_L^\dagger A^{[s_1]}\dots A^{[s_i]} A^{[s_{i+1}]} \dots A^{[s_N]} v_R= c_{s_1, \dots, s_i, s_{i+1},\dots,s_N} = c_{s_1, \dots, s_{i+1}, s_{i},\dots,s_N} = v_L^\dagger A^{[s_1]}\dots A^{[s_{i+1}]} A^{[s_{i}]} \dots A^{[s_N]} v_R.
    \end{align}
    Since the state $\ket{\Psi_{pi}}$ only has two basis states for the Hilbert space of each spin pair, this means that we can simply pick $A^{[0]}$ to be proportional to the identity $A^{[0]}_{\alpha, \beta}\propto \delta_{\alpha,\beta}$. Furthermore for any bit string $\vec s\in \{0,1\}^N$ with Manhattan norm $\sum_i s_i = k$ the $A^{[1]}$ matrix must satisfy
    \begin{align}
        c_{\vec s} = v_L^\dagger \left ( A^{[1]}\right )^k v_R \equiv C_k,
    \end{align}
    i.e. states with a fixed number $k$ of $\ket{\tilde 1}$ states have the same weight $C_k$. Comparing this with Eq.~\eqref{eq:pure-steady-state} in the permutationally invariant limit $\Omega_j\to \Omega$ we identify $C_k=\psi_k/\sqrt{\binom{N}{k}}$. One choice of $v_L^\dagger, v_R$ and $A^{[1]}$ that recovers $\ket{\Psi_{pi}}$ is~\cite{raveh_dicke_2024}
    \begin{align}
        v_L^\dagger = \left (\psi_0 \dots \frac{\psi_k}{\sqrt{\binom N k}} \dots \psi_N\right ), v_R = \begin{pmatrix}1 \\ 0 \\ \vdots\end{pmatrix}, A^{[1]}_{\alpha, \beta} = \delta_{\alpha, \beta + 1},
    \end{align}
    where $v_L$ and $v_R$ are length $N+1$ and $A^{[1]}$ is an $N+1\times N+1$ matrix with ones along the first lower offdiagonal; consequently $A^{[0]}$ is the $N+1\times N+1$ identity matrix.
    This structure has a simple interpretation in the charge space we discussed earlier (also see Ref.~\cite{noachtar_exact_2026}): adding a charge on site $j$ increases the charge coordinate (virtual index) by exactly one, while not adding a particle leaves the charge coordinate unchanged. Since the initial state $v_R$ essentially represents a vacuum state with no charges, we can think of the MPS as a sequential state automaton that successively builds the steady-state purification starting from $\ket{D^N_0}$~\cite{gopalakrishnan_push-down_2025}. The final state $v_L^\dagger$ ensures the correct relative weights of $\ket{\Psi_{pi}}$.
    In this basis the MPS of $\ket{\Psi_{pi}}$ has an exact bond dimension $\chi=N+1$. Since our state of interest $\ket{\tilde \Psi(\Omega_j)}$ is related to $\ket{\Psi_{pi}}$ via Eq.~\eqref{eq:dtfim-states-similarity-transform}, we obtain the MPS for $\ket{\tilde \Psi(\Omega_j)}$ simply by taking
    \begin{align}
        A^{[0_j]} \to A^{[0_j]} \text{ and } A^{[1_j]} \to \frac{\Omega}{\Omega_j}A^{[1_j]}
    \end{align}
    which highlights that the bond dimension is unchanged by the similarity transform. 
    The ordering of the drives $\Omega_j$ onto the MPS sites does not affect the properties of our MPS and can be arbitrarily chosen; in any case the maximum bond dimension is $N+1$ for any ordering. 
    A recent work~\cite{florido-llinas_product_2025} has investigated a similar class of states that permit an MPS encoding with a constant bond dimension under any permutation of sites. Since we have shown that the MPS representation of the steady-state purification constructed here has an extensive bond dimension, our state does not necessarily meet the requirements to be approximated by a finite sum of product states. Therefore we consider our state to have a non-trivial structure and the MPS encoding to be an effective strategy for performing steady-state numerics.

    \subsection*{Numerical implementation}
    The MPS representation of the steady-state purification $\ket{\tilde \Psi(\Omega_j)}$ is formally exact. Therefore we can implement it without relying on any approximations to obtain local expectation values using Eq.~\eqref{eq:mps-ss-obs}. For the results in the main text we have used the TeNPy library~\cite{hauschild_efficient_2018} for steady-state computations. Note that since the MPS representation is exact, we are only limited by machine precision, even for hundreds of spins due to the polynomial scaling of the network in $N$. No other approximations or assumptions are required.

\section{Matrix-product state encoding of bosonic lattice model}
\label{section:bosonic-mps}

    Here we introduce the exact steady-state solution \(\hat \rho_{\rm ss}\) for the bosonic model in Eq.~\eqref{eq:bosonic-lind} by constructing its purification \(\ket{\Phi}\). We use the exact form of $\ket\Phi$ to construct an MPS encoding of the state which enables efficient numerics for steady-state expectation values similarly to App.~\ref{section:mps}. In Ref.~\cite{roberts2023competition} the steady-state purification $\ket\Phi$ was shown to take the form
    \begin{align}
        \ket \Phi = \sum_{m=0}^\infty \frac{\phi_m}{m!} (\hat K_+)^m \ket{vac} \text{ such that }  \hat \rho_{\rm ss} = \Tr_B{\hat \rho_{AB}} \text{ with } \hat \rho_{AB} = \ketbra{\Phi}{\Phi}.
        \label{eq:bosonic-steady-state-purification}
    \end{align}
    where \(\ket{vac}\) is the vacuum state of all modes, i.e. \(\hat a_{i,A} \ket{vac}=0 = \hat a_{i,B} \ket{vac}\) for all \(i\). The operator \(\hat K_+\) essentially pairs particles into the modes \(\hat \alpha_i = (\hat a_{i,A} + \hat a_{i,B})/\sqrt 2\) and takes the form
    \begin{align}
        \hat K_+ = \sum_{i,j} M_{ij} \hat \alpha_i^\dagger \hat \alpha_j^\dagger.
    \end{align}
    Here, we again assume \(M\) to be a real symmetric matrix. The state coefficients \(\phi_m\) are given by the recursion relation
    \begin{align}
    \label{eq:recursion-relation-bosonic}
        \phi_m = \frac{N}{2U}\frac{\phi_{m-1}}{\delta + m - 1}
    \end{align}
    where \(\delta = 1 - N (\Delta + i\kappa / 2)/ (2U)\). For a thorough analysis of the steady-state phase diagram we refer to Ref.~\cite{roberts2023competition}.

    We are now set up to construct an MPS encoding for the steady-state purification $\ket\Phi$ of this model.
    The MPS encoding for the purification \(\ket \Phi\) of the steady state should allow us to run scalable numerics for \emph{any} real-symmetric matrix \(M\). Surprisingly, our construction will be completely analogous to the spin case in many ways: we will use Theorem~\ref{theorem:bosonic} to map \(\ket \Phi\) to a permutationally invariant state via a local similarity transform which preserves the bond dimension. This permutationally invariant state will then have a simple MPS encoding.

    First we note that it is convenient to work in a basis that diagonalizes the pairing $M_{ij}$ matrix (see App.~\ref{app:bosonic-proof-final}) which we assume to be real and symmetric~\footnote{It is straightforward to extend to complex symmetric matrices via Takagi factorization.}. Let us take
    \begin{align}
        M_{ij} = \sum_p u_{ip} d_p u_{jp}
    \end{align}
    where \(d_p \in \mathbb R\) and \(u_{ip} \in \mathbb R\) are the elements of a real, orthogonal matrix, i.e. \(\sum_p u_{ip} u_{jp} = \delta_{ij}\). This allows us to write 
    \begin{align}
        \hat K_+ = \sum_{p} d_p \Big(\sum_i u_{ip} \hat\alpha^\dagger _i\Big ) \Big(\sum_j u_{jp} \hat\alpha^\dagger _j\Big ) = \sum_p d_p (\hat u_p^\dagger)^2
        \label{eq:diag-pair-creation-op}
    \end{align}
    where we have defined the diagonal modes
    \begin{align}
        \hat{u}_{p}^{\dag} = \sum_i u_{ip} \hat{\alpha}_i^{\dag}.
    \end{align}
    This has the advantage that $\hat K_+$ simply creates a photon pair in each mode $\hat u_p$ weighted by the driving strength $d_p$ in Eq.~\eqref{eq:bosonic-steady-state-purification}. The state $(\hat K_+)^m \ket{vac}$ corresponds to $m$ photon pairs delocalized across $N$ modes and weighted by their relative driving strengths $d_p$; we can think of this as the bosonic parallel of the modified Dicke states $\ket{D^N_k}$ of the DTFIM. Therefore, the structure of this state is remarkably similar to the structure of the DTFIM steady-state purification in App.~\ref{section:mps} which allows us to use similar ideas to construct an MPS encoding.
    
    We will now construct the MPS in the Fock basis of the $\hat u_p$ modes; we will later show that expectation values of the ``real-space" $\hat a_i$ modes can be recovered by taking linear combinations of $\hat u_p$ expectation values. This has one crucial advantage: Theorem~\ref{theorem:bosonic} tells us that steady-state purification of the $\hat u_p$ modes (cf. Eq.~\eqref{eq:diag-pair-creation-op}) is related to a permutationally invariant state $\ket{\Phi'}$ via a local similarity transform $\hat\Lambda$; we will show that $\ket{\Phi'}$ has a simple and exact MPS encoding. According to Theorem~\ref{theorem:bosonic} this local similarity transform is
    \begin{align}
        \hat \Lambda &= e^{\sum_{p,q}\lambda_{pq}\hat Z_{pq}}
    \end{align}
    with the coefficients $\lambda_{pq}\in\mathbb C$ and operator $\hat Z_{pq}$
    \begin{align}
        \lambda_{pq} &= \delta_{pq} \log(\frac{d_p}{\bar d})\\
        \hat Z_{pq} &= \frac{1}{2}\Big (\hat a^\dagger_{p,A} \hat a_{q,A} + \hat a^\dagger_{p,B} \hat a_{q,B} \Big )\\
        &= \frac{1}{4}\Big[\big(\hat{u}_p^{\dagger}\hat{u}_q + \hat{u}_q^{\dagger}\hat{u}_p\big) + \big(\hat{v}_p^{\dagger}\hat{v}_q + \hat{v}_q^{\dagger}\hat{v}_p\big) + \big(\hat{u}_p^{\dagger}\hat{v}_q - \hat{v}_q^{\dagger}\hat{u}_p\big) + \big(\hat{v}_p^{\dagger}\hat{u}_q - \hat{u}_q^{\dagger}\hat{v}_p\big)\Big].
    \end{align}
    Here, $\bar d >0$ is simply a reference value like discussed in App.~\ref{app:bosonic-diagonal}. In the final line we also introduced the $\hat v_p$ modes
    \begin{align}
        \hat{v}_p^{\dagger} = \sum_i u_{ip}\frac{\hat a_{i,A}^\dagger - \hat a_{i,B}^\dagger}{\sqrt 2}
    \end{align}
    which are defined with respect to the odd-parity combination of the $\hat a_{i,A}$ and $\hat a_{i,B}$ mode. This operator fulfills canonical commutation relations with $\hat u_p$, i.e. $[\hat v_p, \hat u_q] = 0 = [\hat v_p, \hat u^\dagger_q]$ and $[\hat v_p, \hat v_q]=0, [\hat v_p, \hat v_q^\dagger] =\delta_{p,q}$. This means that
    \begin{align}
        \hat \Lambda = e^{\sum_p \frac{\lambda_{pp}}{2} (\hat u^\dagger_p \hat u_p + \hat v_p^\dagger \hat v_p)} = e^{\sum_p \frac{\lambda_{pp}}{2} \hat u^\dagger_p \hat u_p} e^{\sum_p \frac{\lambda_{pp}}{2} \hat v^\dagger_p \hat v_p}.
        \label{eq:diag-modes-similarity-transform}
    \end{align}
    Since the steady-state purification is in the vacuum with respect to the odd-parity mode, i.e. $(\hat a_{i,A} - \hat a_{i,B})\ket{\Phi}=0$, we know that $\hat v_p \ket{\Phi}=0$ which means that the second exponential in the final line acts trivially.
    
    From here on, our approach closely matches the MPS construction of the DTFIM in Sec.~\ref{section:mps}. Since the similarity transform \(\hat \Lambda\) is local in the \(\hat u_p\) modes, we know that if \(\ket{\Phi'}\) has an efficient MPS encoding in terms of the \(\hat u_p\) modes, then so does \(\ket \Phi\). We therefore just need to find an MPS encoding of the permutationally invariant state $\ket{\Phi'}$ with uniform driving amplitudes $d_p = \bar d\; \forall p$.

    We will now construct an efficient MPS encoding of the state 
    \begin{align}
        \ket{\Phi'} = \sum_{n_1, \dots, n_N \in \{0,1,2,\dots,n_c\}} c_{n_1, \dots,n_N} \ket{2n_1,0}\otimes \dots \otimes \ket{2n_N,0}
        \label{eq:collective-bosonic-state}
    \end{align}
    where \(\ket{2n_p,0}\propto (\hat u_p^\dagger)^{2n_p} \ket{vac}\); remember that each mode \(\hat u_p\) has the partner mode $\hat v_p$ in the doubled Hilbert space which remains in the vacuum state. Also note that we explicitly write a photon cutoff \(n_c\) in Eq.~\eqref{eq:collective-bosonic-state}. To \emph{exactly} represent the state we will need \(n_c=\infty\), but in practice we will limit this to a finite value at which expectation values have converged upon increasing $n_c$. The state coefficients \(c_{n_1, \dots, n_N}\) are given in the form of Eq.~\eqref{eq:mps-state-coeffs} as a matrix product. We now have to find the corresponding matrices \(A^{[n]}\) and boundary vectors \(v_{L}^\dagger\) and \(v_R\) for the state \(\ket{\Phi'}\) for a specified bond dimension \(\chi\) (which will be unbounded to exactly represent the state). 
    
    We again consider the effective charge space discussed in Sec.~\ref{section:mps} for the DTFIM to think of our system as one-dimensional. Each photon pair represents a single charge. Basically we arrange the diagonal modes $\hat u_p$ on a one-dimensional lattice and introduce a matrix $A^{[n_p]}_{\alpha_p, \alpha_{p+1}}$ for each Fock state with $n_p$ photon pairs in the $\hat u_p$ mode. The virtual index $\alpha_p$ corresponds to the charge coordinate. The matrix $A^{[n_p]}$ describes how the charge coordinate is affected by adding $n_p$ photon pairs on site $p$. Since the system is bosonic, more than one photon pair/charge can be injected per site which is different from the DTFIM and makes the state more complex and potentially interesting. 
    
    We now make an ansatz for $v_L^\dagger, v_R$ and $A^{[n]}$: first, we again successively build the state in the charge space starting from the vacuum state. This means that the ``initial state" is $v_{R,\alpha} = \delta_{\alpha,0}$. Since we want the matrix $A^{[n_p]}$ to increase the charge coordinate by $n_p$ upon injecting $n_p$ photon pairs on site $p$, we make the simple ansatz $A^{[n]}_{\alpha,\beta} \propto \delta_{\alpha, \beta +n}$. This is compatible with the state $\ket{\Phi'}$ being permutationally invariant, since all $A^{[n]}$ matrices trivially commute. Finally we want all states with $m$ photon pairs distributed across the $N$ modes to be proportional to the coefficient $\phi_m$ such that we choose the ``final state" $v_{L,m}^\dagger \propto \phi_m$. By expanding the state $\ket{\Phi'}$ in Eq.~\eqref{eq:collective-bosonic-state} in terms of Fock states it can be confirmed that the ansatz
    \begin{align}
        v_L^\dagger = (\phi_0\dots \phi_k \dots \phi_\chi), v_R = \begin{pmatrix}1\\0\\\vdots\end{pmatrix}, A^{[n]}_{\alpha,\beta} = \bar d^n \sqrt{\binom{2n}{n}}\delta_{\alpha, \beta + n}
        \label{eq:bosonic-mps-tensors}
    \end{align}
    indeed recovers the exact steady-state purification in Eq.~\eqref{eq:bosonic-steady-state-purification} with $M_{ij} = \bar d \,\delta_{i,j}$, i.e.~in the permutationally invariant limit of $\ket{\Phi'}$. The factor \(\sqrt{\binom{2n}{n}}\) is a consequence of the multinomial expansion for the uniform pair creation operator 
    \begin{align}
        \hat K_+' = \bar d \sum_p \left (\hat u_p^\dagger \right )^2
    \end{align}
    and the normalization of \((\hat u_p^\dagger )^{2n_p} \ket{vac}\) which ensures the correct relative weights between all states; the factor \(m!\) in Eq.~\eqref{eq:bosonic-steady-state-purification} is also canceled as a result of the multinomial expansion. It is then straightforward to see that for any Fock product state with the photons distributed according to \(\{n_p\}\)
    \begin{align}
        v_L^\dagger \Big [\prod_k A^{[n_p]} \Big ] v_R \propto \phi_{\sum_p n_p}
    \end{align}
    maps to the coefficient corresponding to the total number of photons \(\sum_p n_p\). Finally we highlight that an exact MPS representation formally requires $n_c=\infty$ and therefore also $\chi = \infty$. If we truncate $n_c<\infty$, we only have an approximate representation of the steady-state purification. For any $n_c<\infty$ the bond dimension is at most $\chi = N n_c + 1$ which is related to the maximum number of charges that the system can hold; for $n_c = 1$ this exactly recovers the bond dimension of Sec.~\ref{section:mps}.

    To go from the MPS for \(\ket{\Phi'}\) to the MPS for \(\ket \Phi\) we simply replace
    \begin{align}
        A^{[n_p]} \to \left ( \frac{d_p}{\bar d} \right )^{n_p} A^{[n_p]}
    \end{align}
    which is a consequence of the form of Eq.~\eqref{eq:diag-modes-similarity-transform}. Therefore we have constructed an \emph{exact} MPS of the $\ket\Phi$ state that allows us to evaluate local expectation values in the \(\hat u_p\) modes according to Eq.~\eqref{eq:mps-ss-obs}
    using resources that scale polynomially in the total number of modes $N$, the photon cutoff $n_{c}$ and the bond dimension $\chi$. 
    To obtain expectation values of the ``real-space" modes \(\hat a_i\) such as \(\langle \hat a_i \hat a_j\rangle_{ss}\) we simply decompose
    \begin{align}
        \langle \hat a_i \hat a_j\rangle_{ss} = \langle \hat a_{i,A} \hat a_{j,A}\rangle_{ss} = \frac{1}{2}\sum_{k,\ell} u_{ik} u_{j\ell} \langle \hat u_{k}\hat u_\ell\rangle_{ss}
    \end{align}
    where we used the simple fact that $(\hat a_{i,A} - \hat a_{i,B})\ket \Phi/\sqrt 2=0$. This dramatically simplifies since the expectation values $\ev{\hat u_k \hat u_\ell}$ are constrained by
    \begin{align}
        \langle 2n_k,0|\langle 2n_\ell,0| \hat u_k \hat u_\ell |2n_k',0\rangle |2n_\ell',0\rangle \propto \delta_{k,\ell}.
    \end{align}
    This implies that \(\langle \hat u_{k}\hat u_\ell\rangle_{ss}\propto \delta_{k,\ell}\) which is a result of a \(\mathbb Z_2\) symmetry of the steady-state purification (or a weak symmetry of our model under $\hat a_i \to -\hat a_i$). Therefore, any observable that is quadratic in \(\hat a_{i,A}, \hat a_{i,A}^\dagger, \hat a_{i,B}, \hat a_{i,B}^\dagger\), no matter how non-local, can be evaluated in terms of \emph{local} expectation values of \(\hat u_k, \hat u_k^\dagger\) up to some postprocessing of the \(u_{ik}\) matrix elements which is efficient. 

    \subsection*{Numerical implementation}
    The MPS representation for the steady-state purification $\ket\Phi$ constructed in the previous section is formally exact, i.e.~it requires no approximation. To obtain the desired expectation values we implement the MPS in TeNPy~\cite{hauschild_efficient_2018} (analogous to the DTFIM). Naturally this requires us to introduce a local photon number cutoff $n_c<\infty$ (recall that the exact MPS requires $n_c=\infty$). Since a finite $n_c$ also implies a maximum bond dimension $\chi = \mathcal O(N n_c)$, we effectively have one lever to control convergence to exact expectation values (as long as $\chi = N n_c + 1$).  
    
    We now make a heuristic argument to explain why a finite-$n_c$ approximation is controlled. Let us consider the amplitude $c_{n_1,\dots,n_N}$ of a Fock state with photon pair distribution $\{n_p\}$. Simply by contracting the MPS we find that
    \begin{align}
        |c_{n_1, \dots, n_N}| &\sim |\phi_{\sum_p n_p} \prod_p \left [ d_p^{n_p} \sqrt{\binom{2n_p}{n_p}} \right ]| \sim \prod_{j=1}^{\sum_p n_p}\left | \frac{N}{j - N \frac{\Delta + i \kappa/2}{2U}} \right| \prod_p \left| \frac{d_p}{2U}\right|^{n_p}\sqrt{\binom{2n_p}{n_p}} \\
        &= \exp\left [ - \sum_{j=1}^{\sum_p n_p} \log \left|\frac{j}{N} - \frac{\Delta + i \kappa/2}{2U} \right| + \sum_p n_p \log\left |\frac{d_p}{2U}\right | + \frac{1}{2}\sum_p \log \binom{2n_p}{n_p} \right ]\\
        &\leq \exp\left [ - \sum_{j=1}^{\sum_p n_p} \log \left|\frac{j}{N} - \frac{\Delta + i \kappa/2}{2U} \right| + \sum_p n_p \log \left |\frac{d_p}{U}\right| \right] \\
        &\leq \exp\left [ - \sum_{j=1}^{\sum_p n_p} \log \left|\frac{j}{N} - \frac{\Delta + i \kappa/2}{2U} \right| + \log\frac{d_{max}}{|U|}\sum_p n_p \right]\\
        &= \exp\left [ \sum_{j=1}^{\sum_p n_p} \left ( \log \frac{d_{max}}{|U|} - \log\left|\frac{j}{N} - \frac{\Delta + i \kappa/2}{2U} \right| \right )\right ]\\
        &\sim \exp\left [ -\sum_p n_p \left ( \int_0^1 dx \log \Big |\frac{\sum_p n_p}{N}  x - \frac{\Delta + i \kappa / 2}{2U} \Big | - \log \frac{d_{max}}{|U|}\right ) \right]
    \end{align}
    where we defined $d_{max} = \max_p |d_p|$ and we used that $\binom{2n_p}{n_p} \leq 4^{n_p}$. Since the second term inside the sum becomes increasingly negative for large $j$ while the first term stays constant, it immediately follows that the term inside the exponential becomes negative for sufficiently large $\sum_p n_p$. Therefore the coefficient eventually decays at least exponentially with $\sum_p n_p$. We argue that this allows us to justify a finite photon cutoff $n_c$ since contributions from high-photon number Fock states decay rapidly. Since evaluating any expectation values scales at worst polynomially in $N, n_c$ and $\chi$ (if we take $\chi$ to be independent of $n_c$ and $N$), numerical computations are tractable for many bosonic models. 

For the purposes of the simulations for Fig.~\ref{fig:bosonic} in the main text, we use a photon cutoff of $n_c =10$, with $\chi = Nn_c + 1 = 41$. In Fig.~\ref{fig:bosonic-convergence}, we plot how some of the relevant observables scale in $n_c$ for the same system and parameters used in Fig.~\ref{fig:bosonic}. This serves as a useful heuristic for convergence. 

\begin{figure}
    \centering
    \includegraphics[width=0.5\linewidth]{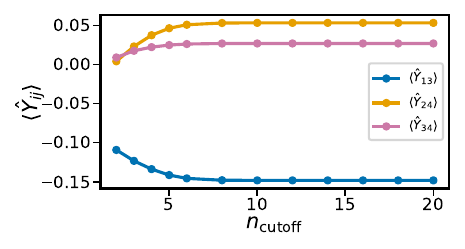}
    \caption{Output observables for the $N= 4$ bosonic model with parameters corresponding to the teacher system (for $x_{12}/U = 0.5$). For each photon cutoff, we use a bond dimension $\chi = N n_c + 1$.}
    \label{fig:bosonic-convergence}
\end{figure}

\section{Additional training details}
\label{section:dtfim-teacher-student-examples}

\subsection{Additional DTFIM training example}

In Fig.~\ref{fig:teacher-student-app}, we present an additional training example of teacher-student learning for a system similar to the one presented in Fig.~\ref{fig:teacher-student-compare}.
It is for an \(N = 11\) system, with the key difference being that the output is now ten dimensional; there are no internal spins.
In Fig.~\ref{fig:teacher-student-app}(a) we plot the teacher and learned student outputs for all ten output spins.
Notice now that all the student drives match the teacher drives almost exactly (see Table~\ref{table:dtfim-2}) - since each of them is associated with an output spin.

\begin{figure}
    \centering
    \includegraphics[width=0.5\linewidth]{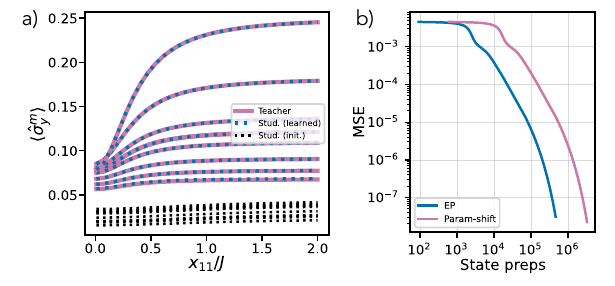}
    \caption{ (a) DTFIM outputs (\(\langle \hat{\sigma}_y^1\rangle, \dots,\langle \hat{\sigma}_y^{10}\rangle\)) for a \(N = 11\) spin system with \(J/\kappa\) = 5, trained using both EP and parameter shift. We plot both the teacher outputs (pink) along with initial (black dotted) and trained (blue) student outputs. Values for teacher and student drives (before and after training) are included in Table~\ref{table:dtfim-2} in App.~\ref{section:dtfim-teacher-student-examples}. (b) The loss plotted over number of experiments per training sample for both training techniques. An experiment is counted each time a new steady state needs to be prepared.}
    \label{fig:teacher-student-app}
\end{figure}

\subsection{Training parameters}
We include some tables containing training values for the teacher student examples throughout the text.

\begin{table}[h]
\centering
\begin{tabular}{ccccc}
\hline
   &   &   & $\theta_i/J$ &   \\
\hline
$i$ & Teacher & Init. student & Trained student (EP) & Trained student (param-shift) \\
\hline
$1$  & $+0.200$ & $+0.567$ & $+0.212$ & $+0.212$ \\
$2$  & $+0.400$ & $+0.292$ & $+0.405$ & $+0.405$ \\
$3$  & $+0.600$ & $+0.371$ & $+0.603$ & $+0.603$ \\
$4$  & $+0.800$ & $+0.508$ & $+0.900$ & $+0.901$ \\
$5$  & $+1.000$ & $+0.430$ & $+0.867$ & $+0.868$ \\
$6$  & $+1.200$ & $+0.366$ & $+0.843$ & $+0.843$ \\
$7$  & $+1.400$ & $+0.292$ & $+0.818$ & $+0.818$ \\
$8$  & $+0.700$ & $+0.445$ & $+0.874$ & $+0.874$ \\
$9$  & $+0.700$ & $+0.224$ & $+0.797$ & $+0.797$ \\
$10$ & $+0.700$ & $+0.551$ & $+0.920$ & $+0.921$ \\
$11$ & \multicolumn{4}{c}{input drive (swept)} \\
\hline
\end{tabular}
\caption{Table containing drives $\theta_{j}/J$ for the $N=11$ spin teacher–student run in Fig.~\ref{fig:teacher-student-compare}. We include drives for the teacher system, randomly initialized student, and trained student (for both EP and parameter-shift).}
\label{table:dtfim-1}
\end{table}

\begin{table}[h]
\centering
\begin{tabular}{ccccc}
\hline
   &   &   & $\theta_i/J$ &   \\
\hline
$i$ & Teacher & Init. student & Trained student (EP) & Trained student (param-shift) \\
\hline
$1$  & $+0.200$ & $+0.281$ & $+0.201$ & $+0.201$ \\
$2$  & $+0.400$ & $+0.518$ & $+0.401$ & $+0.400$ \\
$3$  & $+0.600$ & $+0.443$ & $+0.600$ & $+0.600$ \\
$4$  & $+0.800$ & $+0.348$ & $+0.800$ & $+0.800$ \\
$5$  & $+1.000$ & $+0.473$ & $+1.000$ & $+1.000$ \\
$6$  & $+1.200$ & $+0.277$ & $+1.198$ & $+1.198$ \\
$7$  & $+1.400$ & $+0.256$ & $+1.381$ & $+1.384$ \\
$8$  & $+0.700$ & $+0.580$ & $+0.700$ & $+0.700$ \\
$9$  & $+0.700$ & $+0.208$ & $+0.700$ & $+0.700$ \\
$10$ & $+0.700$ & $+0.487$ & $+0.700$ & $+0.700$ \\
$11$ & \multicolumn{4}{c}{input drive (swept)} \\
\hline
\end{tabular}
\caption{Table containing drives $\theta_{j}/J$ for the $N=11$ spin teacher–student run in Fig.~\ref{fig:teacher-student-app}. We include drives for the teacher system, randomly initialized student, and trained student (for both EP and parameter-shift).}
\label{table:dtfim-2}
\end{table}

\begin{table}[h]
\centering
\begin{tabular}{ccccc}
\hline
   &   &   &    $\theta_{ij}/U$ &   \\
\hline
$(i,j)$ & Teacher & Init. student & Trained student (EP) & Trained student (param-shift) \\
\hline
 $(1,1)$ & $+0.058$ & $-0.117$ & $-0.057$ & $-0.010$ \\
 $(1,2)$ & \multicolumn{4}{c}{input drive (swept)} \\
 $(1,3)$ & $+0.365$ & $-0.091$ & $+0.366$ & $+0.370$ \\
$(1,4)$ & $-0.199$ & $+0.270$ & $+0.200$ & $+0.209$ \\
 $(2,2)$ & $+0.364$ & $-0.236$ & $-0.363$ & $-0.322$ \\
 $(2,3)$ & $-0.082$ & $+0.120$ & $+0.082$ & $+0.090$ \\
 $(2,4)$ & $-0.004$ & $+0.210$ & $-0.004$ & $-0.007$ \\
$(3,3)$ & $+0.279$ & $-0.169$ & $-0.278$ & $-0.221$ \\
 $(3,4)$ & $-0.014$ & $-0.261$ & $-0.014$ & $-0.022$ \\
 $(4,4)$ & $+0.085$ & $-0.108$ & $-0.083$ & $-0.049$ \\
\hline
\end{tabular}
\caption{Table containing drives $\theta_{ij}/U$ for the $N=4$ bosonic teacher–student run in Fig.~\ref{fig:bosonic}. We include drives for the teacher system, randomly initialized student, and trained student (for both EP and parameter-shift). We use a photon cutoff $n_c = 10$, and bond dimension $\chi = 41$ for the MPS simulations.}
\label{table:bosonic}
\end{table}
\end{appendix}
\bibliographystyle{ieeetr}
\bibliography{citations}

\begin{thebibliography}{10}

\bibitem{stern2023learning}
M.~Stern and A.~Murugan, ``Learning without neurons in physical systems,'' {\em Annual Review of Condensed Matter Physics}, vol.~14, no.~1, pp.~417--441, 2023.

\bibitem{momeni2025training}
A.~Momeni, B.~Rahmani, B.~Scellier, L.~G. Wright, P.~L. McMahon, C.~C. Wanjura, Y.~Li, A.~Skalli, N.~G. Berloff, T.~Onodera, {\em et~al.}, ``Training of physical neural networks,'' {\em Nature}, vol.~645, no.~8079, pp.~53--61, 2025.

\bibitem{lopez2023self}
V.~Lopez-Pastor and F.~Marquardt, ``Self-learning machines based on hamiltonian echo backpropagation,'' {\em Physical Review X}, vol.~13, no.~3, p.~031020, 2023.

\bibitem{scellier2017equilibrium}
B.~Scellier and Y.~Bengio, ``Equilibrium propagation: Bridging the gap between energy-based models and backpropagation,'' {\em Frontiers in computational neuroscience}, vol.~11, p.~24, 2017.

\bibitem{Scellier2021DeepLearningTheoryPhysics}
B.~Scellier, ``A deep learning theory for neural networks grounded in physics,'' {\em arXiv preprint arXiv:2103.09985}, 2021.

\bibitem{Kendall2020EndToEndAnalogEqProp}
J.~Kendall, R.~Pantone, K.~Manickavasagam, Y.~Bengio, and B.~Scellier, ``Training end-to-end analog neural networks with equilibrium propagation,'' {\em arXiv preprint arXiv:2006.01981}, 2020.

\bibitem{Martin2021EqSpike}
E.~Martin, M.~Ernoult, J.~Laydevant, S.~Li, D.~Querlioz, T.~Petrisor, and J.~Grollier, ``Eqspike: Spike-driven equilibrium propagation for neuromorphic implementations,'' {\em iScience}, vol.~24, no.~3, p.~102222, 2021.

\bibitem{Ernoult2020ContinualWeightUpdates}
M.~Ernoult, J.~Grollier, D.~Querlioz, Y.~Bengio, and B.~Scellier, ``Equilibrium propagation with continual weight updates,'' {\em arXiv preprint arXiv:2005.04168}, 2020.

\bibitem{Scellier2022AgnosticPhysicsDrivenDL}
B.~Scellier, S.~Mishra, Y.~Bengio, and Y.~Ollivier, ``Agnostic physics-driven deep learning,'' {\em arXiv preprint arXiv:2205.15021}, 2022.

\bibitem{Falk2025TemporalContrastiveLearning}
M.~J. Falk, A.~T. Strupp, B.~Scellier, and A.~Murugan, ``Temporal contrastive learning through implicit non-equilibrium memory,'' {\em Nature Communications}, vol.~16, p.~2163, 2025.

\bibitem{Wang2024CoupledPhaseOscillatorsEqProp}
Q.~Wang, C.~C. Wanjura, and F.~Marquardt, ``Training coupled phase oscillators as a neuromorphic platform using equilibrium propagation,'' {\em Neuromorphic Computing and Engineering}, vol.~4, no.~3, p.~034014, 2024.

\bibitem{laydevant2024training}
J.~Laydevant, D.~Markovi{\'c}, and J.~Grollier, ``Training an ising machine with equilibrium propagation,'' {\em Nature Communications}, vol.~15, no.~1, p.~3671, 2024.

\bibitem{Stern2021SupervisedLearningPhysicalNetworks}
M.~Stern, D.~Hexner, J.~W. Rocks, and A.~J. Liu, ``Supervised learning in physical networks: From machine learning to learning machines,'' {\em Physical Review X}, vol.~11, no.~2, p.~021045, 2021.

\bibitem{Stern2024PhysicalEffectsOfLearning}
M.~Stern, A.~J. Liu, and V.~Balasubramanian, ``Physical effects of learning,'' {\em Physical Review E}, vol.~109, p.~024311, 2024.

\bibitem{OConnor2019SpikingEqProp}
P.~O'Connor, E.~Gavves, and M.~Welling, ``Training a spiking neural network with equilibrium propagation,'' in {\em Proceedings of the Twenty-Second International Conference on Artificial Intelligence and Statistics (AISTATS)} (K.~Chaudhuri and M.~Sugiyama, eds.), vol.~89 of {\em Proceedings of Machine Learning Research}, pp.~1516--1523, PMLR, 2019.

\bibitem{Dillavou2022DecentralizedPhysicsDrivenLearning}
S.~Dillavou, M.~Stern, A.~J. Liu, and D.~J. Durian, ``Demonstration of decentralized physics-driven learning,'' {\em Physical Review Applied}, vol.~18, p.~014040, 2022.

\bibitem{Dillavou2023MachineLearningWithoutAProcessor}
S.~Dillavou, B.~D. Beyer, M.~Stern, A.~J. Liu, M.~Z. Miskin, and D.~J. Durian, ``Machine learning without a processor: Emergent learning in a nonlinear electronic metamaterial,'' {\em arXiv preprint arXiv:2311.00537}, 2023.

\bibitem{Altman2024CoupledLearningElasticNetworks}
L.~E. Altman, M.~Stern, A.~J. Liu, and D.~J. Durian, ``Experimental demonstration of coupled learning in elastic networks,'' {\em Physical Review Applied}, vol.~22, p.~024053, 2024.

\bibitem{wanjura2025quantum}
C.~C. Wanjura and F.~Marquardt, ``Quantum equilibrium propagation for efficient training of quantum systems based on onsager reciprocity,'' {\em Nature Communications}, vol.~16, no.~1, p.~6595, 2025.

\bibitem{massar2025equilibrium}
S.~Massar and B.~M. Mognetti, ``Equilibrium propagation: the quantum and the thermal cases,'' {\em Quantum Studies: Mathematics and Foundations}, vol.~12, no.~1, p.~6, 2025.

\bibitem{scellier2024quantum}
B.~Scellier, ``Quantum equilibrium propagation: Gradient-descent training of quantum systems,'' {\em arXiv preprint arXiv:2406.00879}, 2024.

\bibitem{kitaev2002classical}
A.~Y. Kitaev, A.~H. Shen, and M.~N. Vyalyi, {\em Classical and Quantum Computation}, vol.~47 of {\em Graduate Studies in Mathematics}.
\newblock Providence, RI: American Mathematical Society, 2002.

\bibitem{kempe2006complexity}
J.~Kempe, A.~Kitaev, and O.~Regev, ``The complexity of the local {Hamiltonian} problem,'' {\em SIAM Journal on Computing}, vol.~35, no.~5, pp.~1070--1097, 2006.

\bibitem{oliveira2005complexity}
R.~Oliveira and B.~M. Terhal, ``The complexity of quantum spin systems on a two-dimensional square lattice,'' {\em Quantum Information \& Computation}, vol.~8, no.~10, pp.~900--924, 2008.

\bibitem{chen2025efficient}
C.-F. Chen, M.~Kastoryano, F.~G. Brand{\~a}o, and A.~Gily{\'e}n, ``Efficient quantum thermal simulation,'' {\em Nature}, vol.~646, no.~8085, pp.~561--566, 2025.

\bibitem{bergamaschi2026fast}
T.~Bergamaschi and C.-F. Chen, ``Fast mixing of quantum spin chains at all temperatures,'' in {\em Proceedings of the 58th Annual ACM Symposium on Theory of Computing}, pp.~835--844, 2026.

\bibitem{rouze2026efficient}
C.~Rouz{\'e}, D.~Stilck~Fran{\c{c}}a, and {\'A}.~M. Alhambra, ``Efficient thermalization and universal quantum computing with quantum {Gibbs} samplers,'' {\em Nature Physics}, vol.~22, pp.~1123--1127, 2026.

\bibitem{rouze2026optimal}
C.~Rouz{\'e}, D.~Stilck~Fran{\c{c}}a, and {\'A}.~M. Alhambra, ``Optimal quantum algorithm for gibbs state preparation,'' {\em Physical Review Letters}, vol.~136, no.~6, p.~060601, 2026.

\bibitem{tong2025fast}
Y.~Tong and Y.~Zhan, ``Fast mixing of weakly interacting fermionic systems at any temperature,'' {\em PRX Quantum}, vol.~6, no.~3, p.~030301, 2025.

\bibitem{roberts2021hidden}
D.~Roberts, A.~Lingenfelter, and A.~Clerk, ``Hidden time-reversal symmetry, quantum detailed balance and exact solutions of driven-dissipative quantum systems,'' {\em PRX Quantum}, vol.~2, no.~2, p.~020336, 2021.

\bibitem{roberts2023exact}
D.~Roberts and A.~A. Clerk, ``Exact solution of the infinite-range dissipative transverse-field ising model,'' {\em Physical review letters}, vol.~131, no.~19, p.~190403, 2023.

\bibitem{roberts2023competition}
D.~Roberts and A.~Clerk, ``Competition between two-photon driving, dissipation, and interactions in bosonic lattice models: An exact solution,'' {\em Physical Review Letters}, vol.~130, no.~6, p.~063601, 2023.

\bibitem{yao2025hidden}
M.~Yao, A.~Lingenfelter, R.~Belyansky, D.~Roberts, and A.~A. Clerk, ``Hidden time reversal in driven xxz spin chains: Exact solutions and new dissipative phase transitions,'' {\em Physical Review Letters}, vol.~134, no.~13, p.~130404, 2025.

\bibitem{lingenfelter2024exact}
A.~Lingenfelter, M.~Yao, A.~Pocklington, Y.-X. Wang, A.~Irfan, W.~Pfaff, and A.~A. Clerk, ``Exact results for a boundary-driven double spin chain and resource-efficient remote entanglement stabilization,'' {\em Physical Review X}, vol.~14, no.~2, p.~021028, 2024.

\bibitem{lingenfelter2026exact}
A.~Lingenfelter and A.~A. Clerk, ``Exact steady states of interacting driven dissipative fermionic systems with hidden time-reversal symmetry,'' {\em arXiv preprint arXiv:2605.10846}, 2026.

\bibitem{scellier2018generalization}
B.~Scellier, A.~Goyal, J.~Binas, T.~Mesnard, and Y.~Bengio, ``Generalization of equilibrium propagation to vector field dynamics,'' {\em arXiv preprint arXiv:1808.04873}, 2018.

\bibitem{scurria2026equilibrium}
A.~E. Scurria, D.~V. Abeele, B.~M. Mognetti, and S.~Massar, ``Equilibrium propagation for non-conservative systems,'' {\em arXiv preprint arXiv:2602.03670}, 2026.

\bibitem{cin2025training}
N.~D. Cin, F.~Marquardt, and C.~C. Wanjura, ``Training nonlinear optical neural networks with scattering backpropagation,'' {\em arXiv preprint arXiv:2508.11750}, 2025.

\bibitem{pathria2017statistical}
R.~K. Pathria, {\em Statistical Mechanics: International Series of Monographs in Natural Philosophy}, vol.~45.
\newblock Elsevier, 2017.

\bibitem{feynman1939forces}
R.~P. Feynman, ``Forces in molecules,'' {\em Physical review}, vol.~56, no.~4, p.~340, 1939.

\bibitem{lindblad1976generators}
G.~Lindblad, ``On the generators of quantum dynamical semigroups,'' {\em Communications in mathematical physics}, vol.~48, no.~2, pp.~119--130, 1976.

\bibitem{gorini1976completely}
V.~Gorini, A.~Kossakowski, and E.~C.~G. Sudarshan, ``Completely positive dynamical semigroups of n-level systems,'' {\em Journal of Mathematical Physics}, vol.~17, no.~5, pp.~821--825, 1976.

\bibitem{fazio2025many}
R.~Fazio, J.~Keeling, L.~Mazza, and M.~Schir{\`o}, ``Many-body open quantum systems,'' {\em SciPost Physics Lecture Notes}, p.~099, 2025.

\bibitem{davies1974markovian}
E.~B. Davies, ``Markovian master equations,'' {\em Communications in Mathematical Physics}, vol.~39, no.~2, pp.~91--110, 1974.

\bibitem{davies1976markovian}
E.~B. Davies, ``Markovian master equations. {II},'' {\em Mathematische Annalen}, vol.~219, no.~2, pp.~147--158, 1976.

\bibitem{alicki1976detailed}
R.~Alicki, ``On the detailed balance condition for non-{H}amiltonian systems,'' {\em Reports on Mathematical Physics}, vol.~10, no.~3, pp.~249--258, 1976.

\bibitem{kossakowski1977quantum}
A.~Kossakowski, A.~Frigerio, V.~Gorini, and M.~Verri, ``Quantum detailed balance and {KMS} condition,'' {\em Communications in Mathematical Physics}, vol.~57, no.~2, pp.~97--110, 1977.

\bibitem{guo2025designing}
J.~Guo, O.~Hart, C.-F. Chen, A.~J. Friedman, and A.~Lucas, ``Designing open quantum systems with known steady states: Davies generators and beyond,'' {\em Quantum}, vol.~9, p.~1612, 2025.

\bibitem{fagnola2007generators}
F.~Fagnola and V.~Umanita, ``Generators of detailed balance quantum markov semigroups,'' {\em Infinite Dimensional Analysis, Quantum Probability and Related Topics}, vol.~10, no.~03, pp.~335--363, 2007.

\bibitem{fagnola2010generators}
F.~Fagnola and V.~Umanit{\`a}, ``Generators of kms symmetric markov semigroups on symmetry and quantum detailed balance,'' {\em Communications in Mathematical Physics}, vol.~298, no.~2, pp.~523--547, 2010.

\bibitem{noachtar_exact_2026}
D.~D. Noachtar and A.~A. Clerk, ``Exact metastability in a class of driven-dissipative quantum many-body systems,'' June 2026.
\newblock arXiv:2606.07736 [quant-ph].

\bibitem{Note1}
Recall that an anti-linear operator $\protect \hat \Psi $ satisfies \(c \protect \hat \Psi = \protect \hat \Psi c^*\) for any complex scalar \(c\).

\bibitem{marcuzzi2014universal}
M.~Marcuzzi, E.~Levi, S.~Diehl, J.~P. Garrahan, and I.~Lesanovsky, ``Universal nonequilibrium properties of dissipative rydberg gases,'' {\em Physical review letters}, vol.~113, no.~21, p.~210401, 2014.

\bibitem{tucker_facilitating_2020}
K.~Tucker, D.~Barberena, R.~J. Lewis-Swan, J.~K. Thompson, J.~G. Restrepo, and A.~M. Rey, ``Facilitating spin squeezing generated by collective dynamics with single-particle decoherence,'' {\em Physical Review A}, vol.~102, p.~051701, Nov. 2020.

\bibitem{song_dissipation-induced_2025}
E.~Y. Song, D.~Barberena, D.~J. Young, E.~Chaparro, A.~Chu, S.~Agarwal, Z.~Niu, J.~T. Young, A.~M. Rey, and J.~K. Thompson, ``A dissipation-induced superradiant transition in a strontium cavity-{QED} system,'' {\em Science Advances}, vol.~11, p.~eadu5799, Apr. 2025.

\bibitem{xiang_switching_2025}
Y.-X. Xiang, W.~Li, Z.~Bai, and Y.-Q. Ma, ``Switching {Dynamics} of {Metastable} {Open} {Quantum} {Systems},'' May 2025.
\newblock arXiv:2505.05202 [quant-ph].

\bibitem{leppenen_quantum_2024}
N.~Leppenen and E.~Shahmoon, ``Quantum bistability at the interplay between collective and individual decay,'' Apr. 2024.
\newblock arXiv:2404.02134 [quant-ph] version: 1.

\bibitem{hinton2015distilling}
G.~Hinton, O.~Vinyals, and J.~Dean, ``Distilling the knowledge in a neural network,'' {\em arXiv preprint arXiv:1503.02531}, 2015.

\bibitem{savona2017spontaneous}
V.~Savona, ``Spontaneous symmetry breaking in a quadratically driven nonlinear photonic lattice,'' {\em Physical Review A}, vol.~96, no.~3, p.~033826, 2017.

\bibitem{lebreuilly2019stabilizing}
J.~Lebreuilly, C.~Aron, and C.~Mora, ``Stabilizing arrays of photonic cat states via spontaneous symmetry breaking,'' {\em Physical review letters}, vol.~122, no.~12, p.~120402, 2019.

\bibitem{rota2019quantum}
R.~Rota, F.~Minganti, C.~Ciuti, and V.~Savona, ``Quantum critical regime in a quadratically driven nonlinear photonic lattice,'' {\em Physical review letters}, vol.~122, no.~11, p.~110405, 2019.

\bibitem{schollwock_density-matrix_2011}
U.~Schollwöck, ``The density-matrix renormalization group in the age of matrix product states,'' {\em Annals of Physics}, vol.~326, pp.~96--192, Jan. 2011.

\bibitem{florido-llinas_product_2025}
M.~Florido-Llinàs, A.~M. Alhambra, R.~Trivedi, N.~Schuch, D.~Pérez-García, and J.~I. Cirac, ``The {Product} {Structure} of {Matrix} {Product} {States} under {Permutations},'' {\em PRX Quantum}, vol.~6, p.~040338, Nov. 2025.

\bibitem{raveh_dicke_2024}
D.~Raveh and R.~I. Nepomechie, ``Dicke states as matrix product states,'' {\em Physical Review A}, vol.~110, p.~052438, Nov. 2024.

\bibitem{gopalakrishnan_push-down_2025}
S.~Gopalakrishnan, ``Push-down automata as sequential generators of highly entangled states,'' {\em Journal of Physics A: Mathematical and Theoretical}, vol.~58, p.~055301, Feb. 2025.

\bibitem{hauschild_efficient_2018}
J.~Hauschild and F.~Pollmann, ``Efficient numerical simulations with {Tensor} {Networks}: {Tensor} {Network} {Python} ({TeNPy}),'' {\em SciPost Physics Lecture Notes}, p.~005, Oct. 2018.

\bibitem{Note2}
It is straightforward to extend to complex symmetric matrices via Takagi factorization.

\end{thebibliography}

\end{document}